\theoremstyle{remark}
\newtheorem{definition}{Definition}
\newtheorem{theorem}{Theorem}[section]
\newtheorem{lemma}{Lemma}[section]
\newtheorem{corollary}{Corollary}[section]
\newtheorem{assumption}{Assumption}
\newcommand{\R}{\ensuremath{\mathbb{R}}}
\def\diag{\text{diag}}
\def\0{\mathbf{0}}
\def\1{\mathds{1}}
\def\a{\mathbf{a}}
\def\A{\mathbf{A}}
\def\B{\mathbf{B}}
\def\C{\mathbf{C}}
\def\D{\mathbf{D}}
\def\e{\mathbf{e}}
\def\ep{\boldsymbol{\epsilon}}
\def\E{\mathcal{E}}
\def\Ep{\mathbb{E}}
\def\Ebf{\mathbf{E}}
\def\f{\mathbf{f}}
\def\Fbf{\mathbf{F}}
\def\G{\mathcal{G}}
\def\Gbf{\mathbf{G}}
\def\H{\mathbf{H}}
\def\h{\mathbf{h}}
\def\I{\mathbf{I}}
\def\J{\mathbf{J}}
\def\Kbf{\mathbf{K}}
\def\Lbf{\mathbf{L}}
\def\Mbf{\mathbf{M}}
\def\Obf{\mathcal{O}}
\def\P{\mathbf{P}}
\def\rbf{\mathbf{r}}
\def\s{\mathbf{s}}
\def\S{\mathcal{S}}
\def\T{\mathcal{T}}
\def\Ubf{\mathbf{U}}
\def\u{\mathbf{u}}
\def\V{\mathcal{V}}
\def\var{\text{var}}
\def\y{\mathbf{y}}
\def\x{\mathbf{x}}
\def\X{\mathbf{X}}
\def\Y{\mathbf{Y}}
\def\z{\mathbf{z}}
\def\one{{\bf 1}}
\DeclareMathOperator{\trace}{trace}
\newcommand{\twonorm}[1]{\left\| #1 \right\|}
\begin{document}

%
\title{Coding Method for Parallel Iterative Linear Solver}
%
%
%

\author{Yaoqing Yang,~\IEEEmembership{Student Member,~IEEE,}
        Pulkit~Grover,~\IEEEmembership{Senior Member,~IEEE,}
        and~Soummya~Kar
\thanks{Y. Yang, P. Grover and S. Kar are with the Department of Electrical and Computer Engineering, Carnegie Mellon University, Pittsburgh,
PA, 15213.} 
}

%
%

\markboth{}%
{Shell \MakeLowercase{\textit{et al.}}: Bare Demo of IEEEtran.cls for Journals}
%



\maketitle

\begin{abstract}
Computationally intensive distributed and parallel computing is often bottlenecked by a small set of slow workers known as stragglers. In this paper, we utilize the  emerging idea of ``coded computation'' to design a novel error-correcting-code inspired technique for solving linear inverse problems under specific iterative methods in a parallelized implementation affected by stragglers. Example applications include inverse problems in machine learning on graphs, such as personalized PageRank and sampling on graphs. We provably show that our coded-computation technique can reduce the mean-squared error under a computational deadline constraint. In fact, the ratio of mean-squared error of replication-based and coded techniques diverges to infinity as the deadline increases. Our experiments for personalized PageRank performed on real systems and real social networks show that this ratio can be as large as $10^4$. Further, unlike coded-computation techniques proposed thus far, our strategy combines outputs of all workers, including the stragglers, to produce more accurate estimates at the computational deadline. This also ensures that the accuracy degrades ``gracefully'' in the event that the number of stragglers is large. 
\end{abstract}

\begin{IEEEkeywords}
Distributed computing, error-correcting codes, straggler effect, iterative linear inverse, personalized PageRank
\end{IEEEkeywords}

\section{Introduction}\label{sec:Intro}

Although modern distributed computing systems have developed techniques for maintaining fault tolerance, the performance of such computing systems is often bottlenecked by a small number of slow workers. This ``straggling'' effect of the slow workers \cite{dean2013tail,joshi2014delay,wang2014efficient,wang2015using} is often addressed by replicating tasks across workers and using this redundancy to ignore some of the stragglers. Recently, concepts from error-correcting codes have been used for speeding up distributed computing  \cite{huang2012codes,lee2016speeding,tandon2016gradient,dutta2016short,ferdinand2016anytime,attia2016worst,li2016unified,reisizadehmobarakeh2017coded,li2017coding,yu2017polynomial,lee2017mds,lee2017high,lee2017coded,azian2017consensus}, which build on classical works on algorithm-based fault-tolerance \cite{huang1984algorithm}. The key idea is to treat slow workers as ``erasures'' and use error-correcting codes to retrieve the result after a subset of fast workers have finished. In some cases, (e.g.~\cite{lee2016speeding,dutta2016short} for matrix multiplications), coding techniques (\textit{i.e.}, techniques that utilize error-correcting codes) achieve scaling-sense speedups in average computation time compared to replication. In this work, we propose a novel coding-inspired technique to deal with the straggler effect in distributed computing of linear inverse problems using iterative solvers \cite{saad2003iterative}, such as personalized PageRank \cite{page1999pagerank,haveliwala2002topic} and signal recovery on large graphs \cite{wang2015local,narang2013localized,chen2015discrete}. We focus on iterative methods for solving these linear systems. For the personalized PageRank problem, we study the power-iteration method which is the most classical PageRank algorithm \cite{page1999pagerank}. For the problem of signal recovery on graphs, we study linear iterative algorithms such as the projected gradient descent algorithm \cite{wang2015local,narang2013localized,chen2015discrete}.

Existing algorithms that use coding to deal with stragglers treat straggling workers as ¡°erasures¡±, that is, they ignore computation results of the stragglers. Interestingly, in the iterative solvers for linear inverse problems, even if the computation result at a straggler has not converged, the proposed algorithm does not ignore the result, but instead combines it (with appropriate weights) with results from other workers. This is in part because the result of the iterative method converges gradually to the true solution. We use a small example shown in Fig.~\ref{fig:old_technique} to illustrate this idea. Suppose we want to solve two linear inverse problems with solutions $\x_1^*$ and $\x_2^*$. We ``encode the computation'' by adding an extra linear inverse problem, the solution of which is $\x_1^*+\x_2^*$ (see Section~\ref{sec:alg}), and distribute these three problems to three workers. Using this method, the solutions $\x_1^*$ and $\x_2^*$ can be obtained from the results of any combination of two fast workers that are first to come close to their solutions.

But what if we have a computational deadline, $T_{dl}$, by which only one worker converges to its solution? In that case, the natural extension of existing coded-computation strategies (e.g., \cite{lee2016speeding}) will declare a computation failure because it needs at least two workers to respond. However, our strategy  does not require convergence: even intermediate results from workers, as long as they are received, can be utilized to estimate solutions. In other words, our strategy degrades gracefully as the number of stragglers increases, or as the deadline is pulled earlier. Indeed, we will show that it is suboptimal to ignore stragglers as erasures, and will design strategies that treat the difference from the optimal solution as ``soft'' additive noise (see Fig.~\ref{fig:lee}, and Section \ref{sec:alg}). We use an algorithm that is similar to weighted least squares algorithm for decoding, giving each worker a weight based on its proximity to convergence.
In this way, we can expect to fully utilize the computation results from all workers and obtain better speedup. 

\begin{figure}
  \centering
  \includegraphics[scale=0.2]{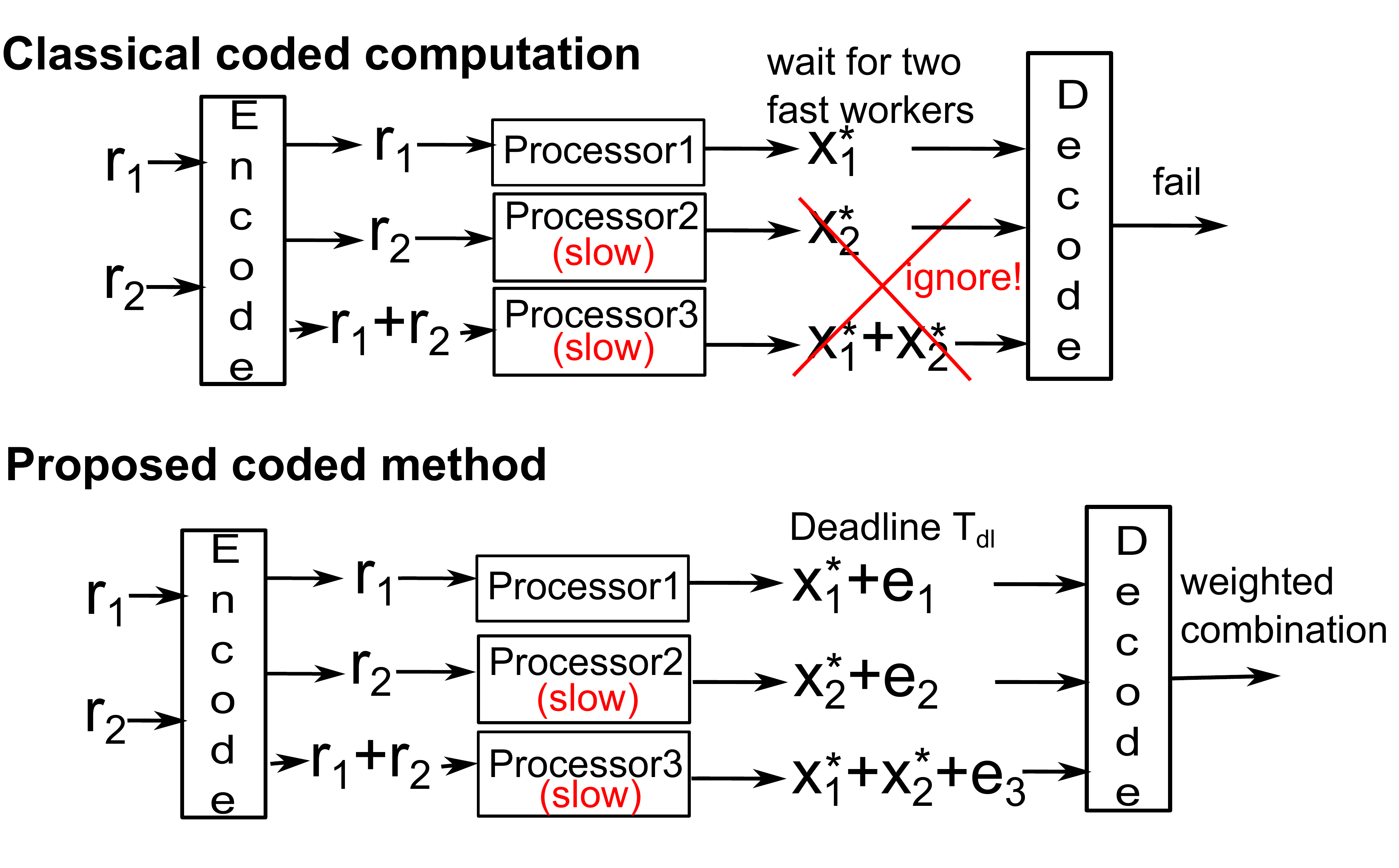}\\
  \caption{A toy example of the comparison between the existing scheme in \cite{lee2016speeding} and the proposed algorithm.\vspace{-5mm}} \label{fig:old_technique}
\end{figure}
Theoretically, we show that for a specified deadline time $T_\text{dl}$, under certain conditions on worker speed, the coded linear inverse solver using structured codes has smaller mean squared error than the replication-based linear solver (Theorem~\ref{thm:beat_replication}). In fact, when the computation time $T_\text{dl}$ increases, the ratio of the mean-squared error (MSE) of replication-based and coded linear solvers can get arbitrarily large (Theorem~\ref{thm:TtoInf})!

For validation of our theory, we performed experiments to compare the performance of coded and replication-based personalized PageRank (respectively using coded and replication-based power-iteration method) on the Twitter and Google Plus social networks under a deadline on computation time using a given number of workers on a real computation cluster (Section~\ref{sec:experimenl}). We observe that the MSE of coded PageRank is smaller than that of replication by a factor of $10^4$ at $T_{dl}= 2$ seconds. From an intuitive perspective, the advantage of coding over replication is that coding utilizes the diversity of all heterogeneous workers. To compare with existing coded technique in \cite{lee2016speeding}, we adapt its technique to inverse problems by inverting only the partial results from the fast workers. However, from our experiments, if only the results from the fast workers are used, the error amplifies due to inverting an ill-conditioned submatrix during decoding (Section~\ref{sec:experimenl}). This ill-conditioning issue of real-number erasure codes has also been recognized in a recent communication problem \cite{haikin2016analog}. In contrast, our novel way of combining all the partial results including those from the stragglers helps bypass the difficulty of inverting an ill-conditioned matrix.

The focus of this work is on utilizing computations to deliver the minimal MSE in solving linear inverse problems. Our algorithm does not reduce the communication cost. However, because each worker performs sophisticated iterative computations, such as the power-iteration computations in our problem, the time required for computation dominates that of communication (see Section~\ref{sec:com_vs_comp}).
This is unlike some recent works (e.g.\cite{dimakis2010network,sathiamoorthy2013xoring,maddah2015decentralized,li2015coded,li2016unified,tandon2016gradient}) where communication costs are observed to dominate. However, in these works, the computational tasks at each worker are simpler, and sometimes the communication cost is large because of data-shuffling (e.g. in MapReduce \cite{dean2008mapreduce}).

Finally, we summarize our main contributions in this paper:
\begin{itemize}
\item We propose a coded algorithm for distributed computation for multiple instances of a linear inverse problem;
\item We theoretically analyze the mean-squared error of coded, uncoded and replication-based iterative linear solvers under a deadline constraint;
\item We compare mean-squared error of coded linear solver with uncoded and replication-based linear solver and show scaling sense advantage in theory and orders of magnitude smaller error in data experiments;
\item This is the first work that treats stragglers as soft errors instead of erasures, which leads to graceful degradation in the event that the number of stragglers is large.
\end{itemize}

\section{System Model and Problem Formulation}\label{sec:system_model}
In this section, we present the model of parallel linear systems that we will apply the idea of error correcting codes. Then, we provide two applications that can be directly formulated in the form of parallel linear systems.

\subsection{Preliminaries on Solving Parallel Linear Systems using Iterative Methods}\label{sec:par_sys}
Consider the problem of solving $k$ inverse problems with the same linear transform matrix and different inputs $\rbf_i$:
\begin{equation}\label{eqn:general_inverse}
\Mbf\x_i=\rbf_i,i=1,2,\ldots k.
\end{equation}
When $\Mbf$ is a square matrix, the closed-form solution is
\begin{equation}\label{eqn:general_inverse_solution}
\x_i=\Mbf^{-1}\rbf_i.
\end{equation}
When $\Mbf$ is a non-square matrix, the solution to \eqref{eqn:general_inverse} is interpreted as
\begin{equation}\label{eqn:general_l2}
\x_i=\arg\min\twonorm{\Mbf\x-\rbf_i}^2+\lambda\twonorm{\x}^2, i=1,2,\ldots k,
\end{equation}
with an appropriate regularization parameter $\lambda$. The closed-form solution of \eqref{eqn:general_l2} is
\begin{equation}\label{eqn:general_l2_solution}
\x_i=(\Mbf^\top\Mbf+\lambda \I)^{-1}\Mbf^\top\rbf_i.
\end{equation}
Computing matrix inverse in \eqref{eqn:general_inverse_solution} or \eqref{eqn:general_l2_solution} directly is often hard and commonly used methods are often iterative. In this paper, we study two ordinary iterative methods, namely the Jacobian method for solving \eqref{eqn:general_inverse} and the gradient descent method for solving \eqref{eqn:general_l2}.
\subsubsection{The Jacobian Method for Square System}\label{sec:Jacobian}
For a square matrix $\Mbf$, one can decompose $\Mbf=\D+\Lbf$, where $\D$ is diagonal. Then, the Jacobian iteration is written as
\begin{equation}\label{eqn:Jacobian}
\x_i^{(l+1)}=\D^{-1}(\rbf_i-\Lbf \x_i^{(l)}).
\end{equation}
Under certain conditions of $\D$ and $\Lbf$ (see \cite[p.115]{saad2003iterative}), the computation result converges to the true solution.
\subsubsection{Gradient Descent for Non-square System}\label{sec:gradient_descent}
For the $\ell$-2 minimization problem \eqref{eqn:general_l2}, the gradient descent solution has the form
\begin{equation}
\x_i^{(l+1)}=((1-\lambda)\I- \epsilon\Mbf^\top\Mbf)\x_i^{(l)}+\epsilon\Mbf^\top\rbf_i,
\end{equation}
where $\epsilon$ is an appropriate step-size.

From these two problem formulations, we can see that the iterative methods have the same form
\begin{equation}\label{eqn:power_iter}
\x_i^{(l+1)}=\B\x_i^{(l)}+\Kbf\rbf_i,i=1,2,\ldots k,
\end{equation}
for two appropriate matrices $\B$ and $\Kbf$. Denote by $\x_i^*$ the solution to \eqref{eqn:general_inverse} or \eqref{eqn:general_l2}. Then,
\begin{equation}\label{eqn:fixed_poinl}
\x_i^*=\B\x_i^*+\Kbf\rbf_i,i=1,2,\ldots k.
\end{equation}
Then, from \eqref{eqn:fixed_poinl} and \eqref{eqn:power_iter}, the computation error $\e_i^{(l)}=\x_i^{(l)}-\x_i^*$ satisfies
\begin{equation}\label{eqn:error_iter}
\e_i^{(l+1)}=\B\e_i^{(l)}.
\end{equation}

\subsection{Distributed Computing and the Straggler Effect}
Consider solving $k$ linear inverse problems in $k$ parallel workers using the iterative method \eqref{eqn:power_iter}, such that each processor solves one problem. Due to the straggler effect, the computation of the linear inverse problem on different workers can have different computation speeds. Suppose after the deadline time $T_\text{dl}$, the $i$-th worker has completed $l_i$ iterations in \eqref{eqn:power_iter}. Then, the residual error at the $i$-th worker is
\begin{equation}\label{eqn:error_iter_A}
\e_i^{(l_i)}=\B^{l_i}\e_i^{(0)}.
\end{equation}

For our theoretical results, we sometimes need the following assumption.

\begin{assumption}\label{ass:general}
We assume that the optimal solutions $\x_i^*,i=1,2,\ldots k$ are i.i.d.
\end{assumption}
Denote by $\boldsymbol{\mu}_E$ and $\C_E$ respectively the mean and the covariance of each $\x_i^*$. Assume we start with the initial estimate $\x_i^{(0)}=\boldsymbol{\mu}_E$, which can be estimated from data. Then, $\e_i^{(0)}=\x_i^{(0)}-\x_i^*$ has mean $\0_N$ and covariance $\C_E$.

Note that Assumption~\ref{ass:general} is equivalent to the assumption that the inputs $\rbf_i,i=1,2,\ldots k$ are i.i.d., because the input and the true solution for a linear inverse problem are related by a linear transform. We provide an extension of this i.i.d. assumption in Section~\ref{sec:not_iid}.

\subsection{Two Motivating Examples}\label{sec:PG_and_straggler}

Now we study two examples of the general linear inverse problems. Both of them related to data processing of graphs.

\subsubsection{PageRank as a Square System}
For a directed graph $\G=(\V,\E)$ with node set $\V$ and edge set $\E$, the PageRank algorithm aims to measure the importance of the nodes in $\V$ by computing the stationary distribution of a discrete-time ``random walk with restart'' on the graph that mimics the behavior of a web surfer on the Internet. At each step, with probability $1-d$, the random walk chooses a random neighbor on the graph with uniform probability to proceed to (e.g. $d= 0.15$ in~\cite{page1999pagerank}). With probability $d$, it jumps to an arbitrary node in the graph. The probability $d$ is often called the ``teleport probability''. From \cite{page1999pagerank}, the problem of computing the stationary distribution is equivalent to solving the following linear problem
\begin{equation}\label{eqn:pg}
\x=\frac{d}{N}\one_N+(1-d)\A\x,
\end{equation}
where $N$ is the number of nodes and $\mathbf{A}$ is the column-normalized adjacency matrix, i.e., for a directed edge $v_i\to v_j$, $\A_{ij}=\frac{1}{\text{deg}(v_j)}$, where $\text{deg}(v_j)$ is the in-degree of $v_j$.

The personalized PageRank problem \cite{haveliwala2002topic} considers a more general linear equation
\begin{equation}\label{eqn:personalized_pg_0}
\x=d \rbf+(1-d)\mathbf{A}\x,
\end{equation}
for any possible vector $\rbf\in \R^N$ that satisfies $\one^\top\rbf=1$. Compared to the original PageRank problem \cite{page1999pagerank},  personalized PageRank \cite{haveliwala2002topic} utilizes both the structure of the graph \textit{and} the personal preferences of different users. The solution $\x$ is also the limiting distribution of a random walk, but with different restarting distribution. That is, with probability $d$, instead of jumping to each node with uniform probability, the random walk jumps to different nodes according to distribution $\rbf$. Intuitively, difference in the vector $\rbf$ represents different preferences of web surfers.

A classical method to solve PageRank is power-iteration, which iterates the following computation until convergence (usually with initial condition $\x_i^{(0)}=\frac{1}{N}\one_N$):
\begin{equation}\label{eqn:power_iter_general}
\x^{(l+1)}=d \rbf+(1-d)\mathbf{A}\x^{(l)}.
\end{equation}
One can see that the power-iteration method is exactly the same as the Jacobian iteration \eqref{eqn:Jacobian}.

\subsubsection{``Graph Signal Reconstruction'' as a Non-Square System}

The emerging field of signal processing on graphs \cite{shuman2013emerging,sandryhaila2013discrete} is based on a neat idea of treating ``values'' associated with the nodes of a graph as ``a signal supported on a graph'' and apply techniques from signal processing to solve problems such as prediction and detection. For example, the number of cars at different road intersections on the Manhattan road network can be viewed as a graph signal on the road graph $\G=(\V,\E)$ where $\V$ is the set of road intersections and $\E$ is the set of road segments. For a directed or undirected graph $\G=(\V,\E)$, the graph signal has the same dimension as the number of nodes $|\V|$ in the graph, i.e., there is only one value associated with each node.

One important problem of graph signal processing is that of recovering the values on the remaining nodes of a graph given the values sampled on a particular node subset $\S\subset \V$ under the assumption of ``bandlimited'' graph signal \cite{wang2015local,narang2013localized,chen2015discrete}. One application of graph signal reconstruction can be that of reconstructing the entire traffic flow using the observations from a few cameras at some road intersections \cite{ChenYFK_GlobalSIP_16}. The graph signal reconstruction problem can be formulated as a least-square solution of the following linear system (see equation (5) in \cite{narang2013localized})
\begin{equation}
\left[\begin{matrix}
\f(\S)\\
\f(\S^c)
\end{matrix}\right]=
\left[\begin{matrix}
\u_1(\S) &\u_2(\S)&\ldots &\u_w(\S)\\
\u_1(\S^c)&\u_2(\S^c)&\ldots &\u_w(\S^c)
\end{matrix}\right]
\left[\begin{matrix}
\alpha_1\\
\alpha_2\\
\vdots\\
\alpha_w.
\end{matrix}\right],
\end{equation}
where $\f(\S)$ is the given part of the graph signal $\f$ on the set $\S$, $\f(\S^c)$ is the unknown part of the graph signal $\f$ to be reconstructed, $[\alpha_1,\alpha_2,\ldots \alpha_w]$ are unknown coefficients of the graph signal $\f=\left[\begin{matrix}
\f(\S)\\
\f(\S^c)
\end{matrix}\right]$ represented in the subspace spanned by $\u_1,\u_2,\ldots \u_w$ which are the first $w$ eigenvectors of the graph Laplacian matrix. It is shown in \cite{narang2013localized} that this least-square reconstruction problem can be solved using an iterative linear projection method (see equation (11) in \cite{narang2013localized})
\begin{equation}
\f_{k+1}=\Ubf\Ubf^\top \left(\f_k+\J^\top \J(\left[\begin{matrix}
\f(\S)\\
\0
\end{matrix}\right]-\f_k)\right),
\end{equation}
where $\Ubf=[\u_1,\u_2,\ldots \u_w]$. This iteration can be shown to converge to the least square solution given by
\begin{equation}
\f(\S^c)=(\Ubf)_{\S^c}((\Ubf)_{\S^c}^\top (\Ubf)_{\S^c})^{-1}(\Ubf)_{\S^c}^t\f(\S).
\end{equation}
Note that we may want to reconstruct multiple instances of graph signal, such as road traffic at different time, which brings in the formulation of distributed computing.

\subsection{Preliminaries on Error Correcting Codes}\label{sec:ecc}

We borrow concepts and terminology from error-correcting codes \cite{peterson1972error}. E.g., we will use ``encode'' and ``decode'' to denote preprocessing and post-processing before and after parallel computation. Although classically encoding and (specially) decoding can be nonlinear operations, in this paper, the encoder multiplies the inputs to the parallel workers with a ``generator matrix'' $\Gbf$ and the decoder multiplies the outputs of the workers with a ``decoding matrix'' $\Lbf$. We call a code an ($n,k$) code if the generator matrix has size $k\times n$.

In this paper, we often use generator matrices $\Gbf$ with orthonormal rows, which means
\begin{equation}
\Gbf_{k\times n}\Gbf_{n\times k}^\top=\I_k.
\end{equation}
An example of such a matrix is the submatrix formed by any $k$ rows of an $n\times n$ orthonormal matrix (e.g., a Fourier matrix). Under this assumption, $\Gbf_{k\times n}$ can be augmented to form an $n\times n$ orthonormal matrix using another matrix $\H_{(n-k)\times n}$, i.e. the square matrix $\Fbf_{n\times n}=\left[\begin{matrix}
\Gbf_{k\times n}\\
\H_{(n-k)\times n}
\end{matrix}\right]$ satisfies $\Fbf^\top \Fbf=I_n$.
This structure assumption is not necessary when we compare the error exponents of different linear inverse algorithms when the computation time $T_\text{dl}$ goes to infinity (e.g., coded and uncoded). However, it is useful when we present theorems on finite $T_\text{dl}$.

\section{Coded Distributed Computing of Linear Inverse Problems}
\subsection{The Coded Linear Inverse Algorithm}\label{sec:alg}

\begin{algorithm}[!h]
   \caption{Coded Distributed Linear Inverse}\label{alg:pg}
\begin{algorithmic}
   \STATE {\bfseries Input:} Input vectors $[\rbf_1,\rbf_2,\ldots,\rbf_k]$, generator matrix $\Gbf_{k\times n}$, the linear system matrices $\B$ and $\Kbf$ defined in \eqref{eqn:power_iter}.
   \STATE {\bfseries Initialize (Encoding):} Encode the input vectors $[\rbf_1,\rbf_2,\ldots,\rbf_k]$ and the initial estimates by multiplying with the generator matrix $\Gbf$:
\begin{equation}\label{eqn:code_r}
[\s_1,\s_2,\ldots,\s_n]=[\rbf_1,\rbf_2,\ldots,\rbf_k]\cdot\Gbf.
\end{equation}
\begin{equation}\label{eqn:code_x0}
[\y_1^{(0)},\y_2^{(0)},\ldots,\y_n^{(0)}]=[\x_1^{(0)},\x_2^{(0)},\ldots,\x_k^{(0)}]\cdot\Gbf.
\end{equation}
	\STATE {\bfseries Parallel Computing:}
   \FOR{$i=1$ {\bfseries to} $n$}
   \STATE Send $\s_i$ and $\y_i^{(0)}$ to the $i$-th worker. Compute the solution of \eqref{eqn:general_inverse} or \eqref{eqn:general_l2} using the specified iterative method \eqref{eqn:power_iter} with initial estimate $\y_i^{(0)}$ at each worker in parallel until a deadline $T_\text{dl}$.
   \ENDFOR
   \STATE After $T_\text{dl}$, collect all linear inverse results $\y_i^{(l_i)}$ from these $n$ workers. The superscript $l_i$ in $\y_i^{(l_i)}$ represents that the $i$-th worker finished $l_i$ iterations within time $T_\text{dl}$. Denote by $\Y^{(T_\text{dl})}$ the collection of all results
   \begin{equation}\label{eqn:pg_results}
\Y^{(T_\text{dl})}_{N\times n}=[\y_1^{(l_1)},\y_2^{(l_2)},\ldots,\y_n^{(l_n)}].
\end{equation}

\STATE {\bfseries Post Processing (Decoding):}
\STATE Compute an estimate of the linear inverse solutions using the following matrix multiplication:
\begin{equation}\label{eqn:weighted_least_square}
\hat\X^\top=\Lbf\cdot(\Y^{(T_\text{dl})})^\top:=(\Gbf \boldsymbol{\Lambda}^{-1} \Gbf^\top)^{-1}\Gbf\boldsymbol{\Lambda}^{-1}(\Y^{(T_\text{dl})})^\top,
\end{equation}
\textcolor{black}{where the estimate} $\hat\X_{N\times k}=[\hat\x_1,\hat\x_2,\ldots,\hat\x_k]$, the matrix $\boldsymbol{\Lambda}$ is
\begin{equation}\label{eqn:Gamma}
\boldsymbol{\Lambda}=\diag\left[\trace(\C(l_1)),\ldots,\trace(\C(l_n))\right],
\end{equation}
where the matrices $\C(l_i),i=1,\ldots,n$ are defined as
\begin{equation}\label{eqn:CTI}
\C(l_i)=\B^{l_i}\C_E(\B^\top)^{l_i}.
\end{equation}
In computation of $\boldsymbol{\Lambda}$, if $\trace(\C(l_i))$ are not available, one can use estimates of this trace as discussed in Section~\ref{sec:Lambda}.
\end{algorithmic}
\end{algorithm}

The proposed coded linear inverse algorithm is shown in Algorithm \ref{alg:pg}. The algorithm has three stages: preprocessing (encoding) at the central controller, parallel computing at $n$ parallel workers, and post-processing (decoding) at the central controller. As we show later in the analysis of computing error, the entries in the diagonal matrix $\boldsymbol{\Lambda}$ are the expected mean-squared error at each worker prior to decoding. The decoding matrix $\Lbf_{k\times n}$ in the decoding step \eqref{eqn:weighted_least_square} is chosen to be $(\Gbf \boldsymbol{\Lambda}^{-1} \Gbf^\top)^{-1}\Gbf\boldsymbol{\Lambda}^{-1}$ to reduce the mean-squared error of the estimates of linear inverse solutions by assigning different weights to different workers based on the estimated accuracy of their computation (which is what $\boldsymbol{\Lambda}$ provides).

\subsection{\textcolor{black}{Bounds on Performance of the Coded Linear Inverse Algorithm}}\label{sec:finite_time_analysis}
Define $\mathbf{l}=[l_1,l_2,\ldots l_n]$ as the vector of the number of iterations at all workers. We use the notation $\Ep[\cdot|\mathbf{l}]$ to denote the conditional expectation taken with respect to the randomness of the optimal solution $\x_i^*$ (see Assumption \ref{ass:general}) but conditioned on fixed iteration number $l_i$ at each worker, i.e., for a random variable $X$,
\begin{equation}
\Ep[X|\mathbf{l}]=\Ep[X|l_1,l_2,\ldots l_n].
\end{equation}

\begin{theorem}\label{thm:upper_bound}
Define $\Ebf=\hat\X-\X^*$, i.e., the error of the decoding result \eqref{eqn:weighted_least_square}. Assuming that the solutions for each linear inverse problem are chosen i.i.d. (across all problems) according to a distribution with covariance $\C_E$. Then, the error covariance of $\Ebf$ satisfies
\begin{equation}\label{eqn:error_ub}
\Ep[\twonorm\Ebf^2|\mathbf{l}]\le\sigma_\text{max}(\Gbf^\top\Gbf)\trace\left[(\Gbf \boldsymbol{\Lambda}^{-1} \Gbf^\top)^{-1}\right],
\end{equation}
where the norm $\twonorm\cdot$ is the Frobenius norm, $\sigma_\text{max}(\Gbf^\top\Gbf)$ is the maximum eigenvalue of $\Gbf^\top\Gbf$ and the matrix $\boldsymbol{\Lambda}$ is defined in \eqref{eqn:Gamma}. Further, when $\Gbf$ has orthonormal rows,
\begin{equation}\label{eqn:error_ub_Gorth}
\Ep[\twonorm\Ebf^2|\mathbf{l}]\le\trace\left[(\Gbf \boldsymbol{\Lambda}^{-1} \Gbf^\top)^{-1}\right],
\end{equation}
\end{theorem}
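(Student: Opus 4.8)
The plan is to reduce the decoding error to a quadratic form in the matrix of per-worker residual errors, compute its conditional second moment in closed form, and then control it with a Hadamard-product (Schur) inequality.

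First I would set up the error representation. By linearity of the iteration \eqref{eqn:power_iter}, the encoded problem sent to worker $i$ has true solution $\y_i^*=\sum_a G_{ai}\x_a^*$, where $G_{ai}$ denotes the $(a,i)$ entry of $\Gbf$; stacking these gives $\Y^*:=[\y_1^*,\ldots,\y_n^*]=\X^*\Gbf$. Writing $\mathbf{N}:=\Y^{(T_\text{dl})}-\Y^*$ for the matrix whose $i$-th column $\mathbf{N}_i=\y_i^{(l_i)}-\y_i^*$ is the residual error at worker $i$, the decoder \eqref{eqn:weighted_least_square} reads $\hat\X=\Y^{(T_\text{dl})}\Lbf^\top=(\Y^*+\mathbf{N})\Lbf^\top$. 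The key algebraic fact is the left-inverse identity $\Gbf\Lbf^\top=\Gbf\boldsymbol{\Lambda}^{-1}\Gbf^\top(\Gbf\boldsymbol{\Lambda}^{-1}\Gbf^\top)^{-1}=\I_k$, whence $\Y^*\Lbf^\top=\X^*\Gbf\Lbf^\top=\X^*$ and therefore $\Ebf=\hat\X-\X^*=\mathbf{N}\Lbf^\top$. Taking the Frobenius norm and conditioning on $\mathbf{l}$,
\[
\Ep[\twonorm\Ebf^2|\,\mathbf{l}]=\trace(\Lbf\,\Ep[\mathbf{N}^\top\mathbf{N}|\,\mathbf{l}]\,\Lbf^\top).
\]

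Second I would compute the conditional covariance $\Ep[\mathbf{N}^\top\mathbf{N}|\,\mathbf{l}]$ exactly. By the error-propagation rule \eqref{eqn:error_iter_A}, $\mathbf{N}_i=\B^{l_i}\boldsymbol{\delta}_i$ with $\boldsymbol{\delta}_i=\sum_a G_{ai}\e_a^{(0)}$. The i.i.d.\ assumption makes $\Ep[\e_a^{(0)}(\e_b^{(0)})^\top]$ equal $\C_E$ when $a=b$ and $\0$ otherwise, so $\Ep[\boldsymbol{\delta}_i\boldsymbol{\delta}_j^\top]=(\Gbf^\top\Gbf)_{ij}\C_E$, and hence
\[
\Ep[\mathbf{N}_i^\top\mathbf{N}_j|\,\mathbf{l}]=(\Gbf^\top\Gbf)_{ij}\,\trace\big((\B^{l_i})^\top\B^{l_j}\C_E\big).
\]
Writing $\mathbf{T}_{ij}:=\trace((\B^{l_i})^\top\B^{l_j}\C_E)$, this says $\Ep[\mathbf{N}^\top\mathbf{N}|\,\mathbf{l}]=(\Gbf^\top\Gbf)\circ\mathbf{T}$, a Hadamard product, and by \eqref{eqn:CTI} the diagonal of $\mathbf{T}$ is exactly $\boldsymbol{\Lambda}$ since $\mathbf{T}_{ii}=\trace(\C(l_i))$.

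Third, and this is the crux, I would bound the Hadamard product in the positive-semidefinite order. Factoring $\C_E=\C_E^{1/2}\C_E^{1/2}$ gives $\mathbf{T}_{ij}=\langle\B^{l_i}\C_E^{1/2},\B^{l_j}\C_E^{1/2}\rangle_F$, so $\mathbf{T}$ is a Gram matrix and hence positive semidefinite, as is $\Gbf^\top\Gbf$. The inequality I need is the Schur-type bound: for positive semidefinite $\mathbf{P},\mathbf{Q}$, one has $\mathbf{P}\circ\mathbf{Q}\preceq\lambda_{\max}(\mathbf{P})\,\diag(\mathbf{Q})$. I would prove it in one line from the rank-one expansion $\mathbf{Q}=\sum_k\mathbf{q}_k\mathbf{q}_k^\top$: since $\mathbf{P}\circ(\mathbf{q}_k\mathbf{q}_k^\top)=\diag(\mathbf{q}_k)\,\mathbf{P}\,\diag(\mathbf{q}_k)\preceq\lambda_{\max}(\mathbf{P})\,\diag(\mathbf{q}_k)^2$, summing over $k$ and using $\sum_k\diag(\mathbf{q}_k)^2=\diag(\mathbf{Q})$ gives the claim. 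Applying it with $\mathbf{P}=\Gbf^\top\Gbf$ and $\mathbf{Q}=\mathbf{T}$ yields $(\Gbf^\top\Gbf)\circ\mathbf{T}\preceq\sigma_\text{max}(\Gbf^\top\Gbf)\,\boldsymbol{\Lambda}$, where $\sigma_\text{max}(\Gbf^\top\Gbf)=\lambda_{\max}(\Gbf^\top\Gbf)$ as in the theorem statement.

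Finally I would assemble the bound. Since $\mathbf{M}\mapsto\trace(\Lbf\mathbf{M}\Lbf^\top)$ is monotone on the positive-semidefinite cone (conjugation and trace preserve $\succeq$), the last display gives $\Ep[\twonorm\Ebf^2|\,\mathbf{l}]\le\sigma_\text{max}(\Gbf^\top\Gbf)\trace(\Lbf\boldsymbol{\Lambda}\Lbf^\top)$, and a direct computation collapses
\[
\Lbf\boldsymbol{\Lambda}\Lbf^\top=(\Gbf\boldsymbol{\Lambda}^{-1}\Gbf^\top)^{-1}\Gbf\boldsymbol{\Lambda}^{-1}\Gbf^\top(\Gbf\boldsymbol{\Lambda}^{-1}\Gbf^\top)^{-1}=(\Gbf\boldsymbol{\Lambda}^{-1}\Gbf^\top)^{-1},
\]
which proves \eqref{eqn:error_ub}. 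For \eqref{eqn:error_ub_Gorth}, when $\Gbf$ has orthonormal rows the nonzero eigenvalues of $\Gbf^\top\Gbf$ coincide with those of $\Gbf\Gbf^\top=\I_k$, so $\sigma_\text{max}(\Gbf^\top\Gbf)=1$. I expect the only delicate step to be the Hadamard inequality of the third paragraph: one must recognize that $\mathbf{T}$ is positive semidefinite and that the sharp majorant is $\lambda_{\max}(\mathbf{P})\,\diag(\mathbf{Q})$ rather than the naive $\lambda_{\max}(\mathbf{P})\lambda_{\max}(\mathbf{Q})\,\I$, which would be too weak to reproduce $\boldsymbol{\Lambda}$ and hence the stated trace bound.
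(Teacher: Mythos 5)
Your proof is correct, and it reaches the paper's bound by a genuinely different route. The paper vectorizes the error matrix, writes the initial-error covariance as the Kronecker product $\Gbf^\top\Gbf\otimes\C_E$, applies $\Gbf^\top\Gbf\preceq\sigma_\text{max}(\Gbf^\top\Gbf)\I_n$ at that $nN\times nN$ level, and then needs a dedicated block-trace identity (Property~7 of its Kronecker lemma) to collapse the $N\times N$ blocks $\C(l_i)$ down to the scalars $\trace(\C(l_i))$ that populate $\boldsymbol{\Lambda}$. You instead collapse to an $n\times n$ object immediately: you identify $\Ep[\mathbf{N}^\top\mathbf{N}\mid\mathbf{l}]$ as the Hadamard product $(\Gbf^\top\Gbf)\circ\mathbf{T}$ with $\mathbf{T}$ a Gram matrix whose diagonal is $\boldsymbol{\Lambda}$, and invoke the Schur-type bound $\mathbf{P}\circ\mathbf{Q}\preceq\lambda_{\max}(\mathbf{P})\,\diag(\mathbf{Q})$, which you prove correctly from the rank-one expansion. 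The two estimates are the same inequality in different clothing --- your Hadamard bound is exactly the ``partial trace'' of the paper's Kronecker bound --- but your packaging is shorter, avoids the vectorization bookkeeping entirely, and makes explicit the off-diagonal information $\mathbf{T}_{ij}=\trace\bigl((\B^{l_i})^\top\B^{l_j}\C_E\bigr)$ that both proofs discard; this is precisely the structure the paper must reintroduce by hand (the matrix $\boldsymbol\Psi$) when it treats correlated inputs in Theorem~\ref{thm:upper_bound_stationary}, which is presumably why the authors set up the heavier Kronecker machinery in the first place. All your supporting steps check out: unbiasedness via $\Gbf\Lbf^\top=\I_k$, positive semidefiniteness of $\mathbf{T}$ from the Frobenius inner-product factorization, the collapse $\Lbf\boldsymbol{\Lambda}\Lbf^\top=(\Gbf\boldsymbol{\Lambda}^{-1}\Gbf^\top)^{-1}$, and $\sigma_\text{max}(\Gbf^\top\Gbf)=1$ for orthonormal rows.
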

\begin{proof}
See appendix Section~\ref{app:upper_bound} for a detailed proof. Here we provide the main intuition by analyzing a ``scalar version'' of the linear inverse problem, in which case the matrix $\B$ is equal to a scalar $a$. In appendix Section~\ref{app:upper_bound}, we make the generalization to all $\B$ matrices using matrix vectorization.

For $\B=a$, the inputs and the initial estimates in \eqref{eqn:code_r} and \eqref{eqn:code_x0} are vectors instead of matrices. As we show in appendix Section~\ref{app:upper_bound}, if we encode both the inputs and the initial estimates using \eqref{eqn:code_r} and \eqref{eqn:code_x0}, we also ``encode'' the error
\begin{equation}\label{eqn:up_der1}
[\epsilon_1^{(0)},\epsilon_2^{(0)},\ldots,\epsilon_n^{(0)}]=[e_1^{(0)},e_2^{(0)},\ldots,e_k^{(0)}]\cdot\Gbf=:\Ebf_0\Gbf,
\end{equation}
where $\epsilon_i^{(0)}=y_i^{(0)}-y_i^*$ is the initial error at the $i$-th worker, $e_i^{(0)}=x_i^{(0)}-x_i^*$ is the initial error of the $i$-th linear inverse problem, and $\Ebf_0:=[e_1^{(0)},e_2^{(0)},\ldots e_k^{(0)}] $. Suppose $\text{var}[e_i^{(0)}]=c_e$, which is a scalar version of $\C_E$ after Assumption~\ref{ass:general}. From \eqref{eqn:error_iter_A}, the error satisfies:
\begin{equation}\label{eqn:up_der2}
\epsilon_i^{(l_i)}=a^{l_i}\epsilon_i^{(0)},i=1,2,\ldots n.
\end{equation}
Denote by $\D=\diag\{a^{l_1},a^{l_2},\ldots a^{l_n}\}$. Therefore, from \eqref{eqn:up_der1} and \eqref{eqn:up_der2}, the error before the decoding step \eqref{eqn:weighted_least_square} can be written as
\begin{equation}
\begin{split}
[\epsilon_1^{(l_1)},\epsilon_2^{(l_2)},\ldots \epsilon_n^{(l_n)}]
=&[\epsilon_1^{(0)},\epsilon_2^{(0)},\ldots \epsilon_n^{(0)}]\cdot \D=\Ebf_0\Gbf\D.
\end{split}
\end{equation}
We can show (see appendix Section~\ref{app:upper_bound} for details) that after the decoding step \eqref{eqn:weighted_least_square}, the error vector is also multiplied by the decoding matrix $\Lbf=(\Gbf \boldsymbol{\Lambda}^{-1} \Gbf^\top)^{-1}\Gbf\boldsymbol{\Lambda}^{-1}$:
\begin{equation}
\begin{split}
\Ebf^\top&=\Lbf\left[\epsilon_1^{(l_1)},\epsilon_2^{(l_2)},\ldots \epsilon_n^{(l_n)}\right]^\top=\Lbf\D^\top \Gbf^\top\Ebf_0^\top.
\end{split}
\end{equation}
Thus,
\begin{equation}
\begin{split}
\Ep[\twonorm\Ebf^2|\mathbf{l}]
=&\Ep[\trace[\Ebf^\top\Ebf]|\textbf{l}]\\
=&\trace[\Lbf\D^\top \Gbf^\top\Ep[\Ebf_0^\top\Ebf_0|\textbf{l}] \Gbf\D\Lbf^\top]\\
\overset{(a)}{=}&\trace[\Lbf\D^\top \Gbf^\top c_e\I_k   \Gbf\D\Lbf^\top]\\
=&c_e\trace[\Lbf\D^\top \Gbf^\top\Gbf\D\Lbf^\top]\\
\overset{(b)}{\le}&c_e\sigma_\text{max}(\Gbf^\top\Gbf)\trace[\Lbf\D^\top \D\Lbf^\top]\\
=&\sigma_\text{max}(\Gbf^\top\Gbf)\trace[\Lbf (c_e\D^\top \D)\Lbf^\top]\\
\overset{(c)}{=} & \sigma_\text{max}(\Gbf^\top\Gbf)\trace[\Lbf\boldsymbol{\Lambda}\Lbf^\top]\\
\overset{(d)}{=} &\sigma_\text{max}(\Gbf^\top\Gbf)\trace[(\Gbf \boldsymbol{\Lambda}^{-1} \Gbf^\top)^{-1}],
\end{split}
\end{equation}
where (a) holds because $\Ebf_0:=[e_1^{(0)},e_2^{(0)},\ldots e_k^{(0)}] $ and $\text{var}[e_i^{(0)}]=c_e$, (b) holds because $\Gbf^\top\Gbf\preceq \sigma_\text{max}(\Gbf^\top\Gbf)\I_n$, (c) holds because $c_e\D^\top \D=\boldsymbol{\Lambda}$, which is from the fact that for a scalar linear system matrix $\B=a$, the entries in the $\boldsymbol{\Lambda}$ matrix in \eqref{eqn:Gamma} satisfy
\begin{equation}
\trace(\C(l_i))=a^{l_i}c_e(a^\top)^{l_i}=c_ea^{2l_i},
\end{equation}
which is the same as the entries in the diagonal matrix $c_e\D^\top\D$. Finally, (d) is obtained by directly plugging in $\H:=(\Gbf \boldsymbol{\Lambda}^{-1} \Gbf^\top)^{-1}\Gbf\boldsymbol{\Lambda}^{-1}$. Finally, inequality \ref{eqn:error_ub_Gorth} holds because when $\Gbf$ has orthonormal rows, $\sigma(\Gbf^\top\Gbf)=\sigma(\Gbf^\top\Gbf)=1$.
\end{proof}

\begin{corollary}\label{cor:upper_bound}
Suppose the i.i.d. Assumption \ref{ass:general} holds and the matrix $\Gbf_{k\times n}$ is a submatrix of an $n\times n$ orthonormal matrix, i.e. there exists a matrix $\Fbf_{n\times n}=\left[\begin{matrix}
   \Gbf_{k\times n}\\
   \H_{(n-k)\times n}
\end{matrix}\right]$ satisfies $\Fbf^\top \Fbf=\I_n$. Assume that the symmetric matrix $\Fbf \boldsymbol{\Lambda}\Fbf^\top$ has the block form
\begin{equation}\label{eqn:block_form}
\Fbf \boldsymbol{\Lambda}\Fbf^\top=   \left[
\begin{matrix}
   {\J_1} & {\J_2} \\
   {\J_2^\top} & {\J_4} \\
\end{matrix}  \\
\right]_{n\times n},
\end{equation}
that is, $(\J_1)_{k\times k}$ is $\Gbf \boldsymbol{\Lambda} \Gbf^\top$, $(\J_2)_{k\times (n-k)}$ is $\Gbf \boldsymbol{\Lambda} \H^\top$, and $(\J_4)_{(n-k)\times (n-k)}$ is $\H \boldsymbol{\Lambda} \H^\top$. Then, we have
\begin{equation}\label{eqn:error_ub2}
\Ep[\twonorm\Ebf^2|\mathbf{l}]\le\trace(\J_1)-\trace(\J_2\J_4^{-1}\J_2^\top).
\end{equation}
\end{corollary}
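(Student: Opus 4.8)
The plan is to reduce this corollary to the orthonormal-rows bound \eqref{eqn:error_ub_Gorth} already established in Theorem~\ref{thm:upper_bound}, by showing that its right-hand side $\trace[(\Gbf \boldsymbol{\Lambda}^{-1} \Gbf^\top)^{-1}]$ equals \emph{exactly} $\trace(\J_1)-\trace(\J_2\J_4^{-1}\J_2^\top)$. Since $\Gbf$ consists of the first $k$ rows of the orthonormal matrix $\Fbf$, it has orthonormal rows, so \eqref{eqn:error_ub_Gorth} applies verbatim and the only remaining task is to rewrite its right-hand side in terms of the blocks $\J_1,\J_2,\J_4$ from \eqref{eqn:block_form}.

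First I would exploit orthogonality of $\Fbf$. From $\Fbf^\top\Fbf=\I_n$ and the fact that $\Fbf$ is square, one also gets $\Fbf\Fbf^\top=\I_n$ and $\Fbf^{-1}=\Fbf^\top$. Because $\boldsymbol{\Lambda}$ is diagonal with strictly positive entries (each $\trace(\C(l_i))$ is the trace of a nondegenerate covariance, and $\boldsymbol{\Lambda}$ is already assumed invertible in the decoder \eqref{eqn:weighted_least_square}), this gives the clean identity $(\Fbf\boldsymbol{\Lambda}\Fbf^\top)^{-1}=\Fbf\boldsymbol{\Lambda}^{-1}\Fbf^\top$. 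Reading off the top-left $k\times k$ block of the right-hand side through the partition $\Fbf=\left[\begin{matrix}\Gbf\\ \H\end{matrix}\right]$ shows that this top-left block is precisely $\Gbf\boldsymbol{\Lambda}^{-1}\Gbf^\top$, the matrix appearing inside the trace in \eqref{eqn:error_ub_Gorth}.

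Next I would apply the block-matrix inversion (Schur complement) formula to $\Fbf\boldsymbol{\Lambda}\Fbf^\top$ in the block form \eqref{eqn:block_form}: the top-left block of its inverse is $(\J_1-\J_2\J_4^{-1}\J_2^\top)^{-1}$, the inverse of the Schur complement of $\J_4$. Matching the two expressions for the same top-left block yields $\Gbf\boldsymbol{\Lambda}^{-1}\Gbf^\top=(\J_1-\J_2\J_4^{-1}\J_2^\top)^{-1}$, hence $(\Gbf\boldsymbol{\Lambda}^{-1}\Gbf^\top)^{-1}=\J_1-\J_2\J_4^{-1}\J_2^\top$. Taking traces and chaining this equality with \eqref{eqn:error_ub_Gorth} produces \eqref{eqn:error_ub2}.

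The routine algebra here is negligible; the hard part will be to justify the hypothesis underlying the Schur-complement step, namely that $\J_4=\H\boldsymbol{\Lambda}\H^\top$ is invertible. Since $\boldsymbol{\Lambda}\succ\ZeroMatrix$ and $\H$ has full row rank $n-k$ (its rows are $n-k$ rows of the orthonormal $\Fbf$, hence orthonormal), $\J_4$ is positive definite and thus invertible, so the formula is legitimate; I would state this positivity explicitly. The only other point demanding care is getting the direction of the block-inverse identity correct---the top-left block of the inverse is the inverse of the Schur complement of the \emph{complementary} block $\J_4$, which is easy to transpose by mistake.
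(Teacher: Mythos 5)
Your proposal is correct and follows essentially the same route as the paper's own proof in the appendix: both use $\Fbf^\top\Fbf=\I_n$ to identify $\Gbf\boldsymbol{\Lambda}^{-1}\Gbf^\top$ with the top-left block of $(\Fbf\boldsymbol{\Lambda}\Fbf^\top)^{-1}$, invoke the Schur-complement form of the block inverse to get $(\Gbf\boldsymbol{\Lambda}^{-1}\Gbf^\top)^{-1}=\J_1-\J_2\J_4^{-1}\J_2^\top$, and then chain with \eqref{eqn:error_ub_Gorth}. Your explicit justification that $\J_4$ is positive definite (hence invertible) is a small point of care the paper leaves implicit.
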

\begin{proof}
The proof essentially relies on the Schur complement. See appendix Section~\ref{app:Schur} for details.
\end{proof}

\subsection{Bounds on the Mean-squared Error beyond the i.i.d. Case}\label{sec:not_iid}

\textcolor{black}{Until now, we based our analysis on the i.i.d. assumption~\ref{ass:general}. For the PageRank problem discussed in Section~\ref{sec:PG_and_straggler}, this assumption means that the PageRank queries are independent across different users. Although the case when the PageRank queries are arbitrarily correlated is hard to analyze, we may still provide concrete analysis for some specific cases. For example, a reasonable case when the PageRank queries are correlated with each other is when these queries are all affected by some ``common fashion topic'' that the users wish to search for. In mathematics, we can model this phenomenon by assuming that the solutions to the $i$-th linear inverse problem satisfies}
\begin{equation}
\x_i^*=\bar{\x}+\z_i,
\end{equation}
for some random vector $\bar{\x}$ and an i.i.d. vector $\z_i$ across different queries (different $i$). The common part $\bar{\x}$ is random because the common fashion topic itself can be random. This model can be generalized to the following ``stationary'' model.

\begin{assumption}\label{ass:stationary}
Assume the solutions $\x_i^*$'s of the linear inverse problems have the same mean $\boldsymbol{\mu}_E$ and stationary covariances, i.e.,
\begin{equation}\label{eqn:stationary1}
\Ep[\x_i^*(\x_i^*)^\top]=\C_E+\C_\text{Cor},\forall 1\le i\le k,
\end{equation}
\begin{equation}\label{eqn:stationary2}
\Ep[\x_i^*(\x_j^*)^\top]=\C_\text{Cor},\forall 1\le i,j\le k.
\end{equation}
\end{assumption}
Under this assumption, we have to change the coded linear inverse algorithm slightly. The details are shown in Algorithm~\ref{alg:pg_stationary}.

\begin{algorithm}[!h]
   \caption{Coded Distributed Linear Inverse (Stationary Inputs)}\label{alg:pg_stationary}

Call Algorithm~\ref{alg:pg} but replace the $\boldsymbol{\Lambda}$ matrix with
\begin{equation}\label{eqn:tildeLambda}
\tilde{\boldsymbol{\Lambda}}=\sigma_\text{max}(\Gbf^\top\Gbf)\boldsymbol{\Lambda}+\diag\{\Gbf^\top \one_k\}\cdot \boldsymbol{\Psi}\cdot \diag\{\Gbf^\top \one_k\}^\top,
\end{equation}
where $\sigma_\text{max}(\Gbf^\top\Gbf)$ is the maximum eigenvalue of $\Gbf^\top\Gbf$, and $\boldsymbol\Psi_{n\times n}=[\Psi_{i,j}]$ satisfies
\begin{equation}\label{eqn:tildeLambda_ij}
\Psi_{i,j}=\trace[\B^{l_i}\C_\text{cor}(\B^\top)^{l_j}].
\end{equation}
\end{algorithm}

For the stationary version, we can have the counterpart of Theorem~\ref{thm:upper_bound} as follows. Trivial generalizations include arbitrary linear scaling $\x_i^*=\alpha_i\bar{\x}+\beta_i\z_i$ for scaling constants $\alpha_i$ and $\beta_i$.

\begin{theorem}\label{thm:upper_bound_stationary}
Define $\Ebf=\hat\X-\X^*$, i.e., the error of the decoding result \eqref{eqn:weighted_least_square} by replacing $\boldsymbol\Lambda$ defined in \eqref{eqn:Gamma} with $\tilde{\boldsymbol{\Lambda}}$ in \eqref{eqn:tildeLambda}. Assuming that the solutions for all linear inverse problems satisfy Assumption \ref{ass:stationary}. Then, the error covariance of $\Ebf$ satisfies
\begin{equation}\label{eqn:error_ub_stationary}
\begin{split}
\Ep[\twonorm\Ebf^2|\mathbf{l}]\le&\trace\left[(\Gbf \tilde{\boldsymbol{\Lambda}}^{-1} \Gbf^\top)^{-1}\right].
\end{split}
\end{equation}
where the norm $\twonorm\cdot$ is the Frobenius norm.
\end{theorem}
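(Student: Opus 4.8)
The plan is to reuse verbatim the four-stage pipeline of Theorem~\ref{thm:upper_bound} (encode, propagate the per-worker error, decode, take the conditional expectation) and to change only one ingredient: the second-moment matrix of the encoded error, which under Assumption~\ref{ass:stationary} is now coupled across problems through $\C_\text{Cor}$. First I would note that the encoding and propagation steps are untouched. Since $\x_i^{(0)}=\boldsymbol{\mu}_E$ for every $i$, the initial errors $\e_i^{(0)}=\boldsymbol{\mu}_E-\x_i^*$ are zero-mean, and writing $\x_i^*=\bar{\x}+\z_i$ gives the cross-covariances $\Ep[\e_i^{(0)}(\e_j^{(0)})^\top]=\C_E\,\delta_{ij}+\C_\text{Cor}$. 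Encoding by $\Gbf$ carries the error to $\Ebf_0\Gbf$ exactly as in \eqref{eqn:up_der1}, worker $i$ then holds $\B^{l_i}(\Ebf_0\Gbf)_{:,i}$, and collecting these columns into $\mathcal{E}=[\B^{l_1}(\Ebf_0\Gbf)_{:,1},\ldots,\B^{l_n}(\Ebf_0\Gbf)_{:,n}]$, the left-inverse property $\Lbf\Gbf^\top=\I_k$ (which holds for $\Lbf=(\Gbf\tilde{\boldsymbol{\Lambda}}^{-1}\Gbf^\top)^{-1}\Gbf\tilde{\boldsymbol{\Lambda}}^{-1}$) cancels the true-solution part and leaves $\Ebf^\top=\Lbf\,\mathcal{E}^\top$, so that $\Ep[\twonorm\Ebf^2|\mathbf{l}]=\trace[\Lbf\,\Ep[\mathcal{E}^\top\mathcal{E}|\mathbf{l}]\,\Lbf^\top]$.

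The core step is to evaluate $\Ep[\mathcal{E}^\top\mathcal{E}|\mathbf{l}]$ entrywise and to separate the two covariance pieces. Using $(\Ebf_0\Gbf)_{:,i}=\sum_m \Gbf_{mi}\e_m^{(0)}$ and the cyclicity of the trace, the $(i,j)$ entry splits into $(\Gbf^\top\Gbf)_{ij}\,\trace[\B^{l_i}\C_E(\B^\top)^{l_j}]+(\Gbf^\top\one_k)_i(\Gbf^\top\one_k)_j\,\Psi_{ij}$: the $\C_E$ term carries the Kronecker delta, so its double column sum collapses to $(\Gbf^\top\Gbf)_{ij}$, whereas the shared $\C_\text{Cor}$ term sums freely over both column indices, producing a factor $\Gbf^\top\one_k$ on each side together with exactly the matrix $\boldsymbol{\Psi}$ of \eqref{eqn:tildeLambda_ij}. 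Hence $\Ep[\mathcal{E}^\top\mathcal{E}|\mathbf{l}]=(\Gbf^\top\Gbf)\circ M+\diag\{\Gbf^\top\one_k\}\,\boldsymbol{\Psi}\,\diag\{\Gbf^\top\one_k\}$, where $M_{ij}=\trace[\B^{l_i}\C_E(\B^\top)^{l_j}]$ and $\circ$ is the Hadamard product.

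Next I would dominate this matrix in the positive-semidefinite order by $\tilde{\boldsymbol{\Lambda}}$. The correlated summand is already exactly the second term of \eqref{eqn:tildeLambda}, so it is left untouched. For the individual summand I would invoke the Schur product theorem: $M\succeq 0$ (it is a Gram matrix of the matrices $\B^{l_i}\C_E^{1/2}$ under the Frobenius inner product) and $\sigma_\text{max}(\Gbf^\top\Gbf)\I-\Gbf^\top\Gbf\succeq 0$, so their Hadamard product is PSD, giving $(\Gbf^\top\Gbf)\circ M\preceq\sigma_\text{max}(\Gbf^\top\Gbf)(\I\circ M)=\sigma_\text{max}(\Gbf^\top\Gbf)\boldsymbol{\Lambda}$, where the last equality uses $M_{ii}=\trace(\C(l_i))$ from \eqref{eqn:CTI}. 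Adding the two pieces yields $\Ep[\mathcal{E}^\top\mathcal{E}|\mathbf{l}]\preceq\tilde{\boldsymbol{\Lambda}}$, hence $\trace[\Lbf\,\Ep[\mathcal{E}^\top\mathcal{E}|\mathbf{l}]\,\Lbf^\top]\le\trace[\Lbf\tilde{\boldsymbol{\Lambda}}\Lbf^\top]$. Substituting the weighted-least-squares decoder makes the product telescope, $\Lbf\tilde{\boldsymbol{\Lambda}}\Lbf^\top=(\Gbf\tilde{\boldsymbol{\Lambda}}^{-1}\Gbf^\top)^{-1}$, which is precisely \eqref{eqn:error_ub_stationary}.

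I expect the main obstacle to be the clean separation of the two covariance contributions and, in particular, recognizing that the individual part must be relaxed through the factor $\sigma_\text{max}(\Gbf^\top\Gbf)$ while the correlated part is preserved exactly; this asymmetry is exactly what dictates the nonstandard definition of $\tilde{\boldsymbol{\Lambda}}$ in \eqref{eqn:tildeLambda}. The Schur-product domination $(\Gbf^\top\Gbf)\circ M\preceq\sigma_\text{max}(\Gbf^\top\Gbf)\boldsymbol{\Lambda}$ is the one nontrivial matrix inequality; as a sanity check, in the scalar reduction $\B=a$ the Gram matrix $M$ collapses to a rank-one outer product and this step degenerates to the elementary bound $\Gbf^\top\Gbf\preceq\sigma_\text{max}(\Gbf^\top\Gbf)\I$ used in the proof of Theorem~\ref{thm:upper_bound}, while setting $\C_\text{Cor}=0$ recovers that theorem.
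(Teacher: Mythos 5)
Your proof is correct and follows essentially the same route as the paper's: split the error covariance into the i.i.d.\ part and the shared $\C_\text{Cor}$ part, bound the former by $\sigma_\text{max}(\Gbf^\top\Gbf)\boldsymbol{\Lambda}$ while keeping the latter exact, and telescope $\Lbf\tilde{\boldsymbol{\Lambda}}\Lbf^\top$ through the weighted-least-squares decoder. The only divergence is mechanical: where you reduce to the $n\times n$ matrix $(\Gbf^\top\Gbf)\circ M$ and invoke the Schur product theorem, the paper performs the same domination at the Kronecker-product level ($\Gbf^\top\Gbf\otimes\C_E\preceq\sigma_\text{max}(\Gbf^\top\Gbf)\I_n\otimes\C_E$) and then collapses dimensions with the blockwise-trace identity (Property 7 of Lemma~\ref{lmm_vec1}), yielding the identical intermediate inequality.
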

\begin{proof}
See appendix Section~\ref{app:correlated}.
\end{proof}

In Section \ref{sec:theory}, we compare coded, uncoded and replication-based linear inverse schemes under the i.i.d. assumption. However, we include one experiment in Section~\ref{sec:experimenl} to show that Algorithm~\ref{alg:pg_stationary} also works in the stationary case.

\section{Comparison with Uncoded Schemes and Replication-based Schemes}\label{sec:theory}

Here, we often assume (we will state explicitly in the theorem) that the number of iterations $l_i$ at different workers are i.i.d.. $\Ep_f[\cdot]$ denotes expectation on randomness of both the linear inverse solutions $\x^*_i$ and the number of iterations $l_i$.
\begin{assumption}\label{ass:time_iid}
Within time $T_\text{dl}$, the number of iterations of linear inverse computations at each worker follows an i.i.d. distribution $l_i\sim f(l)$.
\end{assumption}

\subsection{Comparison between the coded and uncoded linear inverse before a deadline}
First, we compare the coded linear inverse scheme with an uncoded scheme, in which case we use the first $k$ workers to solve $k$ linear inverse problems in \eqref{eqn:fixed_poinl} without coding. The following theorem quantifies the overall mean-squared error of the uncoded scheme given $l_1,l_2,\ldots,l_k$. The proof is in appendix Section \ref{app:uncoded_replication}.
\begin{theorem}\label{thm:uncoded}
In the uncoded scheme, the overall error is
\begin{equation}\label{eqn:error_uncoded}
\begin{split}
\hspace{-0.1in}\Ep\left[\twonorm{\Ebf_{\text{uncoded}} }^2|\textbf{l}\right]&=\Ep\left[\left.\twonorm{[{\e_1^{(l_1)}},{\e_2^{(l_2)}}\ldots,{\e_k^{(l_k)}}]}^2\right|\mathbf{l}\right]\\
&=\sum_{i=1}^k \trace\left(\C(l_i)\right).
\end{split}
\end{equation}
Further, when the i.i.d. Assumption \ref{ass:time_iid} holds,
\begin{equation}\label{eqn:error_uncoded_f}
\begin{split}
&\Ep_f \left[\twonorm{\Ebf_{\text{uncoded}} }^2\right]= k\Ep_f[\trace(\C(l_1))].
\end{split}
\end{equation}
\end{theorem}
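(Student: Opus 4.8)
The plan is to compute the conditional expectation directly, exploiting the column structure of the uncoded error matrix together with the elementary identity relating the expected squared Euclidean norm of a zero-mean vector to the trace of its covariance. The first equality in \eqref{eqn:error_uncoded} is essentially definitional: in the uncoded scheme worker $i$ solves problem $i$ on its own, so the returned estimate is $\x_i^{(l_i)}$ and its error is exactly $\e_i^{(l_i)}=\x_i^{(l_i)}-\x_i^*$, whence $\Ebf_{\text{uncoded}}=[\e_1^{(l_1)},\e_2^{(l_2)},\ldots,\e_k^{(l_k)}]$. Since the squared Frobenius norm of a matrix is the sum of the squared Euclidean norms of its columns, I would first write $\twonorm{\Ebf_{\text{uncoded}}}^2=\sum_{i=1}^{k}\twonorm{\e_i^{(l_i)}}^2$ and push the conditional expectation through the finite sum.

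Next I would handle a single column. By \eqref{eqn:error_iter_A} we have $\e_i^{(l_i)}=\B^{l_i}\e_i^{(0)}$, and under Assumption~\ref{ass:general} together with the initialization $\x_i^{(0)}=\boldsymbol{\mu}_E$ the vector $\e_i^{(0)}$ has mean $\0_N$ and covariance $\C_E$. Consequently $\e_i^{(l_i)}$ is zero-mean (conditionally on $\mathbf{l}$, which only fixes the exponent $l_i$) with covariance $\B^{l_i}\C_E(\B^\top)^{l_i}=\C(l_i)$ by the definition \eqref{eqn:CTI}. The trace trick then gives
\begin{equation}
\Ep\!\left[\twonorm{\e_i^{(l_i)}}^2\,\big|\,\mathbf{l}\right]=\Ep\!\left[\trace\!\left(\e_i^{(l_i)}(\e_i^{(l_i)})^\top\right)\big|\,\mathbf{l}\right]=\trace\!\left(\C(l_i)\right),
\end{equation}
where the zero mean is what ensures no additional $\twonorm{\Ep[\e_i^{(l_i)}]}^2$ term appears. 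Summing over $i=1,\ldots,k$ yields the second equality of \eqref{eqn:error_uncoded}.

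For the $f$-averaged statement \eqref{eqn:error_uncoded_f} I would apply the tower property, taking the outer expectation over the iteration counts: $\Ep_f[\twonorm{\Ebf_{\text{uncoded}}}^2]=\Ep_f\!\big[\sum_{i=1}^{k}\trace(\C(l_i))\big]$. Moving $\Ep_f$ inside the finite sum by linearity and using that the $l_i$ are i.i.d. under Assumption~\ref{ass:time_iid}, every summand equals the common value $\Ep_f[\trace(\C(l_1))]$, which produces the factor $k$.

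I do not expect a genuine obstacle here; the argument is a short bookkeeping computation. The only point requiring care is the zero-mean property of $\e_i^{(0)}$, which is precisely what licenses the clean ``expected squared norm equals trace of covariance'' step and is guaranteed by the choice $\x_i^{(0)}=\boldsymbol{\mu}_E$; everything else is linearity of trace and expectation plus the i.i.d. structure of the $l_i$.
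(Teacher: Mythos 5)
Your proposal is correct and follows essentially the same route as the paper's proof in the appendix: decompose the squared Frobenius norm into columns, apply $\e_i^{(l_i)}=\B^{l_i}\e_i^{(0)}$ with the trace identity and $\Ep[\e_i^{(0)}(\e_i^{(0)})^\top|\mathbf{l}]=\C_E$ to get $\trace(\C(l_i))$ per column, then average over the i.i.d. iteration counts. Your added remark on the zero-mean initialization is a correct (and slightly more careful) justification of why the second moment of $\e_i^{(0)}$ equals $\C_E$.
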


Then, we compare the overall mean-squared error of coded and uncoded linear inverse algorithms. \textbf{Note that this comparison is not fair} because the coded algorithm uses more workers than uncoded. However, we still include Theorem~\ref{thm:beat_uncoded} because we need it for the fair comparison between coded and replication-based linear inverse.

\begin{theorem}\label{thm:beat_uncoded}(Coded linear inverse beats uncoded)
Suppose the i.i.d. Assumption \ref{ass:general} and \ref{ass:time_iid} hold and suppose $\Gbf$ is a $k\times n$ submatrix of an $n\times n$ Fourier transform matrix $\Fbf$. Then, expected error of the coded linear inverse is strictly less than that of uncoded:
\begin{equation}
\Ep_f \left[\twonorm{\Ebf_{\text{uncoded}} }^2\right]-\Ep_f \left[\twonorm{\Ebf_{\text{coded}} }^2\right]\ge \Ep_f[\trace(\J_2\J_4^{-1}\J_2^\top)],
\end{equation}
where $\J_2$ and $\J_4$ are defined in \eqref{eqn:block_form}.
\end{theorem}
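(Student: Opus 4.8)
The plan is to combine the Schur-complement bound of Corollary~\ref{cor:upper_bound} with the exact uncoded error of Theorem~\ref{thm:uncoded}, exploiting the flat-spectrum structure of a Fourier submatrix. Conditioned on $\mathbf{l}$, Corollary~\ref{cor:upper_bound} already supplies
\[
\Ep[\twonorm{\Ebf_\text{coded}}^2\,|\,\mathbf{l}]\le\trace(\J_1)-\trace(\J_2\J_4^{-1}\J_2^\top),
\]
so the whole theorem reduces to showing that, after averaging over the i.i.d.\ iteration counts, the leading term $\Ep_f[\trace(\J_1)]$ coincides with the uncoded error $\Ep_f[\twonorm{\Ebf_\text{uncoded}}^2]=k\,\Ep_f[\trace(\C(l_1))]$ from \eqref{eqn:error_uncoded_f}.

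First I would rewrite $\trace(\J_1)=\trace(\Gbf\boldsymbol{\Lambda}\Gbf^\top)=\trace(\boldsymbol{\Lambda}\Gbf^\top\Gbf)$ by cyclicity of the trace, so that $\trace(\J_1)=\sum_{i=1}^n\Lambda_{ii}(\Gbf^\top\Gbf)_{ii}=\sum_{i=1}^n\trace(\C(l_i))(\Gbf^\top\Gbf)_{ii}$, recalling from \eqref{eqn:Gamma} that $\Lambda_{ii}=\trace(\C(l_i))$. The crucial structural step is to evaluate the diagonal of $\Gbf^\top\Gbf$. Since $\Gbf$ consists of $k$ rows of a normalized $n\times n$ Fourier matrix $\Fbf$, every entry of $\Fbf$ has squared modulus $1/n$, hence $(\Gbf^\top\Gbf)_{ii}=\sum_{j=1}^k|\Fbf_{ji}|^2=k/n$ for every $i$. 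This flat-diagonal property is exactly what a Fourier submatrix buys us (a generic orthonormal submatrix would instead give column-dependent weights), and it is the main observation driving the argument; substituting yields $\trace(\J_1)=\frac{k}{n}\sum_{i=1}^n\trace(\C(l_i))$.

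Next I would take $\Ep_f$ over the iteration counts. Under Assumption~\ref{ass:time_iid} the $l_i$ are i.i.d., so the random variables $\trace(\C(l_i))=\trace(\B^{l_i}\C_E(\B^\top)^{l_i})$ are identically distributed and $\Ep_f[\trace(\C(l_i))]=\Ep_f[\trace(\C(l_1))]$ for all $i$. Therefore
\[
\Ep_f[\trace(\J_1)]=\frac{k}{n}\sum_{i=1}^n\Ep_f[\trace(\C(l_i))]=k\,\Ep_f[\trace(\C(l_1))]=\Ep_f[\twonorm{\Ebf_\text{uncoded}}^2].
\]
Averaging the Corollary~\ref{cor:upper_bound} bound via the tower property, $\Ep_f[\twonorm{\Ebf_\text{coded}}^2]=\Ep_f[\Ep[\twonorm{\Ebf_\text{coded}}^2\,|\,\mathbf{l}]]\le\Ep_f[\trace(\J_1)]-\Ep_f[\trace(\J_2\J_4^{-1}\J_2^\top)]$, and rearranging gives the claimed inequality. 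Finally, to justify strictness I would note that $\J_4=\H\boldsymbol{\Lambda}\H^\top\succ\ZeroMatrix$ (since $\boldsymbol{\Lambda}\succ\ZeroMatrix$ and $\H$ has full row rank), whence $\J_2\J_4^{-1}\J_2^\top\succeq\ZeroMatrix$ has nonnegative trace, strictly positive whenever $\J_2\neq\ZeroMatrix$, which yields the genuine gain of coding over uncoded.
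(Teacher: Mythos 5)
Your proposal is correct and follows essentially the same route as the paper's proof: invoke the Schur-complement bound of Corollary~\ref{cor:upper_bound}, show via the flat-modulus property $|g_{ji}|^2=1/n$ of the Fourier submatrix that $\trace(\J_1)=\frac{k}{n}\sum_{i=1}^n\trace(\C(l_i))$, and then use the i.i.d.\ assumption on the $l_i$ to identify $\Ep_f[\trace(\J_1)]$ with the uncoded error $k\,\Ep_f[\trace(\C(l_1))]$ from Theorem~\ref{thm:uncoded}. Your cyclicity-of-trace phrasing and the closing remark on positive semidefiniteness of $\J_2\J_4^{-1}\J_2^\top$ are cosmetic additions to the same argument.
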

\begin{proof}
From Corollary \ref{cor:upper_bound}, for fixed $l_i,1\le i\le n$,
\begin{equation}
\Ep[\twonorm{\Ebf_\text{coded}}^2|\mathbf{l}]\le\trace(\J_1)-\trace(\J_2\J_4^{-1}\J_2^\top).
\end{equation}
We will show that
\begin{equation}\label{eqn:error_uncoded_exp}
\Ep_f[\trace(\J_1)]=\Ep_f \left[\twonorm{\Ebf_{\text{uncoded}} }^2\right],
\end{equation}
which completes the proof. To show \eqref{eqn:error_uncoded_exp}, first note that from \eqref{eqn:error_uncoded_f},
\begin{equation}\label{eqn:der3}
\Ep_f\left[\twonorm{\Ebf_\text{uncoded}}^2\right]= k\Ep_f[\trace(\C(l_1))].
\end{equation}
Since $\Gbf:=[g_{j,i}]$ is a submatrix of a Fourier matrix, we have $|g_{ji}|^2=1/n$. Thus, $\J_1=\Gbf\boldsymbol{\Lambda}\Gbf^\top$ satisfies\vspace{-3mm}
\[\trace(\J_1)=\sum_{j=1}^k \sum_{i=1}^n |g_{ji}|^2 \trace(\C(l_i))=\frac{k}{n}\sum_{i=1}^n \trace(\C(l_i)).\]
Therefore\vspace{-3mm},
\begin{equation}
\Ep_f[\trace(\J_1)]=k\Ep_f[\trace(\C(l_1))].
\end{equation}
which, along with \eqref{eqn:der3}, completes the proof of \eqref{eqn:error_uncoded_exp}, and hence also the proof of Theorem~\ref{thm:beat_uncoded}.
\end{proof}

\subsection{Comparison between the replication-based and coded linear inverse before a deadline}\label{sec:comp_with_rep}

Consider an alternative way of doing linear inverse using $n>k$ workers. In this paper, we only consider the case when $n-k<k$, i.e., the number of extra workers is only slightly bigger than the number of problems (both in theory and in experiments). Since we have $n-k$ extra workers, a natural way is to pick any $(n-k)$ linear inverse problems and replicate them using these extra $(n-k)$ workers. After we obtain two computation results for the same equation, we use two natural ``decoding'' strategies for this replication-based linear inverse: (i) choose the worker with higher number of iterations; (ii) compute the weighted average using weights $\frac{w_1}{w_1+w_2}$ and $\frac{w_2}{w_1+w_2}$, where $w_1=1/\sqrt{\trace(\C(l_1))}$ and $w_2=1/\sqrt{\trace(\C(l_2))}$, and $l_1$ and $l_2$ are the number of iterations completed at the two workers.


\begin{theorem}
The replication-based schemes satisfies the following lower bound on the mean-squared error:
\begin{equation}\label{eqn:error_replication}
\begin{split}
\Ep_f\left[\twonorm{\Ebf_\text{rep}}^2\right]>&\Ep_f\left[\twonorm{\Ebf_\text{uncoded}}^2\right]\\
&-(n-k)\Ep_f[\trace(\C(l_1))].
\end{split}
\end{equation}
\end{theorem}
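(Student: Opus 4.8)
The plan is to split the replication error according to the two kinds of problems that arise in the replication scheme: the $2k-n$ problems assigned to a single worker and the $n-k$ problems assigned to two workers. The key observation is that the single-worker group alone produces an error equal to the entire right-hand side of the claimed bound, while the doubled group contributes a strictly positive amount, which forces the strict inequality.

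First I would condition on the iteration vector $\mathbf{l}$ and write $\twonorm{\Ebf_\text{rep}}^2$ as a sum of the per-problem errors. Each of the $2k-n$ non-replicated problems is solved exactly as in the uncoded scheme, so under Assumption~\ref{ass:time_iid} its expected error is $\Ep_f[\trace(\C(l_1))]$, and together they contribute $(2k-n)\Ep_f[\trace(\C(l_1))]$. Using Theorem~\ref{thm:uncoded}, i.e.\ $\Ep_f[\twonorm{\Ebf_\text{uncoded}}^2]=k\Ep_f[\trace(\C(l_1))]$, this single-worker contribution is exactly $\Ep_f[\twonorm{\Ebf_\text{uncoded}}^2]-(n-k)\Ep_f[\trace(\C(l_1))]$, the quantity we wish to exceed.

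It then remains to show that each replicated problem contributes strictly positive expected error. Both workers assigned to a replicated problem start from the same initial estimate, so they share the same initial error $\e^{(0)}$, and their residuals after $l_1$ and $l_2$ iterations are $\B^{l_1}\e^{(0)}$ and $\B^{l_2}\e^{(0)}$. Integrating out the solution randomness for fixed $\mathbf{l}$: for decoding rule (i) the combined residual is $\B^{l_{\max}}\e^{(0)}$ with $l_{\max}=\max\{l_1,l_2\}$, giving conditional error $\twonorm{\C_E^{1/2}(\B^{\top})^{l_{\max}}}^2$; for rule (ii) the combined residual is $(w_1\B^{l_1}+w_2\B^{l_2})\e^{(0)}/(w_1+w_2)$, giving $\twonorm{\C_E^{1/2}(w_1\B^{l_1}+w_2\B^{l_2})^{\top}}^2/(w_1+w_2)^2$. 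Both are nonnegative squared Frobenius norms that, because $\C_E\succ 0$, vanish only if the corresponding coefficient matrix is zero.

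The hard part is precisely this strict-positivity step: ruling out a nilpotent $\B$ in case (i) and exact sign cancellation between $\B^{l_1}$ and $\B^{l_2}$ in case (ii). I would dispatch it by appealing to the structure of $\B$ in the motivating problems---$\B=(1-d)\A$ is entrywise nonnegative and nonzero for PageRank, and $\B$ is symmetric positive semidefinite for the graph-signal iteration---so that for every finite $\mathbf{l}$ the coefficient matrix is nonzero and does not lie in the kernel of $\C_E^{1/2}$. Combining the exactly-evaluated single-worker contribution with this strictly positive replicated contribution and taking $\Ep_f[\cdot]$ yields $\Ep_f[\twonorm{\Ebf_\text{rep}}^2]>\Ep_f[\twonorm{\Ebf_\text{uncoded}}^2]-(n-k)\Ep_f[\trace(\C(l_1))]$, as claimed.
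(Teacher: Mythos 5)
Your proposal is correct and follows essentially the same route as the paper's (very brief) proof: decompose the replication error into the $2k-n$ singly-assigned problems, which contribute exactly $\Ep_f[\twonorm{\Ebf_\text{uncoded}}^2]-(n-k)\Ep_f[\trace(\C(l_1))]$, and observe that each replicated problem still contributes strictly positive expected error, i.e.\ replication can at best drive those $(n-k)$ errors to zero. The only difference is that you spell out the strict-positivity step (nonvanishing of the combined residual under both decoding rules) that the paper merely asserts in a footnote.
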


\begin{proof}[Proof overview]
Here the goal is to obtain a lower bound on the MSE of replication-based linear inverse and compare it with an upper bound on the MSE of coded linear inverse.

Note that if an extra worker is used to replicate the computation at the $i$-th worker, i.e., the linear inverse problem with input $\rbf_i$ is solved on two workers, the expected error of the result of the $i$-th problem could at best reduced from $\Ep_f[\trace(\C(l_1))]$ to zero\footnote{Although this is clearly a loose bound, it makes for convenient comparison with coded linear inverse}. Therefore, $(n-k)$ extra workers make the error decrease by at most (and strictly smaller than) $ (n-k) \Ep_f[\trace(\C(l_1))]$.
\end{proof}

Using this lower bound, we can provably show that coded linear inverse beats replication-based linear inverse when certain conditions are satisfied. One crucial condition is that the distribution of the random variable $\trace(\C(l))$ satisfies a ``variance heavy-tail'' property defined as follows.
\begin{definition}
The random variable $\trace(\C(l))$ is said to have a ``$\rho$-variance heavy-tail'' property if
\begin{equation}
\text{var}_f[\trace(\C(l))]>\rho\Ep_f^2[\trace(\C(l))],
\end{equation}
\end{definition}
for some constant $\rho>1$. For the coded linear inverse, we will use a Fourier code the generator matrix $\Gbf$ of which is a submatrix of a Fourier matrix. This particular choice of code is only for ease of analysis in comparing coded linear inverse and replication-based linear inverse. In practice, the code that minimizes  mean-squared error should be chosen.

\begin{theorem}\label{thm:beat_replication}(Coded linear inverse beats replication)
Suppose the i.i.d. Assumption \ref{ass:general} and \ref{ass:time_iid} hold and $\Gbf$ is a $k\times n$ submatrix of an $n\times n$ Fourier matrix $\Fbf$. Further, suppose $(n-k)=o(\sqrt{n})$. Then, the expected error of the coded linear inverse satisfies
\begin{equation}\label{eqn:coded_and_replication}
\begin{split}
\lim_{n\to\infty}\frac{1}{n-k}\left[\Ep_f\left[\twonorm{\Ebf_\text{uncoded}}^2\right]-\Ep_f\left[\twonorm{\Ebf_\text{coded}}^2\right]\right]\\
\ge \frac{\text{var}_f[\trace(\C(l_1))]}{\Ep_f[\trace(\C(l_1))]}.
\end{split}
\end{equation}
Moreover, if the random variable $\trace(\C(l))$ satisfies the $\rho$-variance heavy-tail property for $\rho>1$, coded linear inverse outperforms replication-based linear inverse in the following sense,
\begin{equation}\label{eqn:coded_beats_replication}
\begin{split}
&\lim_{n\to\infty}\frac{1}{(n-k)}\left[\Ep_f\left[\twonorm{\Ebf_\text{uncoded}}^2\right]-\Ep_f\left[\twonorm{\Ebf_\text{rep}}^2\right]\right]\\
< \frac{1}{\rho}&\lim_{n\to\infty}\frac{1}{(n-k)}\left[\Ep_f\left[\twonorm{\Ebf_\text{uncoded}}^2\right]-\Ep_f\left[\twonorm{\Ebf_\text{coded}}^2\right]\right].
\end{split}
\end{equation}
\end{theorem}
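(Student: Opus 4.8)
The plan is to reduce the entire statement to the asymptotics of the single quantity $\Ep_f[\trace(\J_2\J_4^{-1}\J_2^\top)]$, which already controls the coded-versus-uncoded gap through Theorem~\ref{thm:beat_uncoded}. Write $\lambda_i=\trace(\C(l_i))$ and set $\bar\lambda=\Ep_f[\lambda_1]$, $v=\var_f[\lambda_1]$; by Assumption~\ref{ass:time_iid} the $\lambda_i$ are i.i.d.\ and nonnegative, and since $\Gbf$ and $\H$ are disjoint row-blocks of the $n\times n$ Fourier matrix $\Fbf$, every entry of $\Fbf$ has squared magnitude $1/n$. Because Theorem~\ref{thm:beat_uncoded} gives $\Ep_f[\twonorm{\Ebf_\text{uncoded}}^2]-\Ep_f[\twonorm{\Ebf_\text{coded}}^2]\ge\Ep_f[\trace(\J_2\J_4^{-1}\J_2^\top)]$, the first claim \eqref{eqn:coded_and_replication} will follow once I show $\frac{1}{n-k}\Ep_f[\trace(\J_2\J_4^{-1}\J_2^\top)]\to v/\bar\lambda$.

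First I would show that the small $(n-k)\times(n-k)$ block $\J_4=\H\boldsymbol{\Lambda}\H^\top$ concentrates on $\bar\lambda\,\I_{n-k}$. Each diagonal entry equals $\frac1n\sum_i\lambda_i$, which tends to $\bar\lambda$ by the law of large numbers, while each off-diagonal entry $\sum_i\H_{ai}\lambda_i\overline{\H_{bi}}$ has mean zero (orthogonality of the rows of $\Fbf$) and second moment $v/n$. Bounding the operator norm of $\J_4-\bar\lambda\,\I_{n-k}$ by its Frobenius norm over the $(n-k)^2$ entries yields a deviation of root-mean-square size $(n-k)\sqrt{v/n}$, which is $o(1)$ exactly because $(n-k)=o(\sqrt n)$; this is the single place the hypothesis on $n-k$ is consumed. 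Consequently $\J_4^{-1}=\tfrac1{\bar\lambda}\I_{n-k}+o(1)$ in operator norm, so
\[
\Ep_f[\trace(\J_2\J_4^{-1}\J_2^\top)]=\tfrac1{\bar\lambda}\,\Ep_f[\trace(\J_2\J_2^\top)]\,(1+o(1)).
\]

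The remaining term is an exact second-moment computation. The $(j,a)$ entry of $\J_2=\Gbf\boldsymbol{\Lambda}\H^\top$ is $\sum_i\Gbf_{ji}\lambda_i\overline{\H_{ai}}$; expanding $\Ep_f|\cdot|^2$, the $i=i'$ terms contribute $\frac1n\Ep_f[\lambda^2]$ and the $i\ne i'$ terms contribute $-\frac1n\bar\lambda^2$ (the would-be leading cross term vanishes because row $j$ of $\Gbf$ is orthogonal to row $a$ of $\H$), so each entry has second moment $v/n$. Summing over the $k(n-k)$ entries gives $\Ep_f[\trace(\J_2\J_2^\top)]=\frac{k(n-k)}{n}v$. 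Dividing by $n-k$, sending $n\to\infty$ and using $k/n\to1$ (valid since $n-k=o(\sqrt n)$) then gives $\frac{1}{n-k}\Ep_f[\trace(\J_2\J_4^{-1}\J_2^\top)]\to\frac{k}{n}\cdot\frac{v}{\bar\lambda}\to\frac{v}{\bar\lambda}$, establishing \eqref{eqn:coded_and_replication}.

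Finally, for \eqref{eqn:coded_beats_replication} I would combine the replication lower bound \eqref{eqn:error_replication} with the heavy-tail hypothesis. Dividing \eqref{eqn:error_replication} by $n-k$ gives $\frac{1}{n-k}[\Ep_f[\twonorm{\Ebf_\text{uncoded}}^2]-\Ep_f[\twonorm{\Ebf_\text{rep}}^2]]<\bar\lambda$ for every $n$, while the $\rho$-variance heavy-tail property is $v>\rho\bar\lambda^2$, i.e.\ $\bar\lambda<\tfrac1\rho\cdot\tfrac{v}{\bar\lambda}$. Chaining these with the limit just established produces
\[
\lim_{n\to\infty}\tfrac{1}{n-k}\big[\Ep_f[\twonorm{\Ebf_\text{uncoded}}^2]-\Ep_f[\twonorm{\Ebf_\text{rep}}^2]\big]\le\bar\lambda<\tfrac1\rho\cdot\tfrac{v}{\bar\lambda}\le\tfrac1\rho\lim_{n\to\infty}\tfrac{1}{n-k}\big[\Ep_f[\twonorm{\Ebf_\text{uncoded}}^2]-\Ep_f[\twonorm{\Ebf_\text{coded}}^2]\big],
\]
which is exactly \eqref{eqn:coded_beats_replication}. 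The hard part throughout is the second paragraph: converting the qualitative concentration of $\J_4$ into a genuine $o(1)$ control of $\J_4^{-1}$ inside an expectation, since $\J_4$ can be mildly ill-conditioned on rare events when some $\lambda_i$ are small. I expect to dominate that contribution using the Schur-complement inequality $\trace(\J_2\J_4^{-1}\J_2^\top)\le\trace(\J_1)$ together with $\Ep_f[\trace(\J_1)]=k\bar\lambda$ to secure uniform integrability and justify interchanging the limit with the expectation.
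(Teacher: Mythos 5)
Your overall architecture is the paper's: reduce everything to a lower bound on $\frac{1}{n-k}\Ep_f[\trace(\J_2\J_4^{-1}\J_2^\top)]$ via Theorem~\ref{thm:beat_uncoded}, show that $\J_4$ concentrates at $\mu\I_{n-k}$ with $\mu=\Ep_f[\trace(\C(l_1))]$, compute that each entry of $\J_2$ has second moment $v/n$ so that $\Ep_f[\trace(\J_2\J_2^\top)]=\frac{k(n-k)}{n}v$, and then chain the result with \eqref{eqn:error_replication} and the $\rho$-variance heavy-tail inequality exactly as in the paper's final step. The only methodological difference is how you control the spectrum of $\J_4$: you bound the operator norm of $\J_4-\mu\I_{n-k}$ by its Frobenius norm, whose expected square is $(n-k)^2v/n$, whereas the paper exploits the Toeplitz structure of $\Fbf\boldsymbol{\Lambda}\Fbf^\top$ and the Gershgorin circle theorem. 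Both consume the hypothesis $(n-k)=o(\sqrt{n})$ at the same place, and your variant is, if anything, cleaner, since it gives two-sided spectral control where only the upper bound on $\sigma_{\max}(\J_4)$ is needed.

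There is, however, one step that does not hold as written: the factorization $\Ep_f[\trace(\J_2\J_4^{-1}\J_2^\top)]=\tfrac{1}{\mu}\,\Ep_f[\trace(\J_2\J_2^\top)](1+o(1))$. The matrices $\J_2$ and $\J_4$ are built from the \emph{same} random variables $\lambda_i=\trace(\C(l_i))$, so they are dependent, and an in-probability statement about $\J_4$ combined with an in-expectation statement about $\trace(\J_2\J_2^\top)$ does not let you pull $\J_4^{-1}$ out of the expectation; the expectation of the product is not the product of expectations here. What you need --- and what the paper supplies in Lemma~\ref{lmm:Bsquare} via Chebyshev's inequality, Parseval's identity, and the variance of the unbiased sample variance --- is a \emph{high-probability} lower bound $\frac{1}{n-k}\trace(\J_2\J_2^\top)\ge\frac{k}{n}v-\epsilon$, to be intersected with the event $\sigma_{\max}(\J_4)\le\mu+\epsilon$; on that intersection $\frac{1}{n-k}\trace(\J_2\J_4^{-1}\J_2^\top)\ge\frac{1}{\mu+\epsilon}(\frac{k}{n}v-\epsilon)$, and since $\trace(\J_2\J_4^{-1}\J_2^\top)\ge 0$ everywhere, the complementary event contributes nothing negative to the expectation. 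Note also that your closing worry points in the wrong direction: rare ill-conditioning of $\J_4$ (small eigenvalues) can only inflate $\trace(\J_2\J_4^{-1}\J_2^\top)$ and therefore cannot hurt the lower bound \eqref{eqn:coded_and_replication}, which is the only direction the theorem requires; no uniform-integrability or Schur-complement domination argument is needed once the claim is phrased as an inequality rather than a limit identity.
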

\begin{proof}[Proof overview]
See appendix Section~\ref{app:beatrep} for a complete and rigorous proof. \textbf{Here we only provide the main intuition behind the proof.} From Theorem~\ref{thm:beat_uncoded}, we have
\begin{equation}\label{eqn:trace_der0}
\Ep_f \left[\twonorm{\Ebf_{\text{uncoded}} }^2\right]-\Ep_f \left[\twonorm{\Ebf_{\text{coded}} }^2\right]\ge\Ep_f[\trace(\J_2\J_4^{-1}\J_2^\top)].
\end{equation}
To prove \eqref{eqn:coded_and_replication}, the main technical difficulty is to simplify the term $\trace(\J_2\J_4^{-1}\J_2^\top)$. For a Fourier matrix $\Fbf$, we are able to show that the matrix $\Fbf \boldsymbol{\Lambda}\Fbf^\top=   \left[
\begin{matrix}
   {\J_1} & {\J_2} \\
   {\J_2^\top} & {\J_4} \\
\end{matrix}  \\
\right]$ (see Corollary~\ref{cor:upper_bound}) is a Toeplitz matrix, which provides a good structure for us to study its behavior. Then, we use the Gershgorin circle theorem \cite{golub2012matrix} (with some algebraic derivation) to show that the maximum eigenvalue of $\J_4$ satisfies $\sigma_\text{max}(\J_4)\approx \Ep_f[\trace(\C(l_1))]$, and separately using some algebraic manipulations, we show
\begin{equation}
\trace(\J_2\J_2^\top)\approx (n-k) \text{var}_f[\trace(\C(l_1))],
\end{equation}
for large matrix size $n$. Since
\begin{equation}
\begin{split}
\trace(\J_2\J_4^{-1}\J_2^\top)\ge \trace(\J_2(\sigma_\text{max}(\J_4))^{-1}\J_2^\top)\\
=\frac{1}{\sigma_\text{max}(\J_4)} \trace(\J_2\J_2^\top),
\end{split}
\end{equation}
we obtain
\begin{equation}\label{eqn:trace_der1}
\trace(\J_2\J_4^{-1}\J_2^\top)\ge \frac{(n-k)\text{var}_f[\trace(\C(l_1))]}{\Ep_f[\trace(\C(l_1))]},
\end{equation}
for large $n$. Then, \eqref{eqn:coded_and_replication} can be proved by plugging \eqref{eqn:trace_der1} into \eqref{eqn:trace_der0}. After that, we can combine \eqref{eqn:coded_and_replication}, \eqref{eqn:error_replication} and the variance heavy-tail property to prove \eqref{eqn:coded_beats_replication}.
\end{proof}

\subsection{Asymptotic Comparison between Coded, Uncoded and Replication-based linear inverse as the Deadline \texorpdfstring{$T_\text{dl}\to\infty$}{Lg}}\label{sec:infinite_time_analysis}
Consider the coded and uncoded linear inverse when the overall computation time $T_{dl}\to \infty$. From Theorem~\ref{thm:upper_bound} and Theorem~\ref{thm:uncoded}, the computation error of uncoded and coded linear inverse are respectively
\begin{equation}
\Ep\left[\twonorm{\Ebf_\text{uncoded}}^2|\mathbf{l}\right]=\sum_{i=1}^k \trace\left(\C(l_i)\right),
\end{equation}
\begin{equation}
\Ep[\twonorm{\Ebf_\text{coded}}^2|\mathbf{l}]\le\sigma_\text{max}(\Gbf^\top\Gbf)\trace\left[(\Gbf \boldsymbol{\Lambda}^{-1} \Gbf^\top)^{-1}\right],
\end{equation}
where the matrix $\boldsymbol{\Lambda}$ is
\begin{equation}
\boldsymbol{\Lambda}=\diag\left[\trace(\C(l_1)),\ldots,\trace(\C(l_n))\right],
\end{equation}
and the matrices $\C(l_i),i=1,\ldots,n$ are defined as
\begin{equation}
\C(l_i)=\B^{l_i}\C_E(\B^\top)^{l_i}.
\end{equation}

\begin{assumption}\label{ass:speed_constanl}
We assume the computation time of one power iteration is fixed at each worker for each linear inverse computation, i.e., there exist $n$ random variables $v_1,v_2,\ldots v_n$ such that $l_i=\lceil\frac{T_\text{dl}}{v_i}\rceil,i=1,2,\ldots n$.
\end{assumption}
The above assumption is validated in experiments in appendix Section~\ref{app:speed_constant}.

The $k$-th order statistic of a statistic sample is equal to its $k$-th smallest value. Suppose the order statistics of the sequence $v_1,v_2,\ldots v_n$ are $v_{i_1}<v_{i_2}<\ldots v_{i_n}$, where $\{i_1,i_2,\ldots i_n\}$ is a permutation of $\{1,2,\ldots n\}$. Denote by $[k]$ the set $\{1,2,\ldots k\}$ and $[n]$ the set $\{1,2,\ldots n\}$.

\begin{theorem}\label{thm:TtoInf}
(Error exponent comparison when $T_\text{dl}\to\infty$) Suppose the i.i.d. Assumption \ref{ass:general} and Assumption \ref{ass:speed_constanl} hold. Suppose $n-k<k$. Then, the error exponents of the coded and uncoded computation schemes satisfy
\begin{equation}
\lim_{T_\text{dl}\to\infty,l_i=\lceil\frac{T_\text{dl}}{v_i}\rceil}-\frac{1}{T_\text{dl}}\log \Ep[\twonorm{\Ebf_\text{coded}}^2|\mathbf{l}]\ge \frac{2}{v_{i_k}} \log\frac{1}{1-d},
\end{equation}
\begin{equation}
\begin{split}
&\lim_{T_\text{dl}\to\infty,l_i=\lceil\frac{T_\text{dl}}{v_i}\rceil}-\frac{1}{T_\text{dl}}\log \Ep[\twonorm{\Ebf_\text{uncoded}}^2|\mathbf{l}]\\
&=\lim_{T_\text{dl}\to\infty,l_i=\lceil\frac{T_\text{dl}}{v_i}\rceil}-\frac{1}{T_\text{dl}}\log \Ep[\twonorm{\Ebf_\text{rep}}^2|\mathbf{l}]
=\frac{2}{\max_{i\in [k]} v_i} \log\frac{1}{1-d},
\end{split}
\end{equation}
The error exponents of uncoded, replication and coded linear inverse satisfy coded$>$replication=uncoded.

Here the expectation $\Ep[\cdot|\mathbf{l}]$ is only taken with respect to the randomness of the linear inverse sequence $\x_i,i=1,2,\ldots k$, and conditioned on the number of iterations $\mathbf{l}$. The limit $\lim_{T_\text{dl}\to\infty}$ is taken under the Assumption~\ref{ass:speed_constanl}, i.e., $ l_i=\lceil\frac{T_\text{dl}}{v_i}\rceil$.
\end{theorem}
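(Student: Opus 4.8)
The plan is to reduce the entire comparison to the single-worker decay rate of $\trace(\C(l))$ and then handle the three schemes one at a time. First I would establish that for a single worker $\lim_{l\to\infty}-\frac{1}{l}\log\trace(\C(l))=2\log\frac{1}{1-d}$. Since $\C(l)=\B^{l}\C_E(\B^\top)^{l}$ and for personalized PageRank $\B=(1-d)\A$ with $\A$ column-stochastic, the spectral radius is $\rho(\B)=1-d$; writing $\trace(\C(l))=\twonorm{\B^{l}\C_E^{1/2}}^2$ and applying Gelfand's formula gives $\trace(\C(l))\le C\,\mathrm{poly}(l)(1-d)^{2l}$, while a matching lower bound follows from the nonzero overlap of $\C_E$ with the Perron eigendirection of $\B$. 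Because $l_i=\lceil T_\text{dl}/v_i\rceil$ satisfies $l_i/T_\text{dl}\to 1/v_i$, this yields $-\frac{1}{T_\text{dl}}\log\trace(\C(l_i))\to\frac{2}{v_i}\log\frac{1}{1-d}$. The uncoded case is then immediate: by Theorem~\ref{thm:uncoded} the conditional error equals $\sum_{i=1}^k\trace(\C(l_i))$, a finite sum of exponentials whose exponent is that of its slowest-decaying term, i.e.\ $\frac{2}{\max_{i\in[k]}v_i}\log\frac{1}{1-d}$.

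For the coded scheme I would start from the upper bound of Theorem~\ref{thm:upper_bound}, namely $\Ep[\twonorm{\Ebf}^2|\mathbf{l}]\le\sigma_\text{max}(\Gbf^\top\Gbf)\trace[(\Gbf\boldsymbol{\Lambda}^{-1}\Gbf^\top)^{-1}]$, so the problem reduces to bounding $\trace[(\Gbf\boldsymbol{\Lambda}^{-1}\Gbf^\top)^{-1}]$ from above. Writing $\Gbf\boldsymbol{\Lambda}^{-1}\Gbf^\top=\sum_{i=1}^n\lambda_i^{-1}\mathbf{g}_i\mathbf{g}_i^\top$ with $\lambda_i=\trace(\C(l_i))$ and $\mathbf{g}_i$ the $i$-th column of $\Gbf$, I would keep only the $k$ fastest workers $i_1,\ldots,i_k$ (the $k$ smallest speeds, equivalently the $k$ smallest $\lambda_i$) and discard the rest, which by positive semidefiniteness gives $\Gbf\boldsymbol{\Lambda}^{-1}\Gbf^\top\succeq\lambda_{i_k}^{-1}\Gbf_k\Gbf_k^\top$, where $\Gbf_k=[\mathbf{g}_{i_1},\ldots,\mathbf{g}_{i_k}]$. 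Since $\Gbf$ is a $k\times n$ Fourier submatrix, every $k\times k$ column submatrix $\Gbf_k$ is invertible, so $\sigma_\text{min}(\Gbf_k\Gbf_k^\top)\ge c_0>0$ uniformly over the finitely many choices of $k$ columns; hence $\trace[(\Gbf\boldsymbol{\Lambda}^{-1}\Gbf^\top)^{-1}]\le k\lambda_{i_k}/c_0$. Taking $-\frac{1}{T_\text{dl}}\log$ and applying the single-worker rate to $\lambda_{i_k}=\trace(\C(l_{i_k}))$ with $l_{i_k}=\lceil T_\text{dl}/v_{i_k}\rceil$ shows the coded exponent is at least $\frac{2}{v_{i_k}}\log\frac{1}{1-d}$, as claimed.

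For replication I would sandwich its conditional MSE. Taking, for each replicated problem, the better of its two copies can only reduce the error relative to its primary worker alone, so $\Ep[\twonorm{\Ebf_\text{rep}}^2|\mathbf{l}]\le\Ep[\twonorm{\Ebf_\text{uncoded}}^2|\mathbf{l}]$ and the replication exponent is at least the uncoded one. For the reverse, since $n-k<k$ there are $2k-n\ge 1$ problems that receive no replica, and the conditional form of the replication lower bound established above controls $\Ep[\twonorm{\Ebf_\text{rep}}^2|\mathbf{l}]$ from below by the error of these singly-computed problems; because the replica assignment is fixed before the speeds are revealed, the worker attaining $\max_{i\in[k]}v_i$ remains a singleton bottleneck and forces this lower bound to decay at the uncoded rate, so the replication exponent equals $\frac{2}{\max_{i\in[k]}v_i}\log\frac{1}{1-d}$. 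Finally, combining the three exponents with the elementary fact $v_{i_k}\le\max_{i\in[k]}v_i$ (among all $n$ speeds at least $k$ are $\le\max_{i\in[k]}v_i$, so the $k$-th smallest is as well) gives coded $\ge$ replication $=$ uncoded, strict whenever some worker outside $\{1,\ldots,k\}$ is faster than the slowest of the first $k$.

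The hard part will be the coded step: the diagonal of $\boldsymbol{\Lambda}$ spans many exponential scales, and one must show that the decay of $\trace[(\Gbf\boldsymbol{\Lambda}^{-1}\Gbf^\top)^{-1}]$ is set by the $k$-th fastest worker rather than by the fastest. The key is that only $k$ workers suffice to make $\Gbf\boldsymbol{\Lambda}^{-1}\Gbf^\top$ well-conditioned, which is exactly where the MDS property of Fourier submatrices enters and guarantees that the constant $c_0$ stays bounded away from $0$ as $n\to\infty$. A secondary subtlety is the replication lower bound, where I must argue carefully that a speed-oblivious, fixed replica assignment cannot close the order-statistic gap between $v_{i_k}$ and $\max_{i\in[k]}v_i$ that coding exploits.
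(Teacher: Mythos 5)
Your proposal is correct and follows essentially the same route as the paper's proof: the single-worker decay rate $2\log\frac{1}{1-d}$ via the spectral radius of $\B$, the truncation of $\Gbf\boldsymbol{\Lambda}^{-1}\Gbf^\top$ to the $k$ fastest workers combined with $\min_{|\T|=k}\sigma_{\min}(\Gbf_\T\Gbf_\T^\top)>0$ for the coded bound, and the unreplicated-bottleneck argument for replication. The one point where the paper is slightly more careful is the replication lower bound, which it obtains by choosing the $n-k$ replicas uniformly at random so that the slowest of the first $k$ workers is left unreplicated with constant probability $\binom{k-1}{n-k}/\binom{k}{n-k}$ independent of $T_\text{dl}$ --- exactly the subtlety you flag at the end of your proposal.
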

\begin{proof}[Proof overview]
See appendix Section~\ref{app:error_exponenl} for a detailed proof. The main intuition behind this result is the following: when $T_\text{dl}$ approaches infinity, the error of uncoded computation is dominated by the slowest worker among the first $k$ workers, which has per-iteration time $\max_{i\in [k]} v_i$. For the replication-based scheme, since the number of extra workers $n-k<k$, there is a non-zero probability (which does not change with $T_\text{dl}$) that the $n-k$ extra workers do not replicate the computation in the slowest one among the first worker. Therefore, replication when $n-k<k$ does not improve the error exponent, because the error is dominated by this slowest worker. For coded computation, we show in appendix Section~\ref{app:error_exponenl} that the slowest $n-k$ workers among the overall $n$ workers do not affect the error exponent, which means that the error is dominated by the $k$-th fastest worker, which has per-iteration time $v_{i_k}$. Since the $k$-th fastest worker among all $n$ workers can not be slower than the slowest one among the first (unordered) $k$ workers, the error exponent of coded linear inverse is larger than that of the uncoded and the replication-based linear inverse.
\end{proof}

\section{Analyzing the Computational Complexity}\label{sec:complexity}
\subsection{Encoding and decoding complexity}
We first show that the encoding and decoding complexity of Algorithm~\ref{alg:pg} are in scaling-sense smaller than that of the computation at each worker. This is important to ensure that straggling comes from the parallel workers, not the encoder or decoder. The proof of the following Theorem is in appendix Section \ref{app:complexity}.

\begin{theorem}\label{thm:complexity}
The computational complexity for the encoding and decoding is $\Theta (nkN)$, where $N$ is the number of rows in the matrix $\B$ and $k, n$ depend on the number of available workers assuming that each worker performs a single linear inverse computation. For a general dense matrix $\B$, the computational complexity of computing linear inverse at each worker is $\Theta (N^2 l)$, where $l$ is the number of iterations in the specified iterative algorithm. The complexity of encoding and decoding is smaller than that of the computation at each user for large $\B$ matrices (large $N$).
\end{theorem}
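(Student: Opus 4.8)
The plan is to count arithmetic operations separately for the encoder, the decoder, and a single worker, and then compare their growth rates in the large parameter $N$.

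First I would bound the \textbf{encoding} cost. Encoding consists of the two matrix products in \eqref{eqn:code_r} and \eqref{eqn:code_x0}. Each multiplies an $N\times k$ matrix (the stacked inputs $[\rbf_1,\ldots,\rbf_k]$, respectively the stacked initial estimates) by the $k\times n$ generator matrix $\Gbf$. Using standard matrix multiplication, a dense product of an $N\times k$ matrix and a $k\times n$ matrix produces $Nn$ output entries, each costing $\Theta(k)$ multiply-adds, so each product is $\Theta(nkN)$ and encoding is $\Theta(nkN)$.

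Next I would bound the \textbf{decoding} cost in \eqref{eqn:weighted_least_square}. Here it helps to split the computation into (i) forming the decoding matrix $\Lbf=(\Gbf\boldsymbol{\Lambda}^{-1}\Gbf^\top)^{-1}\Gbf\boldsymbol{\Lambda}^{-1}$ and (ii) applying it to $(\Y^{(T_\text{dl})})^\top$. For (i), since $\boldsymbol{\Lambda}$ is diagonal its inverse costs $\Theta(n)$, forming $\Gbf\boldsymbol{\Lambda}^{-1}$ costs $\Theta(kn)$, forming the $k\times k$ matrix $\Gbf\boldsymbol{\Lambda}^{-1}\Gbf^\top$ costs $\Theta(k^2 n)$, and inverting this $k\times k$ matrix costs $\Theta(k^3)$; crucially, all of these are independent of $N$. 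For (ii), applying the $k\times n$ matrix $\Lbf$ to the $n\times N$ matrix $(\Y^{(T_\text{dl})})^\top$ costs $\Theta(knN)$. Because in our regime $n-k<k$ (so $n=\Theta(k)$) and $N$ is the large parameter, the term $\Theta(knN)$ dominates $\Theta(k^2 n)+\Theta(k^3)$, whence decoding is $\Theta(nkN)$ and the combined encoding-plus-decoding cost is $\Theta(nkN)$.

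Then I would bound the \textbf{per-worker} cost. Each worker runs $l$ iterations of \eqref{eqn:power_iter}, $\y_i^{(l+1)}=\B\y_i^{(l)}+\Kbf\s_i$. For a general dense $N\times N$ matrix $\B$, the matrix-vector product $\B\y_i^{(l)}$ touches all $N^2$ entries of $\B$ and hence costs $\Theta(N^2)$, dominating the additive term; over $l$ iterations this gives $\Theta(N^2 l)$. Finally, the comparison is immediate: the ratio of per-worker cost to encoding-plus-decoding cost is $\Theta(N^2 l)/\Theta(nkN)=\Theta\!\left(Nl/(nk)\right)$, which diverges as $N\to\infty$ for fixed $n,k,l$ (indeed already for $l\ge 1$), so the encoding/decoding overhead is asymptotically negligible relative to the work at each worker for large $\B$. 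This is essentially an exercise in accounting; the only point requiring genuine care is step (i), namely verifying that the $N$-independent cost of building $\Lbf$ (in particular the $\Theta(k^3)$ inverse) is lower order, which holds precisely because $N$ is the dominant scale and $n=\Theta(k)$.
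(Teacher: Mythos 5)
Your proposal is correct and follows essentially the same operation-counting argument as the paper's proof: $\Theta(nkN)$ for the encoding/decoding matrix--matrix products, $N$-independent cost (at most $\Theta(k^2n)+\Theta(k^3)$) for forming the decoding matrix, and $\Theta(N^2 l)$ per worker for dense $\B$, with the comparison following from $N$ being the dominant scale. The only detail the paper adds that you omit is an explicit remark that the cost of estimating $\boldsymbol{\Lambda}$ is excluded as a one-time preprocessing step, which does not affect the validity of your argument.
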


The complexity of encoding and decoding can be further reduced if the Coppersmith-Winograd algorithm for matrix-matrix multiplication is used \cite{coppersmith1990matrix}. In our experiment on the Google Plus graph for computing PageRank, the computation time at each worker is $30$ seconds and the encoding and decoding time at the central controller is about 1 second.

\subsection{Computing the Matrix \texorpdfstring{$\boldsymbol{\Lambda}$}{Lg}}\label{sec:Lambda}
One difficulty in our coded linear inverse algorithm is computing the entries $\trace(\C(l))=\trace\left(\B^l\C_E(\B^\top)^l\right)$ in the weight matrix $\boldsymbol{\Lambda}$ in \eqref{eqn:Gamma}, which involves a number of matrix-matrix multiplications. One way to side-step this problem is to estimate $\trace(\C(l))$ using Monte Carlo simulations. Concretely, choose $m$ i.i.d. $N$-variate random vectors $\a_1,\a_2,\ldots\a_m$ that are distributed the same as the initial error $\e^{(0)}$ after Assumption~\ref{ass:general}. Then, compute the statistic
\begin{equation}
\hat{\gamma}_{m,l}=\frac{1}{m}\sum_{j=1}^m \twonorm{\B^l\a_j}^2,l=1,2,\ldots T_u,
\end{equation}
where $T_u$ is an upper bound of the number of iterations in a practical iterative computing algorithm. The lemma in appendix Section \ref{app:computing_Lambda} shows that $\hat{\gamma}_{m,l}$ is an unbiased and asymptotically consistent estimator of $\trace(\C(l))$ for all $l$. In our experiments on PageRank, for each graph we choose $m= 10$ and estimate $\trace(\C(l))$ before implementing the coded linear inverse algorithm (in this case it is the coded power-iteration algorithm), which has the same complexity as solving $m=10$ extra linear inverse problems.

For the correlated case, we have to compute a slightly modified weighting matrix denoted by $\tilde{\boldsymbol\Lambda}$ in \eqref{eqn:tildeLambda}. The only change is that we have to compute $\Psi_{i,j}$ in \eqref{eqn:tildeLambda_ij} for all possible $l_i,l_j$ such that $1\le l_i,l_j\le T_u$. We also choose $m$ i.i.d. $N$-variate random vectors $\mathbf{b}_1,\mathbf{b}_2,\ldots \mathbf{b}_m$ that are distributed with mean $\0_N$ and covariance $\C_\text{cor}$, which is the same as the correlation part according to Assumption~\ref{ass:stationary}. Then, compute the statistic
\begin{equation}
\hat{\gamma}_{m,(l_i,l_j)}=\frac{1}{m}\sum_{u=1}^m \mathbf{b}_u\B^{l_j}\B^{l_i}\mathbf{b}_u,1\le l_i,l_j\le T_u.
\end{equation}
Then, the lemma in appendix Section \ref{app:computing_Lambda} shows that $\hat{\gamma}_{m,(l_i,l_j)}$ is also an unbiased and asymptotically consistent estimator of $\Psi_{i,j}$.

\subsection{Analysis on the cost of communication versus computation}\label{sec:com_vs_comp}

In this work, we focus on optimizing the computation cost. However, what if the computation cost is small compared to the overall cost, including the communication cost? If this is true, optimizing the computation cost is not very useful. In what follows, we show that the computation cost is larger than the communication cost in the scaling-sense.

\begin{theorem}\label{thm:communication_cost}
The ratio between computation and communication at the $i$-th worker is $\text{COST}_\text{computation}/\text{COST}_\text{communication}= \Theta(l_i\bar{d})$ operations per integer, where $l_i$ is the number of iterations at the $i$-th worker, and $\bar{d}$ is the average number of non-zeros in each row of the $\B$ matrix. See appendix Section~\ref{app:communication_cost_proof} for a complete proof.
\end{theorem}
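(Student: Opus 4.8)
The plan is to count separately the number of arithmetic operations performed by the $l_i$ local iterations (the computation cost) and the number of scalars that worker $i$ must send and receive (the communication cost), and then to form their ratio. First I would pin down the communication: by \eqref{eqn:code_r}--\eqref{eqn:code_x0} each worker is handed its encoded input $\s_i$ and encoded initial estimate $\y_i^{(0)}$, and by \eqref{eqn:pg_results} it returns the vector $\y_i^{(l_i)}$; each of these is an $N$-vector, so $\text{COST}_\text{communication}=\Theta(N)$ scalars. The one modelling choice I would make explicit here is that the matrices $\B$ and $\Kbf$ are the same for every worker and every problem instance, hence are loaded once per worker rather than re-transmitted per problem, and so do not count toward the per-instance communication (equivalently, $\Kbf\s_i$ is computed once and absorbed into the constant additive term of the iteration).

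Next I would count the computation. Each iteration $\y^{(l+1)}=\B\y^{(l)}+\Kbf\s_i$ in \eqref{eqn:power_iter} is dominated by the sparse matrix--vector product $\B\y^{(l)}$. Since $\B$ has $N$ rows and on average $\bar d$ nonzeros per row, this product takes $\Theta(N\bar d)$ multiply--adds, while the subsequent vector addition costs only $\Theta(N)$, which is absorbed into $\Theta(N\bar d)$ as long as $\bar d\ge 1$. Summing over the $l_i$ iterations then gives $\text{COST}_\text{computation}=\Theta(l_i N\bar d)$.

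Forming the ratio would finish the argument:
\begin{equation}
\frac{\text{COST}_\text{computation}}{\text{COST}_\text{communication}}=\frac{\Theta(l_i N\bar d)}{\Theta(N)}=\Theta(l_i\bar d).
\end{equation}
I expect the only delicate point to be the accounting convention rather than any calculation: I must argue that the shared, reused matrix $\B$ is not recharged against each per-problem communication budget, and I must use $\bar d\ge 1$ (and $l_i\ge 1$) so that the sparse matrix--vector product, not the $\Theta(N)$ vector addition, governs the per-iteration cost. Once those conventions are stated, the two $\Theta$ counts and their quotient are routine.
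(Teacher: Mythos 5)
Your proposal is correct and follows essentially the same route as the paper's appendix proof: both assume $\B$ and $\Kbf$ are pre-stored at each worker so that communication is only the $\Theta(N)$ input/output vectors, count $\Theta(N\bar d)$ (i.e., the number of non-zeros of $\B$) operations per iteration for the sparse matrix--vector product, and take the quotient to get $\Theta(l_i\bar d)$. Your explicit remarks on the accounting conventions ($\B$ not recharged per problem, $\bar d\ge 1$) only make the same argument slightly more careful than the paper's.
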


\section{Experiments and Simulations}
\subsection{Experiments on Real Systems}\label{sec:experimenl}

\begin{figure}
  \centering
  \includegraphics[scale=0.42]{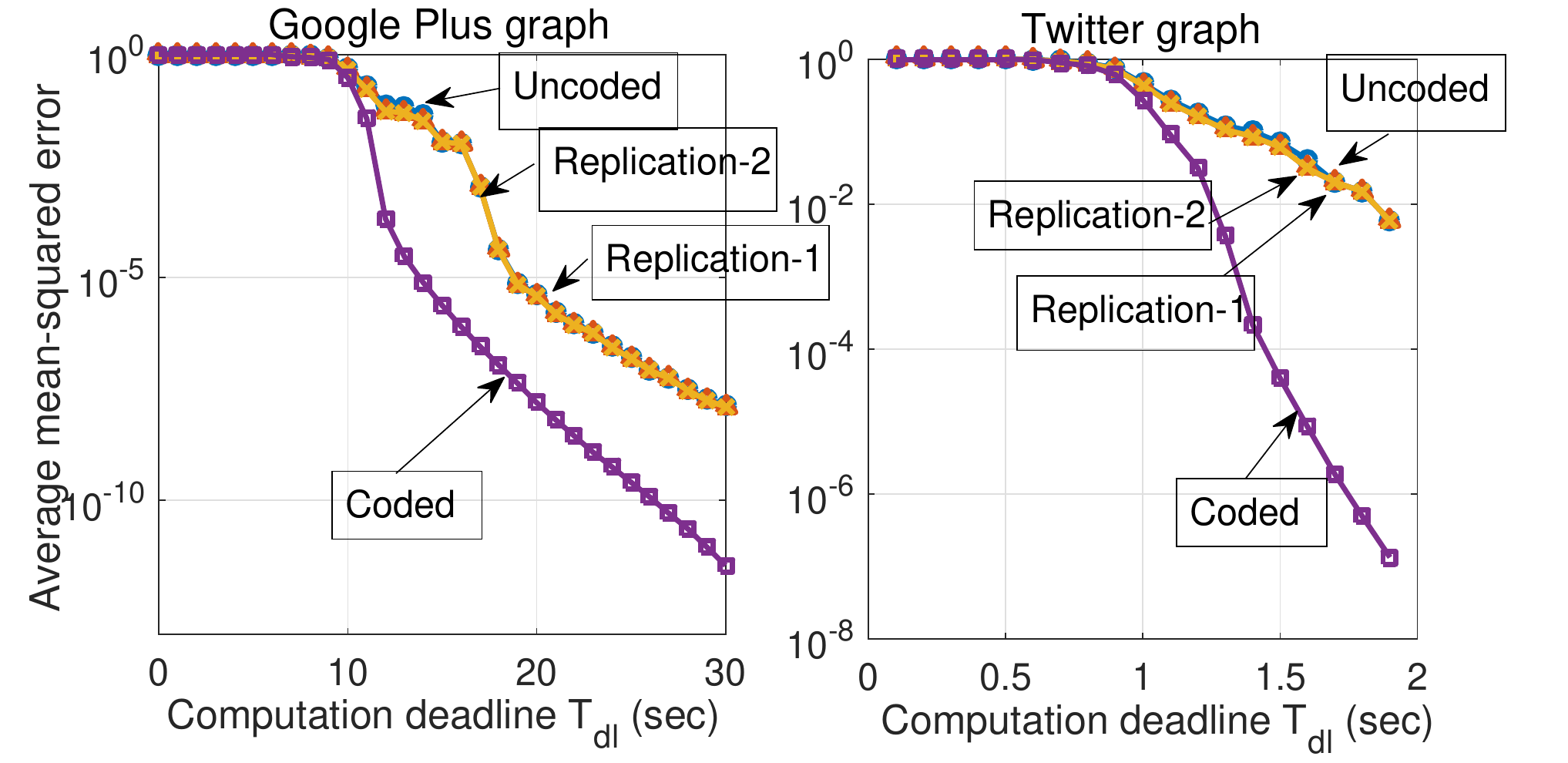}\\
  \caption{Experimentally computed overall mean squared error of uncoded, replication-based and coded personalized PageRank on the Twitter graph and Google Plus graph on a cluseter with 120 workers. The ratio of MSE for repetition-based schemes and coded PageRank increase as $T_{dl}$ increases.}\label{fig:twitter}
\end{figure}

\begin{figure}
  \centering
  \includegraphics[scale=0.45]{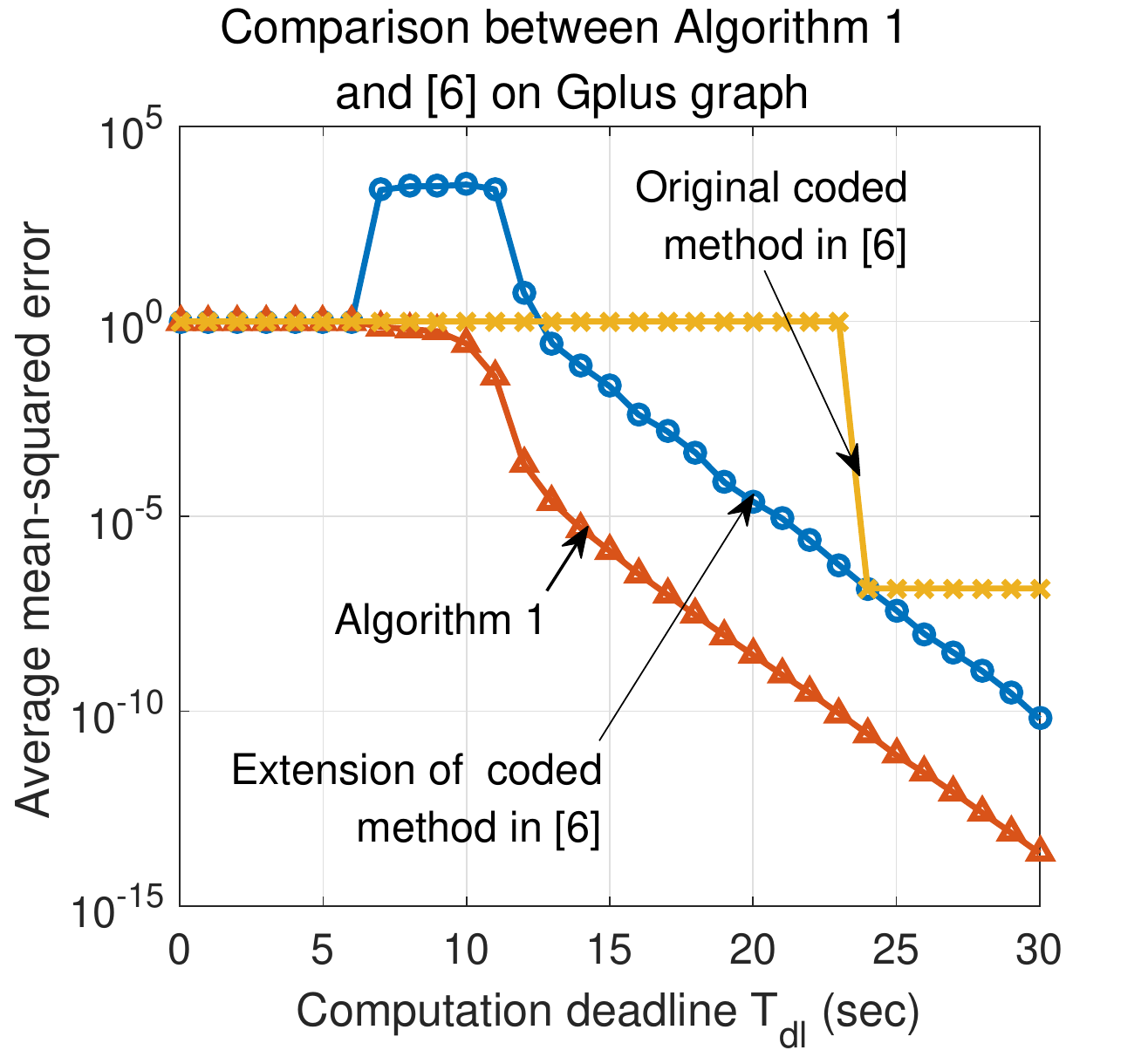}\\
  \caption{Experimental comparison between an extended version of the algorithm in \cite{lee2016speeding} and Algorithm~\ref{alg:pg} on the Google Plus graph. The figure shows that naively extending the general coded computing in \cite{lee2016speeding} using matrix inverse increases the computation error.}\label{fig:lee}
\end{figure}

\begin{figure}
  \centering
  \includegraphics[scale=0.55]{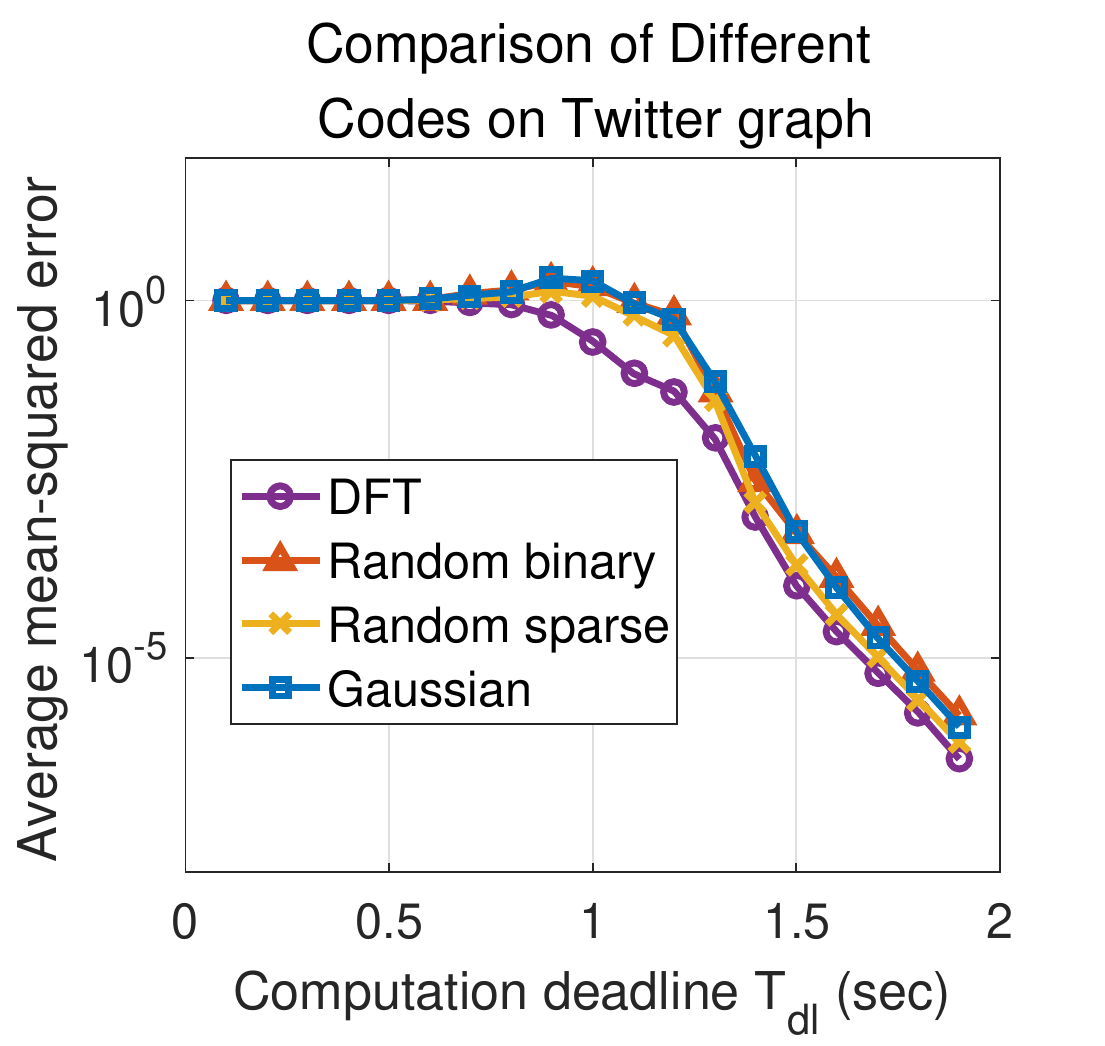}\\
  \caption{Experimental comparison of four different codes on the Twitter graph. In this experiment the DFT-code out-performs the other candidates in mean squared error.}\label{fig:diff_codes}
\end{figure}

We test the performance of the coded linear inverse algorithm for the PageRank problem on the Twitter graph and the Google Plus graph from the SNAP datasets \cite{leskovec2012learning}. The Twitter graph has 81,306 nodes and 1,768,149 edges, and the Google Plus graph has 107,614 nodes and 13,673,453 edges. We use the HT-condor framework in a cluster to conduct the experiments. The task is to solve $k=100$ personalized PageRank problems in parallel using $n=120$ workers. The uncoded algorithm picks the first $k$ workers and uses one worker for each PageRank problem. The two replication-based schemes replicate the computation of the first $n-k$ PageRank problems in the extra $n-k$ workers (see Section~\ref{sec:comp_with_rep}). The coded PageRank uses $n$ workers to solve these $k=100$ equations using Algorithm~\ref{alg:pg}. We use a $(120,100)$ code where the generator matrix is the submatrix composed of the first 100 rows in a $120\times 120$ DFT matrix. The computation results are shown in Fig.~\ref{fig:twitter}. Note that the two graphs of different sizes so the computation in the two experiments takes different time. From Fig.~\ref{fig:twitter}, we can see that the mean-squared error of uncoded and replication-based schemes is larger than that of coded computation by a factor of $10^4$.

We also compare Algorithm~\ref{alg:pg} with the coded computing algorithm proposed in \cite{lee2016speeding}. The original algorithm proposed in \cite{lee2016speeding} is not designed for iterative algorithms, but it has a natural extension to the case of computing before a deadline. Fig. \ref{fig:lee} shows the comparison between the performance of Algorithm~\ref{alg:pg} and this extension of the algorithm from \cite{lee2016speeding}. This extension uses the results from the $k$ fastest workers to retrieve the required PageRank solutions. More concretely, suppose $\S\subset [n]$ is the index set of the $k$ fastest workers. Then, this extension retrieves the solutions to the original $k$ PageRank problems by solving the following equation:
\begin{equation}
\Y_\S=[\x_1^*,\x_2^*,\ldots,\x_k^*]\cdot \Gbf_{\S},
\end{equation}
where $\Y_\S$ is the computation results obtained from the fastest $k$ workers and $\Gbf_{\S}$ is the $k\times k$ submatrix composed of the columns in the generator matrix $\Gbf$ with indexes in $\S$. However, since there is some remaining error at each worker (i.e., the computation results $\Y_\S$ have not converged yet), when conducting the matrix-inverse-based decoding from \cite{lee2016speeding}, the error is magnified due to the large condition number of $\Gbf_{\S}$. This is why the algorithm in \cite{lee2016speeding} cannot be naively applied in the coded PageRank problem.

Finally, we test Algorithm~\ref{alg:pg_stationary} for correlated PageRank queries that are distributed with the stationary covariance matrix in the form of \eqref{eqn:stationary1} and \eqref{eqn:stationary2}. Note that the only change to be made in this case is on the $\boldsymbol\Lambda$ matrix (see equation \eqref{eqn:tildeLambda}). The other settings are exactly the same as the experiments that are shown in Figure~\ref{fig:twitter}. The results on the Twitter social graph are shown in Figure~\ref{fig:diff_codes}. In this case, we also have to compute
\begin{figure}
  \centering
  \includegraphics[scale=0.45]{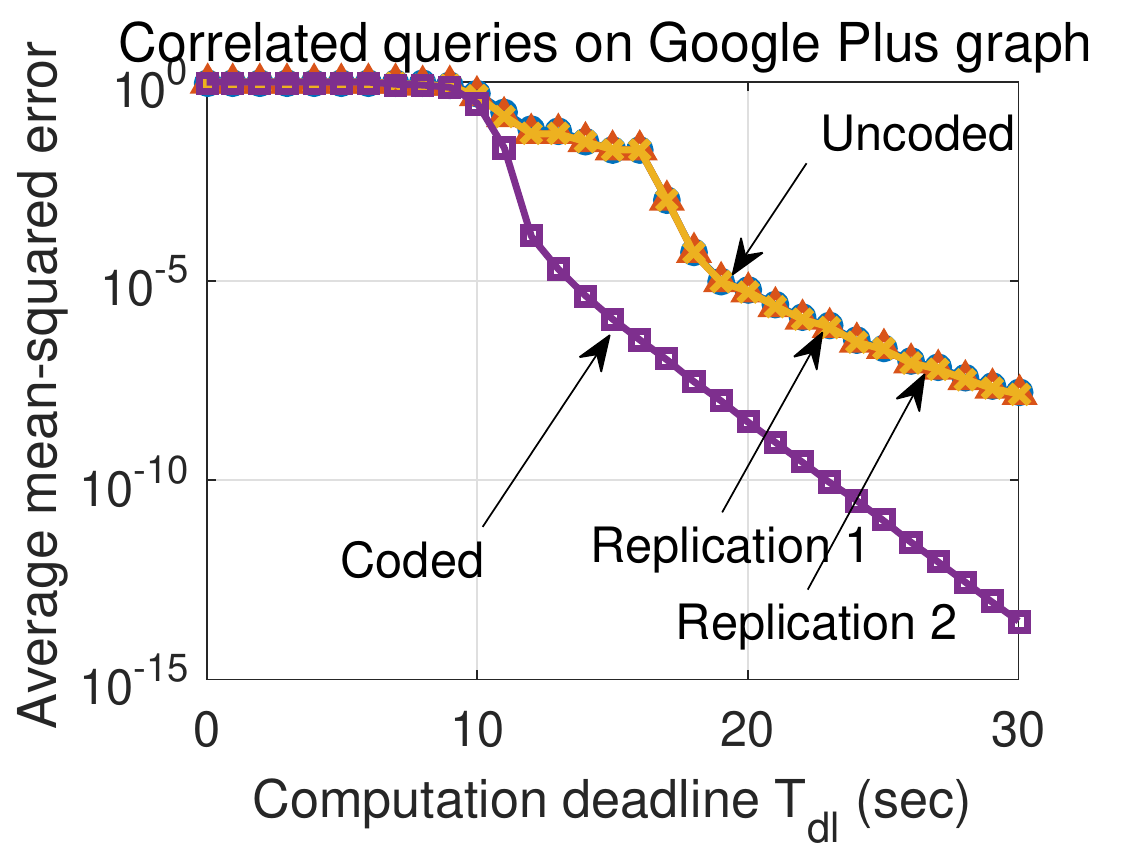}\\
  \caption{Experimentally computed overall mean squared error of uncoded, replication-based and coded personalized PageRank on the Twitter graph on a cluster with 120 workers. The queries are generated using the model from the stationary model in Assumption~\ref{ass:stationary}.}\label{fig:correlated}
\end{figure}

One question remains: what is the best code design for the coded linear inverse algorithm? Although we do not have a concrete answer to this question, we have tested different codes (with different generator matrices $\mathbf{G}$) in the Twitter graph experiment, all using Algorithm~\ref{alg:pg}. The results are shown in Fig.~\ref{fig:lee} (right). The generator matrix used for the ``binary'' curve has i.i.d. binary entries in $\{-1,1\}$. The generator matrix used for the ``sparse'' curve has random binary sparse entries. The generator matrix for the ``Gaussian'' curve has i.i.d. standard Gaussian entries. In this experiment, the DFT-code performs the best. However, finding the best code in general is a meaningful future work.

\subsection{Simulations}\label{app:Simulations}

\begin{figure}
  \centering
  \includegraphics[scale=0.3]{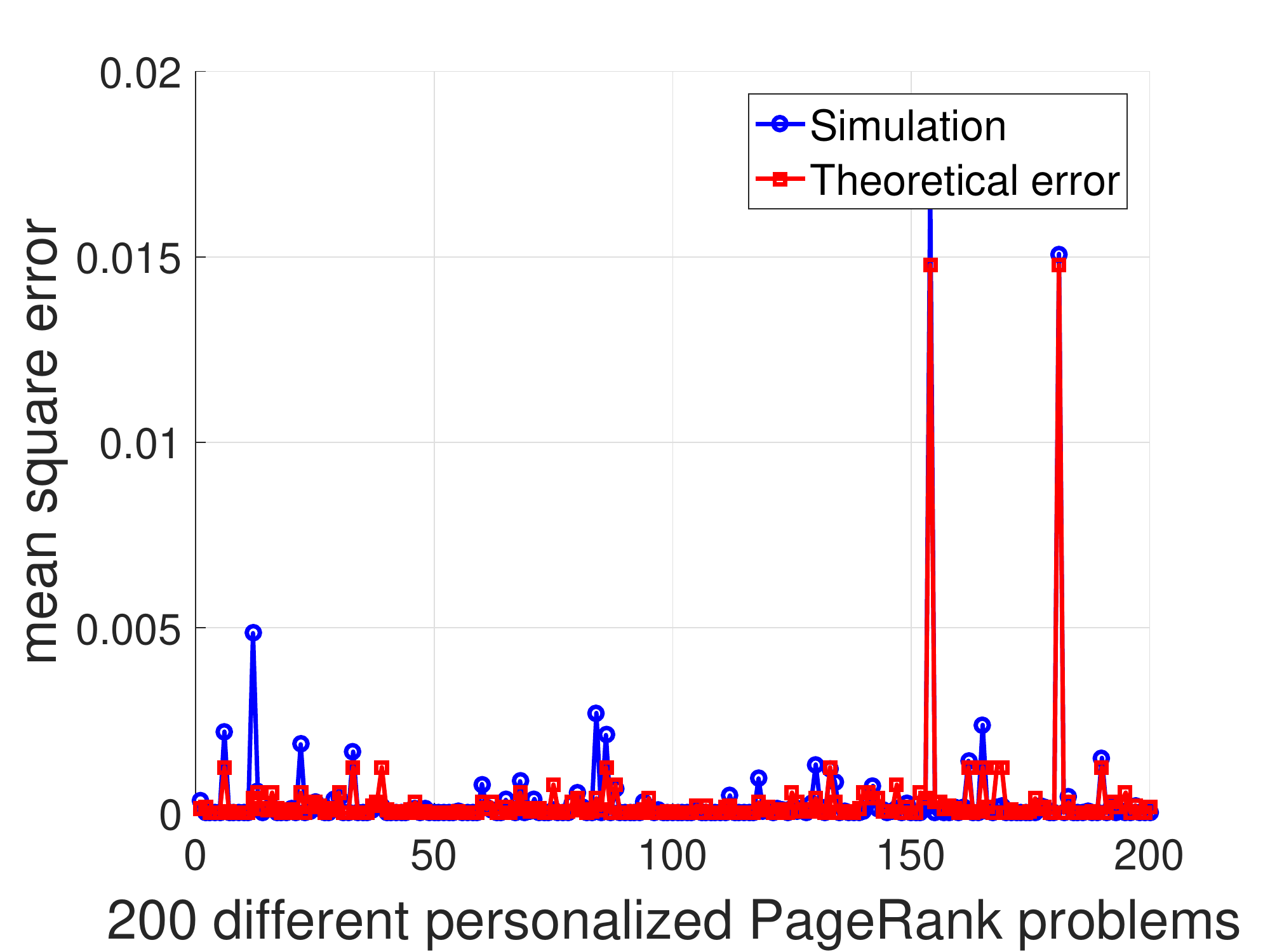}\\
  \caption{This simulation result shows the mean squared error of the computation results for $k=200$ different problems in the uncoded scheme.}\label{fig:error_uncoded}
\end{figure}

\begin{figure}
  \centering
  \includegraphics[scale=0.3]{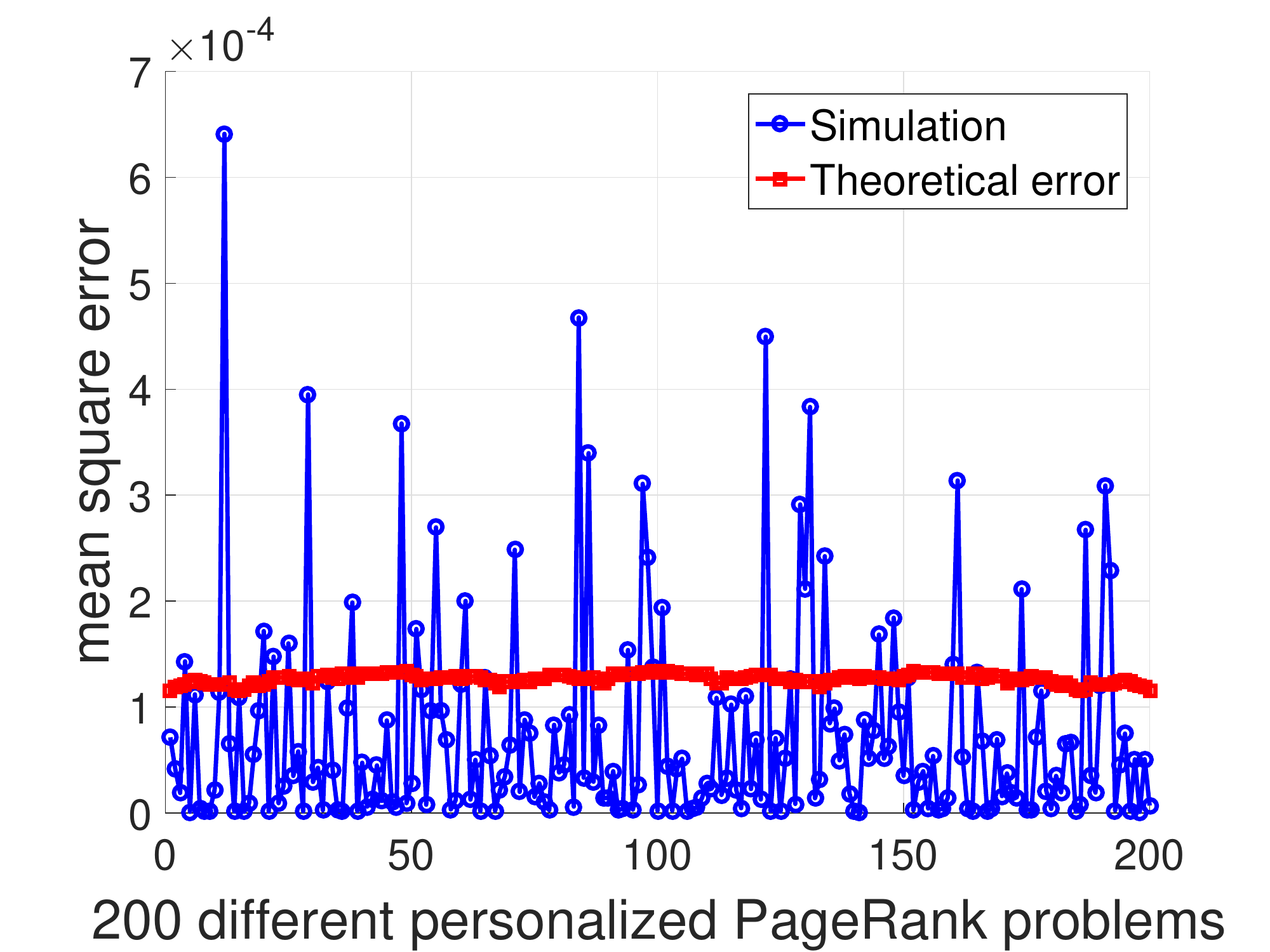}\\
  \caption{This simulation result shows the mean squared error of the computation results for $k=200$ different problems in the coded scheme.}\label{fig:error_coded}
\end{figure}

\begin{figure}
  \centering
  \includegraphics[scale=0.3]{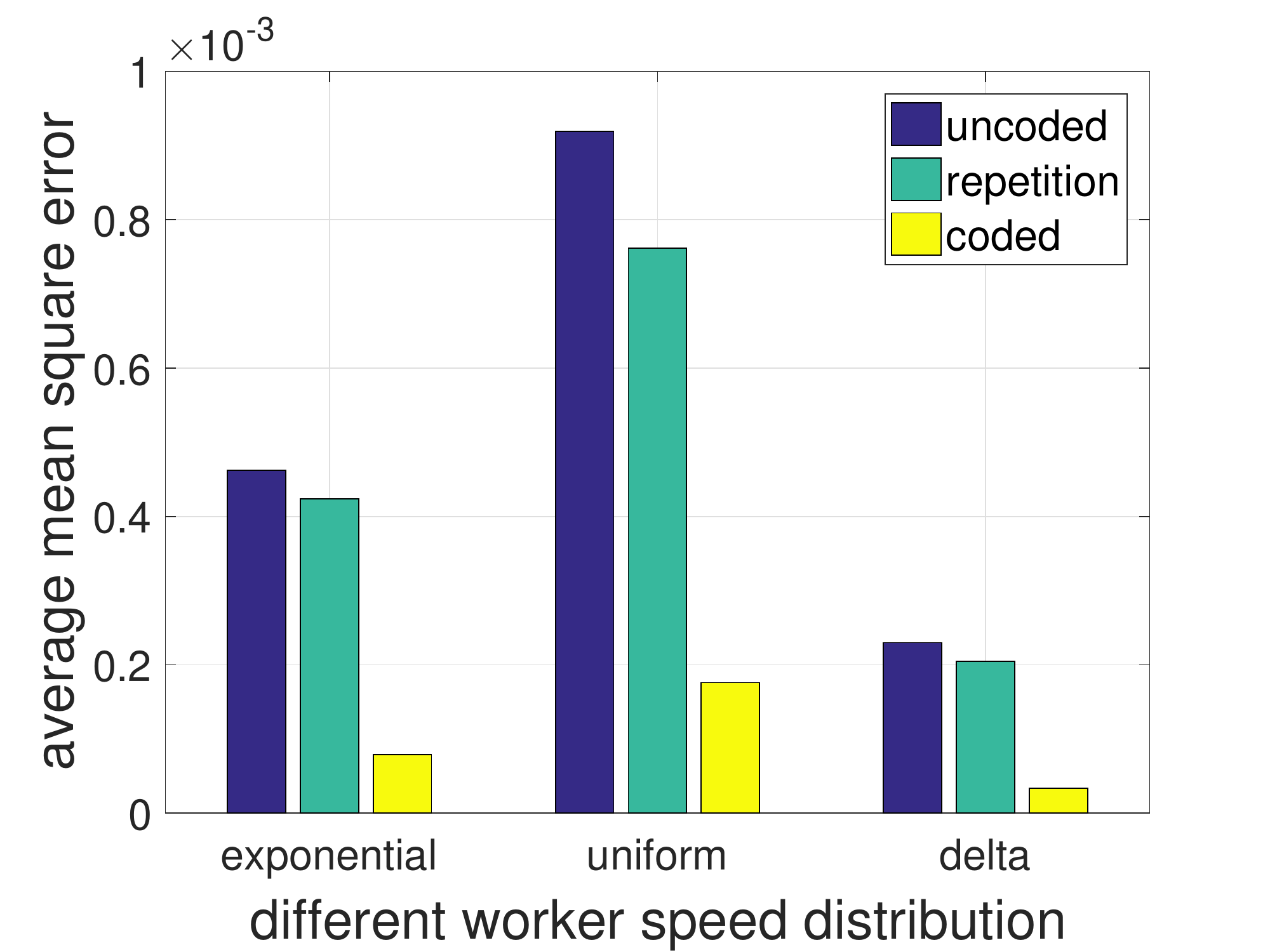}\\
  \caption{This figure shows the mean squared error of uncoded, replication-based and coded PageRank algorithms.}\label{fig:comparison}
\end{figure}

We also test the coded PageRank algorithm in a simulated setup with randomly generated graphs and worker response times. These simulations help us understand looseness in our theoretical bounding techniques. They can also test the performance of the coded Algorithm for different distributions. We simulate Algorithm~\ref{alg:pg} on a randomly generated Erd\"os-R\'enyi graph with $N=500$ nodes and connection probability 0.1. The number of workers $n$ is set to be 240 and the number of PageRank vectors $k$ is set to be 200. We use the first $k=200$ rows of a $240\times 240$ DFT-matrix as the $\Gbf$ matrix in the coded PageRank algorithm in Section \ref{sec:alg}. In Fig. \ref{fig:error_uncoded} and Fig. \ref{fig:error_coded}, we show the simulation result on the mean squared error of all $k=200$ PageRank vectors in both uncoded and coded PageRank, which are respectively shown in Fig.~\ref{fig:error_uncoded} and Fig. \ref{fig:error_coded}. The x-axis represents the computation results for different PageRank problems and the y-axis represents the corresponding mean-squared error. It can be seen that in the uncoded PageRank, some of the PageRank vectors have much higher error than the remaining ones (the blue spikes in Fig. \ref{fig:error_uncoded}), because these are the PageRank vectors returned by the slow workers in the simulation. However, in coded PageRank, the peak-to-average ratio of mean squared error is much lower than in the uncoded PageRank. This means that using coding, we are able to mitigate the straggler effect and achieve more uniform performance across different PageRank computations. From a practical perspective, this means that we can provide fairness to different PageRank queries.

We compare the average mean-squared error of uncoded, replication-based and coded PageRank algorithms in Fig. \ref{fig:comparison}. The first simulation compares these three algorithms when the processing time of one iteration of PageRank computation is exponentially distributed, and the second and third when the number of iterations is uniformly distributed in the range from 1 to 20 and Bernoulli distributed at two points 5 and 20 (which we call ``delta'' distribution). It can be seen that in all three different types of distributions, coded PageRank beats the other two algorithms.

\subsection{Validating Assumption~\ref{ass:speed_constanl} using Experiments}\label{app:speed_constant}
\begin{figure}
  \centering
  \includegraphics[scale=0.35]{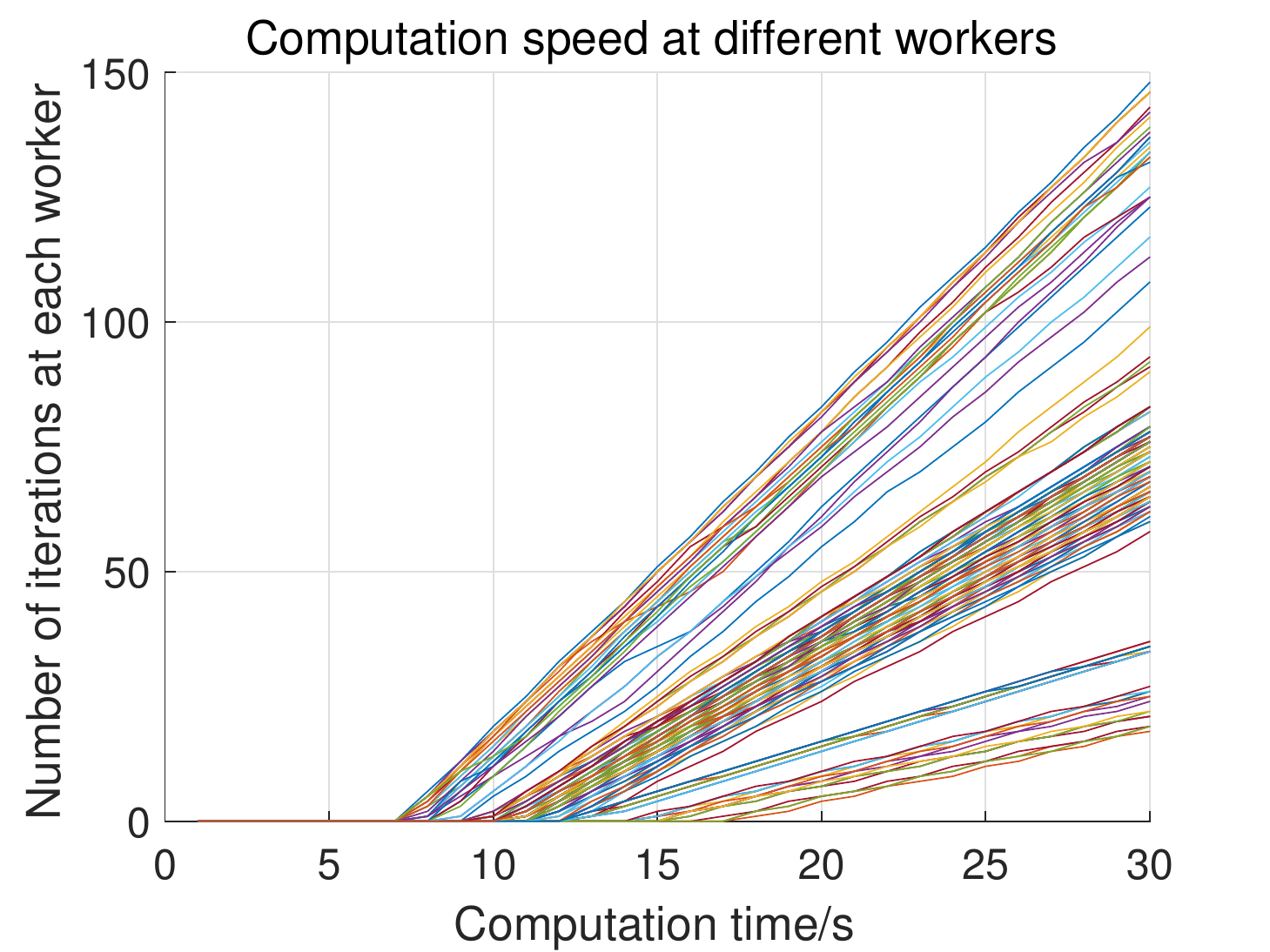}\\
  \caption{This figure shows the number of PageRank power iterations completed at different workers in 30 seconds in the Google Plus experiment.}\label{fig:various_speed}
\end{figure}

Here we provide an experiment that validates Assumption~\ref{ass:speed_constanl} in Section~\ref{sec:infinite_time_analysis}, i.e., the computation time of one power-iteration at the same worker is a constant over time. In Fig.~\ref{fig:various_speed}, we plot the number of power-iterations completed at different workers versus computation time. We can see that the computation speed is indeed constant, which means that Assumption~\ref{ass:speed_constanl} is valid\footnote{In this work, we assume that the statistics of speed distributions are unknown. However, from Fig.~\ref{fig:various_speed}, it may seem that the speeds at different workers are quite predictable. In fact, each time when scheduling tasks to the pool of parallel workers, the central controller assign the tasks through virtual machines instead of actual physical addresses. Therefore, the machines assigned to the same task can be different, and this assignment is transparent to the end-users. Thus, the statistics of speed distributions are generally unobtainable.}. Note that there is a non-zero time cost for loading the graph at each worker. This amount of time does not exist if the network graph is already loaded in the cache of distributed workers for online queries.

\section{Conclusions}

Coded computing for distributed machine learning and data processing systems affected by the straggler effect has become an active research area in recent years \cite{huang2012codes,lee2016speeding,tandon2016gradient,dutta2016short}. By studying coding for iterative algorithms designed for distributed inverse problems, we aim to introduce new applications and analytical tools to the problem of coded computing with stragglers. Since these iterative algorithms designed for inverse problems commonly have decreasing error with time, the partial computation results at stragglers can provide useful information for the distributed computing of the final outputs. By incorporating these partial results, we expect to have improved error reduction compared to treating the results as erasures. Note that this is unlike recent works on coding for multi-stage computing problems \cite{Yang_TIT_17,yang2016coding,yang2016fault}, where the computation error can accumulate with time and coding has to be applied repeatedly to suppress this error accumulation. Our next goal is to study the problem of designing and applying coding techniques to distributed computing under different situations of error accumulation/decay, especially in iterative and multi-stage computations. The exponential decay of error considered in this paper is a special case of the general problem mentioned above.

\section{Appendices}


\subsection{Proof of Theorem~\ref{thm:upper_bound}}\label{app:upper_bound}
We first introduce some notation and preliminary properties that we will use in this proof. Denote by $\text{vec}(\A)$ the vector that is composed of the concatenation of all columns in a matrix $\A$. For example, the vectorization of $A=\left[\begin{matrix}
1 & 2 & 3\\
4 & 5 & 6
\end{matrix}\right]$ is the column vector $\text{vec}(\A)=[1,4,2,5,3,6]^\top$. We will also use the Kronecker product defined as
\begin{equation}
\A_{m\times n}\otimes\B =\left[\begin{matrix}
a_{11}\B & a_{12}\B & \ldots  & a_{1n}\B\\
\vdots   &  \vdots  &  \ddots & \vdots \\
a_{m1}\B & a_{m2}\B & \ldots  & a_{mn}\B.
\end{matrix}\right]
\end{equation}
We now state some properties of the vectorization and Kronecker product.
\begin{lemma}\label{lmm_vec1}
Property 1: if $\A=\B\C$, then
\begin{equation}
\text{vec}(\A)=(\C\otimes \I_N)\text{vec}(\B).
\end{equation}
Property 2: vectorization does not change the Frobenius norm, i.e.,
\begin{equation}
\twonorm{\A}=\twonorm{\text{vec}(\A)}.
\end{equation}
Property 3: The following mixed-product property holds
\begin{equation}
(\A\otimes \B)(\C\otimes \D)=(\A\cdot\C)\otimes(\B\cdot\D),
\end{equation}
if one can form the matrices $\A\C$ and $\B\D$.\\
\noindent Property 4: If $\A$ and $\B$ are both positive semi-definite, $\A\otimes \B$ is also positive semi-definite.\\
\noindent Property 5: Suppose $\C$ is positive semi-definite and $\A\preceq \B$. Then,
\begin{equation}
\A\otimes\C\preceq \B\otimes \C.
\end{equation}
Property 6: (commutative property) Suppose $\A_{m\times n}$ and $\B_{p\times q}$ are two matrices. Then,
\begin{equation}
(\A_{m\times n}\otimes \I_p)\cdot(\I_n\otimes\B_{p\times q})=(\I_m\otimes \B_{p\times q})\cdot (\A_{m\times n}\otimes \I_q).
\end{equation}
Property 7: Suppose $\A$ is an $nN\times nN$ matrix that can be written as
\begin{equation}
\A_{nN\times nN}=\left[\begin{matrix}
\A_{11}&\A_{12}&\ldots &\A_{1n}\\
\A_{21}&\A_{22}&\ldots &\A_{2n}\\
\vdots & \vdots& \ddots & \vdots\\
\A_{n1}&\A_{n2}&\ldots &\A_{nn}
\end{matrix}\right],
\end{equation}
where each $\A_{ij}$ is a square matrix of size $N\times N$. Then, for an arbitrary matrix $\Lbf$ of size $k\times n$,
\begin{equation}
\begin{split}
&\trace\left[(\Lbf\otimes \I_N)\cdot \A\cdot (\Lbf\otimes \I_N)^\top\right]\\
=&\trace\left[\Lbf\cdot\left[\begin{matrix}
\trace[\A_{11}]&\ldots &\trace[\A_{1n}]\\
\vdots & \ddots & \vdots\\
\trace[\A_{n1}]&\ldots &\trace[\A_{nn}]
\end{matrix}\right]\cdot\Lbf^\top\right].
\end{split}
\end{equation}
\end{lemma}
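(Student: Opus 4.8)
All seven statements are standard facts about the column-stacking vectorization and the Kronecker product, so the plan is to establish them in dependency order, proving the foundational identities by direct entry- or block-wise computation and then obtaining the rest as corollaries. I would first dispose of Properties~1 and~2, which are immediate. For Property~2 the Frobenius norm is, by definition, the square root of the sum of squares of all entries of $\A$, and since $\text{vec}(\A)$ merely lists those same entries column by column, $\twonorm{\A}=\twonorm{\text{vec}(\A)}$ is a restatement of that definition. For Property~1 I would write the $j$-th column of $\A=\B\C$ as the linear combination $\sum_i C_{ij}\B_{:,i}$ of the columns of $\B$ and match it against the $j$-th $N$-dimensional block of the right-hand side, reducing the claim to the standard vectorization identity $\text{vec}(\B\C)=(\C^\top\otimes\I_N)\text{vec}(\B)$.

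The cornerstone is Property~3 (the mixed-product rule), which I would prove by comparing the $(i,j)$ block of each side: the $(i,j)$ block of $(\A\otimes\B)(\C\otimes\D)$ is $\sum_l (a_{il}\B)(c_{lj}\D)=\left(\sum_l a_{il}c_{lj}\right)\B\D=(\A\C)_{ij}\,\B\D$, which is exactly the $(i,j)$ block of $(\A\C)\otimes(\B\D)$. With Property~3 in hand, Properties~4--6 follow quickly. For Property~6 I would simply compute $(\A\otimes\I_p)(\I_n\otimes\B)=(\A\I_n)\otimes(\I_p\B)=\A\otimes\B$ and, symmetrically, $(\I_m\otimes\B)(\A\otimes\I_q)=\A\otimes\B$, so the two sides coincide. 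For Property~4 I would take eigendecompositions $\A=\sum_i\lambda_i\u_i\u_i^\top$ and $\B=\sum_j\mu_j\vv_j\vv_j^\top$ with $\lambda_i,\mu_j\ge0$ and use the mixed-product rule to write $\A\otimes\B=\sum_{i,j}\lambda_i\mu_j(\u_i\otimes\vv_j)(\u_i\otimes\vv_j)^\top$, a nonnegative combination of rank-one positive semidefinite terms. Property~5 is then a one-line consequence: $\A\preceq\B$ gives $\B-\A\succeq0$, and bilinearity of the Kronecker product together with Property~4 yield $(\B-\A)\otimes\C=\B\otimes\C-\A\otimes\C\succeq0$.

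The real work is Property~7, the identity that lets the main theorem collapse an $nN$-dimensional Kronecker-lifted trace back to a trace over the $n\times n$ matrix of block traces. I would prove it by direct block bookkeeping: writing $\Lbf=[\ell_{ab}]$, the $(a,a')$ block of $(\Lbf\otimes\I_N)\A(\Lbf\otimes\I_N)^\top$ works out to $\sum_{b,j}\ell_{ab}\ell_{a'j}\A_{bj}$, and taking the trace of the whole matrix as the sum of the traces of its diagonal blocks gives $\sum_a\sum_{b,j}\ell_{ab}\ell_{aj}\trace[\A_{bj}]$; this matches $\trace[\Lbf\,T\,\Lbf^\top]$ with $T_{bj}=\trace[\A_{bj}]$ once the latter is expanded entrywise. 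The main obstacle throughout is not conceptual depth but index discipline: keeping the block indices $(a,a')$ of the outer $k\times k$ structure separate from the contraction indices $(b,j)$ over the inner $N\times N$ blocks, and invoking at the right moment the two facts that make everything go through---that the trace of a block matrix equals the sum of the traces of its diagonal blocks, and that $\trace$ commutes with the scalar weights $\ell_{ab}\ell_{aj}$. Once these are handled cleanly, Property~7 drops out, and with it the reduction used in the proof of Theorem~\ref{thm:upper_bound}.
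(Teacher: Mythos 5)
Your proof is correct and follows essentially the same route as the paper's: Properties 1, 2, 5, and 6 via the same one-line arguments, Property 7 via the identical block-index computation (diagonal blocks of $(\Lbf\otimes\I_N)\A(\Lbf\otimes\I_N)^\top$, then trace of a block matrix equals the sum of traces of its diagonal blocks), and Properties 3 and 4 by direct self-contained arguments where the paper instead cites standard Kronecker-product facts from a reference. One small point in your favor: the identity you actually establish for Property 1 is $\text{vec}(\B\C)=(\C^\top\otimes\I_N)\text{vec}(\B)$, which is the correct, dimensionally consistent form and the one the paper really invokes in the proof of Theorem 3.1, whereas the lemma as printed omits the transpose on $\C$.
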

\begin{proof}
See appendix Section~\ref{app:vec}.
\end{proof}

\subsubsection{Computing the explicit form of the error matrix \texorpdfstring{$\Ebf$}{Lg}}
From \eqref{eqn:code_r}, we have encoded the input $\rbf_i$ to the linear inverse problem in the following way:
\begin{equation}\label{eqn:unbiased_der1}
[\s_1,\s_2,\ldots,\s_n]=[\rbf_1,\rbf_2,\ldots,\rbf_k]\cdot\Gbf.
\end{equation}
Since $\x_i^*$ is the solution to the linear inverse problem, we have
\begin{equation}\label{eqn:xnst}
\x_i^*=\C_\text{inv} \rbf_i,
\end{equation}
where $\C_\text{inv}$ is either the direct inverse in \eqref{eqn:general_inverse_solution} for square linear inverse problems or the least-square solution in \eqref{eqn:general_l2_solution} for non-square inverse problems. Define $\y_i^*$ as the solution of the inverse problem with the encoded input $\s_i$. Then, we also have
\begin{equation}\label{eqn:ynst}
\y_i^*=\C_\text{inv} \s_i.
\end{equation}
Left-multiplying $\C_\text{inv}$ on both LHS and RHS of \eqref{eqn:unbiased_der1} and plugging in \eqref{eqn:xnst} and \eqref{eqn:ynst}, we have
\begin{equation}\label{eqn:code_star}
[\y_1^*,\y_2^*,\ldots,\y_n^*]=[\x_1^*,\x_2^*,\ldots,\x_k^*]\cdot\Gbf=\X^*\cdot\Gbf.
\end{equation}

Define $\ep_i^{(l)}=\y_i^{(l_i)}-\y_i^*$, which is the remaining error at the $i$-th worker after $l_i$ iterations. From the explicit form \eqref{eqn:error_iter_A} of the remaining error of the executed iterative algorithm, we have
\begin{equation}\label{eqn:decomp}
\y_i^{(l_i)}=\y_i^*+\ep_i^{(l_i)}=\y_i^*+\mathbf{B}^{l_i}\ep_i^{(0)}.
\end{equation}

Therefore,from the definition of $\Y^{(T_\text{dl})}$ (see \eqref{eqn:pg_results}) and equation \eqref{eqn:code_star} and \eqref{eqn:decomp},
\begin{equation}
\begin{split}
\Y^{(T_\text{dl})}=&[\y_1^{(l_1)},\y_2^{(l_2)},\ldots,\y_n^{(l_n)}]\\
=&[\y_1^*,\y_2^*,\ldots,\y_n^*]+[\ep_1^{(l_1)},\ep_2^{(l_2)},\ldots,\ep_n^{(l_n)}]\\
=&\X^*\cdot\Gbf+[\mathbf{B}^{l_1}\ep_1^{(0)},\ldots,\mathbf{B}^{l_n}\ep_n^{(0)}].
\end{split}
\end{equation}
Plugging in \eqref{eqn:weighted_least_square}, we get the explicit form of $\Ebf=\hat\X^\top-\X^*$:
\begin{equation}\label{eqn:error_form}
\small
\begin{split}
&\hat\X^\top=\\
&=(\Gbf \boldsymbol{\Lambda}^{-1} \Gbf^\top)^{-1}\Gbf\boldsymbol{\Lambda}^{-1}(\Y^{(T_\text{dl})})^\top\\
&=(\Gbf \boldsymbol{\Lambda}^{-1} \Gbf^\top)^{-1}\Gbf\boldsymbol{\Lambda}^{-1}\left[\Gbf^\top (\X^*)^\top+[\mathbf{B}^{l_1}\ep_1^{(0)},\ldots,\mathbf{B}^{l_n}\ep_n^{(0)}]^\top \right]\\
&=(\X^*)^\top+(\Gbf \boldsymbol{\Lambda}^{-1} \Gbf^\top)^{-1}\Gbf\boldsymbol{\Lambda}^{-1}\left[\mathbf{B}^{l_1}\ep_1^{(0)},\ldots,\mathbf{B}^{l_n}\ep_n^{(0)}\right]^\top.
\end{split}
\end{equation}
From \eqref{eqn:code_x0}, \eqref{eqn:code_star} and the definition $\ep_i^{(0)}=\y_i^{(0)}-\y_i^*$ and $\e_i^{(l)}=\x_i^{(0)}-\x_i^*$, we have
\begin{equation}\label{eqn:code_e0}
[\ep_1^{(0)},\ep_2^{(0)},\ldots,\ep_n^{(0)}]=[\e_1^{(0)},\e_2^{(0)},\ldots,\e_k^{(0)}]\cdot\bf{G}.
\end{equation}

\subsubsection{Vectorization of the error matrix \texorpdfstring{$\Ebf$}{Lg}}
From property 2 of Lemma~\ref{lmm_vec1}, vectorization does not change the Frobenius norm, so we have
\begin{equation}
\begin{split}
\Ep[\twonorm\Ebf^2|\mathbf{l}]&=\Ep[\twonorm{\text{vec}(\Ebf)}^2|\mathbf{l}]\\
&=\Ep\left[\trace\left(\text{vec}(\Ebf)\text{vec}(\Ebf)^\top\right)|\mathbf{l}\right].
\end{split}
\end{equation}
Therefore, to prove the conclusion of this theorem, i.e., $\Ep[\twonorm\Ebf^2|\mathbf{l}]\le \sigma_\text{max}(\Gbf^\top\Gbf)\trace\left[(\Gbf \boldsymbol{\Lambda}^{-1} \Gbf^\top)^{-1}\right]$, we only need to show
\begin{equation}\label{eqn:key_ineq_2}
\begin{split}
&\Ep\left[\trace\left(\text{vec}(\Ebf)\text{vec}(\Ebf)^\top\right)|\mathbf{l}\right]\\
\le&\sigma_\text{max}(\Gbf^\top\Gbf)\trace\left[(\Gbf \boldsymbol{\Lambda}^{-1} \Gbf^\top)^{-1}\right].
\end{split}
\end{equation}

\subsubsection{Express the mean-squared error using the vectorization form}
Now we prove \eqref{eqn:key_ineq_2}. From \eqref{eqn:error_form}, we have
\begin{equation}\label{eqn:der1}
\Ebf^\top=(\Gbf \boldsymbol{\Lambda}^{-1} \Gbf^\top)^{-1}\Gbf\boldsymbol{\Lambda}^{-1}[\B^{l_1}\ep_1^{(0)},\ldots,\B^{l_n}\ep_n^{(0)}]^\top,
\end{equation}
which is the same as
\begin{equation}\label{eqn:der12}
\Ebf=[\B^{l_1}\ep_1^{(0)},\ldots,\B^{l_n}\ep_n^{(0)}]\cdot[(\Gbf \boldsymbol{\Lambda}^{-1} \Gbf^\top)^{-1}\Gbf\boldsymbol{\Lambda}^{-1}]^\top.
\end{equation}

From property 1 of Lemma~\ref{lmm_vec1}, \eqref{eqn:der12} means
\begin{equation}
\begin{split}
&\text{vec}(\Ebf)\\
=&\left[(\Gbf \boldsymbol{\Lambda}^{-1} \Gbf^\top)^{-1}\Gbf\boldsymbol{\Lambda}^{-1}\otimes \I_N\right]\cdot \text{vec}([\B^{l_1}\ep_1^{(0)},\ldots,\B^{l_n}\ep_n^{(0)}])\\
=&\left[(\Gbf \boldsymbol{\Lambda}^{-1} \Gbf^\top)^{-1}\Gbf\boldsymbol{\Lambda}^{-1}\otimes \I_N\right]\\
&\cdot \diag[\B^{l_1},\ldots,\B^{l_n}]\cdot\text{vec}([\ep_1^{(0)},\ldots,\ep_n^{(0)}]).
\end{split}
\end{equation}
Define
\begin{equation}
\Lbf=(\Gbf \boldsymbol{\Lambda}^{-1} \Gbf^\top)^{-1}\Gbf\boldsymbol{\Lambda}^{-1},
\end{equation}
\begin{equation}\label{eqn:D}
\D=\diag[\B^{l_1},\ldots,\B^{l_n}],
\end{equation}
and
\begin{equation}
\Ebf_0=\text{vec}([\ep_1^{(0)},\ldots,\ep_n^{(0)}]).
\end{equation}
Then,
\begin{equation}
\text{vec}(\Ebf)=(\Lbf\otimes\I_N)\cdot \D\cdot\Ebf_0.
\end{equation}
Therefore,
\begin{equation}\label{eqn:der2}
\begin{split}
&\Ep\left[\trace\left(\text{vec}(\Ebf)\text{vec}(\Ebf)^\top\right)|\mathbf{l}\right]\\
=&\trace\left((\Lbf\otimes\I_N\cdot \D)\Ep[\Ebf_0\Ebf_0^\top|\mathbf{l}](\Lbf\otimes\I_N\cdot \D)^\top\right).
\end{split}
\end{equation}
\subsubsection{Bounding the term \texorpdfstring{$\Ep[\Ebf_0\Ebf_0^\top|\mathbf{l}]$}{Lg} using the maximum eigenvalue \texorpdfstring{$\sigma_\text{max}(\Gbf^\top\Gbf)$}{Lg}}
Note that $\Ebf_0=\text{vec}([\ep_1^{(0)},\ldots,\ep_n^{(0)}])$. From \eqref{eqn:code_e0}, we have
\begin{equation}
[\ep_1^{(0)},\ep_2^{(0)},\ldots,\ep_n^{(0)}]=[\e_1^{(0)},\e_2^{(0)},\ldots,\e_k^{(0)}]\cdot\bf{G}.
\end{equation}
Therefore, using property 1 of Lemma~\ref{lmm_vec1}, we have
\begin{equation}\label{eqn:e0_vecform}
\Ebf_0=(\Gbf^\top\otimes \I_N)\cdot\text{vec}([\e_1^{(0)},\e_2^{(0)},\ldots,\e_k^{(0)}]).
\end{equation}
From Assumption~\ref{ass:general}, the covariance of $\e_i^{(0)}$ is
\begin{equation}\label{eqn:exp_e0}
\Ep[\e_i^{(0)}(\e_i^{(0)})^\top|\mathbf{l}]=\C_E,i=1,\ldots,k.
\end{equation}
Therefore, from \eqref{eqn:e0_vecform}, we have
\begin{equation}\label{eqn:der4}
\begin{split}
\Ep[\Ebf_0\Ebf_0^\top|\mathbf{l}]=&(\Gbf^\top\otimes \I_N)\cdot\Ep[\text{vec}([\e_1^{(0)},\e_2^{(0)},\ldots,\e_k^{(0)}])\cdot\\
&\quad\text{vec}([\e_1^{(0)},\e_2^{(0)},\ldots,\e_k^{(0)}])^\top|\mathbf{l}]\cdot (\Gbf^\top\otimes \I_N)^\top\\
\overset{(a)}{=}&(\Gbf^\top\otimes \I_N) \cdot (\I_k\otimes\C_E)\cdot (\Gbf^\top\otimes \I_N)^\top\\
=&(\Gbf^\top\otimes \I_N) \cdot (\I_k\otimes\C_E)\cdot (\Gbf\otimes \I_N)\\
\overset{(b)}{=}&(\Gbf^\top\cdot\I_k\cdot\Gbf)\otimes (\I_N\cdot\C_E\cdot\I_N)\\
=&  \Gbf^\top\Gbf \otimes \C_E\\
\overset{(c)}{\preceq}&  \sigma_\text{max}(\Gbf^\top\Gbf)\I_n \otimes \C_E,
\end{split}
\end{equation}
where (a) is from \eqref{eqn:exp_e0}, (b) and (c) follow respectively from property 3 and property 5 of Lemma~\ref{lmm_vec1}.

If $\Gbf$ has orthonormal rows, the eigenvalues of $\Gbf^\top\Gbf$ (which is an $n\times n$ matrix) are all in $(0,1]$. This is why we can remove the term $\sigma_\text{max}(\Gbf^\top\Gbf)$ in \eqref{eqn:error_ub_Gorth} when $\Gbf$ has orthonormal rows. In what follows, we assume $\Gbf$ has orthonormal rows, and the result when $\Gbf$ does not have orthonormal rows follows naturally.

Assuming $\Gbf$ has orthonormal rows, we have
\begin{equation}\label{eqn:der40}
\begin{split}
\Ep[\Ebf_0\Ebf_0^\top|\mathbf{l}]
\preceq \I_n \otimes \C_E.
\end{split}
\end{equation}
Plugging \eqref{eqn:der40} into \eqref{eqn:der2}, we have
\begin{equation}\label{eqn:error_DDT_2}
\begin{split}
&\Ep\left[\trace\left(\text{vec}(\Ebf)\text{vec}(\Ebf)^\top\right)|\mathbf{l}\right]\\
\le&\trace\left((\Lbf\otimes\I_N)\cdot \D(\I_n \otimes \C_E)\D^\top(\Lbf\otimes\I_N)^\top\right),
\end{split}
\end{equation}
where $\D=\diag[\B^{l_1},\ldots,\B^{l_n}]$. Therefore,
\begin{equation}
\begin{split}
&\D(\I_n \otimes \C_E)\D^\top \\
=&\diag[\B^{l_1}\C_E(\B^\top)^{l_1},\ldots,\B^{l_n}\C_E(\B^\top)^{l_n}].
\end{split}
\end{equation}
From the definition of $\C(l_i)$ in \eqref{eqn:CTI},
\begin{equation}\label{eqn:def_DDtop}
\D(\I_n \otimes \C_E)\D^\top =\diag[\C(l_1),\ldots,\C(l_n)].
\end{equation}
\subsubsection{Reducing the dimensionality of \texorpdfstring{$\D(\I_n \otimes \C_E)\D^\top$}{Lg} in the trace expression using property 7 in Lemma~\ref{lmm_vec1}}
From Property 7 in Lemma~\ref{lmm_vec1}, we can simplify \eqref{eqn:error_DDT_200}:
\begin{equation}\label{eqn:error_DDT_200}
\begin{split}
&\Ep\left[\trace\left(\text{vec}(\Ebf)\text{vec}(\Ebf)^\top\right)|\mathbf{l}\right]\\
\le&\trace\left((\Lbf\otimes\I_N)\cdot \D(\I_n \otimes \C_E)\D^\top(\Lbf\otimes\I_N)^\top\right)\\
\overset{(a)}
=&\trace\left((\Lbf\otimes\I_N)\cdot \diag[\C(l_1),\ldots,\C(l_n)](\Lbf\otimes\I_N)^\top\right)\\
\overset{(b)}
{=}&\trace\left[\Lbf\cdot\diag[\trace(\C(l_1)),\ldots,\trace(\C(l_1))]\Lbf^\top\right]\\
\overset{(c)}{=}&\trace\left(\Lbf\boldsymbol{\Lambda}\Lbf^\top\right),
\end{split}
\end{equation}
where (a) is from \eqref{eqn:def_DDtop}, (b) is from Property 7 and (c) is from the definition of $\boldsymbol{\Lambda}$ in \eqref{eqn:Gamma}. Equation \eqref{eqn:error_DDT_200} can be further simplified to
\begin{equation}
\begin{split}
&\Ep\left[\trace\left(\text{vec}(\Ebf)\text{vec}(\Ebf)^\top\right)|\mathbf{l}\right]
\le \trace\left(\Lbf\boldsymbol{\Lambda}\Lbf^\top\right)\\
\overset{(a)}{=}& \trace\left((\Gbf \boldsymbol{\Lambda}^{-1} \Gbf^\top)^{-1}\Gbf\boldsymbol{\Lambda}^{-1}\boldsymbol{\Lambda}((\Gbf \boldsymbol{\Lambda}^{-1} \Gbf^\top)^{-1}\Gbf\boldsymbol{\Lambda}^{-1})^\top \right)\\
=&\trace((\Gbf \boldsymbol{\Lambda}^{-1} \Gbf^\top)^{-1}),
\end{split}
\end{equation}
where (a) is from the definition of the decoding matrix $\Lbf=(\Gbf \boldsymbol{\Lambda}^{-1} \Gbf^\top)^{-1}\Gbf\boldsymbol{\Lambda}^{-1}$. Thus, we have completed the proof of Theorem~\ref{thm:upper_bound} for the case when $\Gbf$ has orthonormal rows. As we argued earlier, the proof when $\Gbf$ does not have orthonormal rows follows immediately (see the text after \eqref{eqn:der4}).
\subsection{Proof of Corollary~\ref{cor:upper_bound}}\label{app:Schur}

First, note that
\begin{equation}
\Gbf=\left[\I_k,\0_{k,n-k}\right]\Fbf.
\end{equation}
Therefore,
\begin{equation}\label{eqn:GGl}
\begin{split}
\Gbf \boldsymbol{\Lambda}^{-1} \Gbf^\top=\left[\I_k,\0_{k,n-k}\right]\Fbf \boldsymbol{\Lambda}^{-1}\Fbf^\top \left[\I_k,\0_{k,n-k}\right]^\top\\
\overset{(a)}{=}\left[\I_k,\0_{k,n-k}\right](\Fbf \boldsymbol{\Lambda}\Fbf^\top)^{-1} \left[\I_k,\0_{k,n-k}\right]^\top,
\end{split}
\end{equation}
where (a) is from $\Fbf^\top \Fbf=\I_n$. Now take the inverse of both sides of \eqref{eqn:block_form}, we have
\begin{equation}\label{eqn:block_form_2}
(\Fbf \boldsymbol{\Lambda}\Fbf^\top)^{-1}=   \left[
\begin{matrix}
   {(\J_1-\J_2\J_4^{-1}\J_2^\top)^{-1}} & {*} \\
   {*} & {*} \\
\end{matrix}  \\
\right]_{n\times n},
\end{equation}
where $*$ is used as a substitute for matrices that are unimportant for our argument. Thus, comparing \eqref{eqn:GGl} and \eqref{eqn:block_form_2},
\begin{equation}
\Gbf \boldsymbol{\Lambda}^{-1} \Gbf^\top=(\J_1-\J_2\J_4^{-1}\J_2^\top)^{-1},
\end{equation}
which means
\begin{equation}\label{eqn:42}
(\Gbf \boldsymbol{\Lambda}^{-1} \Gbf^\top)^{-1}=\J_1-\J_2\J_4^{-1}\J_2^\top.
\end{equation}
From \eqref{eqn:error_ub_Gorth} and \eqref{eqn:42}, the theorem follows.

\subsection{Proof of Theorem \ref{thm:upper_bound_stationary}}\label{app:correlated}

The proof follows the same procedure as the proof of Theorem~\ref{thm:upper_bound}. Basically, we can obtain exactly the same results from \eqref{eqn:unbiased_der1} to \eqref{eqn:e0_vecform} except that all $\boldsymbol\Lambda$ are replaced with $\tilde{\boldsymbol\Lambda}$. However, now that we assume the solutions of the linear inverse problems satisfy \ref{ass:stationary}, we have
\begin{equation}\label{eqn:der400}
\begin{split}
&\Ep[\text{vec}([\e_1^{(0)},\e_2^{(0)},\ldots,\e_k^{(0)}])\text{vec}([\e_1^{(0)},\e_2^{(0)},\ldots,\e_k^{(0)}])^\top|\mathbf{l}]\\
&=\I_k\otimes \C_E+(\one_k\one_k^\top)\otimes \C_\text{cor}.
\end{split}
\end{equation}
Note that the first part $\I_k\otimes \C_E$ is exactly the same as in the proof of Theorem~\ref{thm:upper_bound}, so all conclusions until \eqref{eqn:error_DDT_200} can still be obtained (note that $\sigma_\text{max}(\Gbf^\top\Gbf)$ should be added in the general case) for this part. More specifically, this means \eqref{eqn:error_DDT_200} can be modified to
\begin{equation}\label{eqn:error_DDT_300}
\begin{split}
&\Ep\left[\trace\left(\text{vec}(\Ebf)\text{vec}(\Ebf)^\top\right)|\mathbf{l}\right]\\
\le&\trace\left((\Lbf\otimes\I_N)\cdot \D\boldsymbol\Sigma\D^\top(\Lbf\otimes\I_N)^\top\right)\\
&+\sigma_\text{max}(\Gbf^\top\Gbf)\trace\left(\Lbf\boldsymbol{\Lambda}\Lbf^\top\right),
\end{split}
\end{equation}
where the second term $\sigma_\text{max}(\Gbf^\top\Gbf)\trace\left(\Lbf\boldsymbol{\Lambda}\Lbf^\top\right)$ is the same as in \eqref{eqn:error_DDT_200} because of the first part $\I_k\otimes \C_E$ in \eqref{eqn:der400}. However, the first term $\trace\left((\Lbf\otimes\I_N)\cdot \D\boldsymbol\Sigma\D^\top(\Lbf\otimes\I_N)^\top\right)$ is from the correlation between different inputs, and the matrix $\boldsymbol\Sigma$ is
\begin{equation}\label{eqn:Sigma1}
\boldsymbol\Sigma=(\Gbf^\top\otimes \I_N) \cdot ((\one_k\one_k^\top)\otimes \C_\text{cor})\cdot (\Gbf^\top\otimes \I_N)^\top,
\end{equation}
which is obtained by adding the second term $(\one_k\one_k^\top)\otimes \C_\text{cor}$ in \eqref{eqn:der400} into the step (a) in \eqref{eqn:der4}. From Property 6 of Lemma~\ref{lmm_vec1}, \eqref{eqn:Sigma1} can be simplified to
\begin{equation}
\boldsymbol\Sigma=(\Gbf^\top\one_k\one_k^\top\Gbf)\otimes \C_\text{cor}.
\end{equation}
Therefore, from the definition of \eqref{eqn:D}
\begin{equation}
\begin{split}
&\D \boldsymbol\Sigma\D^\top\\
=&\diag[\B^{l_1},\ldots,\B^{l_n}]\cdot(\Gbf^\top\one_k\one_k^\top\Gbf)\otimes \C_\text{cor}\\
&\cdot\diag[(\B^\top)^{l_1},\ldots,(\B^\top)^{l_n}].
\end{split}
\end{equation}
Define the column vector $\h=\Gbf^\top \one_k:=[h_1,h_2,\ldots h_n]^\top$. Then, $(\Gbf^\top\one_k\one_k^\top\Gbf)\otimes \C_\text{cor}$ can be written as a block matrix where the block on the $i$-th row and the $j$-th column is $h_ih^*_j\C_\text{cor}$. Therefore, After left-multiplying the block diagonal matrix $\diag[\B^{l_1},\ldots,\B^{l_n}]$ and right-multiplying $\diag[(\B^\top)^{l_1},\ldots,(\B^\top)^{l_n}]$, we obtain
\begin{equation}
\D \boldsymbol\Sigma\D^\top=\tilde{\boldsymbol\Psi}=[\tilde{\boldsymbol\Psi}_{i,j}],
\end{equation}
where the block $\tilde{\boldsymbol\Psi}_{i,j}$ on the $i$-th row and the $j$-th column is $h_ih^*_j \B^{l_i}\C_\text{cor}(\B^\top)^{l_j}$. From Property 7 of Lemma~\ref{lmm_vec1}, we have
\begin{equation}\label{eqn:der500}
\begin{split}
&\trace[(\Lbf\otimes\I_N)\cdot \D\boldsymbol\Sigma\D^\top(\Lbf\otimes\I_N)^\top]\\
\overset{(a)}{=}&\trace\left[\Lbf\left[\trace[\tilde{\boldsymbol\Psi}_{i,j}]\right]\Lbf^\top\right]\\
\overset{(b)}{=}&\trace\left[\Lbf\left[h_ih^*_j \trace[\B^{l_i}\C_\text{cor}(\B^\top)^{l_j}]\right]\Lbf^\top\right]\\
\overset{(c)}{=}&\trace\left[\Lbf\diag(\h)\left[ \trace[\B^{l_i}\C_\text{cor}(\B^\top)^{l_j}]\right]\diag(\h^\top)\Lbf^\top\right]\\
\overset{(d)}{=}&\trace\left[\Lbf\diag\{\Gbf^\top \one_k\}\cdot \boldsymbol{\Psi}\cdot \diag\{\Gbf^\top \one_k\}^\top\Lbf^\top\right],
\end{split}
\end{equation}
where step (a) is from Property 7 of Lemma \ref{lmm_vec1} and the notation $\left[\trace[\tilde{\boldsymbol\Psi}_{i,j}]\right]$ means the $n\times n$ matrix with entries $\trace[\tilde{\boldsymbol\Psi}_{i,j}]$, (b) is from the definition of $\tilde{\boldsymbol\Psi}_{i,j}$ below \eqref{eqn:der500}, (c) is from the definition $\h=\Gbf^\top \one_k:=[h_1,h_2,\ldots h_n]^\top$, and (d) is from the definition of the matrix $\boldsymbol\Psi$ in \eqref{eqn:tildeLambda_ij}. Plugging \eqref{eqn:der500} into \eqref{eqn:error_DDT_300}, we obtain
\begin{equation}\label{eqn:error_DDT_600}
\begin{split}
&\Ep\left[\trace\left(\text{vec}(\Ebf)\text{vec}(\Ebf)^\top\right)|\mathbf{l}\right]\\
\le&\trace\left[\Lbf\diag\{\Gbf^\top \one_k\}\cdot \boldsymbol{\Psi}\cdot \diag\{\Gbf^\top \one_k\}^\top\Lbf^\top\right]\\
&+\sigma_\text{max}(\Gbf^\top\Gbf)\trace\left(\Lbf\boldsymbol{\Lambda}\Lbf^\top\right)\\
=&\trace[\Lbf\tilde{\boldsymbol\Lambda}\Lbf^\top],
\end{split}
\end{equation}
where $\tilde{\boldsymbol{\Lambda}}=\sigma_\text{max}(\Gbf^\top\Gbf)\boldsymbol{\Lambda}+\diag\{\Gbf^\top \one_k\}\cdot \boldsymbol{\Psi}\cdot \diag\{\Gbf^\top \one_k\}^\top$, which is the same as in \eqref{eqn:tildeLambda}. Therefore
\begin{equation}
\begin{split}
&\Ep\left[\trace\left(\text{vec}(\Ebf)\text{vec}(\Ebf)^\top\right)|\mathbf{l}\right]
\le \trace\left(\Lbf\tilde{\boldsymbol{\Lambda}}\Lbf^\top\right)\\
\overset{(a)}{=}& \trace\left((\Gbf \tilde{\boldsymbol{\Lambda}}^{-1} \Gbf^\top)^{-1}\Gbf\tilde{\boldsymbol{\Lambda}}^{-1}\tilde{\boldsymbol{\Lambda}}((\Gbf \tilde{\boldsymbol{\Lambda}}^{-1} \Gbf^\top)^{-1}\Gbf\tilde{\boldsymbol{\Lambda}}^{-1})^\top \right)\\
=&\trace((\Gbf \tilde{\boldsymbol{\Lambda}}^{-1} \Gbf^\top)^{-1}),
\end{split}
\end{equation}
where (a) is from the definition of the decoding matrix $\Lbf=(\Gbf \tilde{\boldsymbol{\Lambda}}^{-1} \Gbf^\top)^{-1}\Gbf\tilde{\boldsymbol{\Lambda}}^{-1}$.

\subsection{Proof of Theorem~\ref{thm:uncoded}}\label{app:uncoded_replication}
In this section, we compute the residual error of the uncoded linear inverse algorithm. From \eqref{eqn:error_iter}, we have
\begin{equation}\label{eqn:eiexplicit}
\e_i^{(l+1)}=\B\e_i^{(l)}.
\end{equation}
Therefore, in the uncoded scheme, the overall error is
\begin{equation}\label{eqn:error_uncoded_generalized}
\begin{split}
&\Ep\left[\twonorm{\Ebf_{\text{uncoded}} }^2|\textbf{l}\right]\\
=&\Ep\left[\twonorm{[{\e_1^{(l_1)}},{\e_2^{(l_2)}}\ldots,{\e_k^{(l_k)}}]}^2|\mathbf{l}\right]\\
=&\sum_{i=1}^k \Ep\left[\twonorm{[\e_i^{(l_i)}]}^2|\mathbf{l}\right]\\
=& \sum_{i=1}^k \trace\left(\Ep\left[\e_i^{(l_i)}(\e_i^{(l_i)})^\top|\mathbf{l}\right]\right)\\
\overset{(a)}{=}& \sum_{i=1}^k \trace\left(\Ep\left[\B^{l_i}\e_i^{(0)}(\B^{l_i}\e_i^{(0)})^\top|\mathbf{l}\right]\right)\\
=& \sum_{i=1}^k \trace\left(\B^{l_i}\Ep\left[\e_i^{(0)}(\e_i^{(0)})^\top|\mathbf{l}\right](\B^{l_i})^\top\right)\\
=& \sum_{i=1}^k \trace\left(\B^{l_i}\cdot \C_E \cdot(\B^{l_i})^\top\right)\\
=& \sum_{i=1}^k \trace\left(\B^{l_i} \C_E(\B^{l_i})^\top\right)\\
\overset{(b)}{=}&\sum_{i=1}^k \trace\left(\C(l_i)\right),
\end{split}
\end{equation}
where (a) is from \eqref{eqn:eiexplicit} and (b) is from the definition of $\C(l_i)$ in \eqref{eqn:CTI}. Thus, we have proved \eqref{eqn:error_uncoded}. To prove \eqref{eqn:error_uncoded_f}, we note that from the i.i.d. assumption of $l_i$,
\begin{equation}
\begin{split}
\Ep_f \left[\twonorm{\Ebf_{\text{uncoded}} }^2\right]=& \Ep_f\left[\sum_{i=1}^k \trace\left(\C(l_i)\right)\right]\\
=& k\Ep_f[\trace(\C(l_1))].
\end{split}
\end{equation}

\subsection{Proof of Theorem~\ref{thm:beat_replication}}\label{app:beatrep}

From Theorem~\ref{thm:beat_uncoded},
\begin{equation}\label{eqn:der_temp_1}
\Ep_f \left[\twonorm{\Ebf_{\text{uncoded}} }^2\right]-\Ep_f \left[\twonorm{\Ebf_{\text{coded}} }^2\right]\ge \Ep_f[\trace(\J_2\J_4^{-1}\J_2^\top)].
\end{equation}
We now argue that to show \eqref{eqn:coded_and_replication}, we only need to show
\begin{equation}\label{eqn:key_resull}
\begin{split}
&\lim_{n\to\infty}\frac{1}{n-k}\Ep_f[\trace(\J_2\J_4^{-1}\J_2^\top)]
\ge \frac{\text{var}_f[\trace(\C(l_1))]}{\Ep_f[\trace(\C(l_1))]},
\end{split}
\end{equation}
because then, we have
\begin{equation}
\begin{split}
&\lim_{n\to\infty}\frac{1}{(n-k)}\left[\Ep_f\left[\twonorm{\Ebf_\text{uncoded}}^2\right]-\Ep_f\left[\twonorm{\Ebf_\text{coded}}^2\right]\right]\\
\overset{(a)}{\ge}& \lim_{n\to\infty}\frac{1}{(n-k)} \Ep_f[\trace(\J_2\J_4^{-1}\J_2^\top)]\\
\overset{(b)}{\ge} &\frac{\text{var}_f[\trace(\C(l_1))]}{\Ep_f[\trace(\C(l_1))]},
\end{split}
\end{equation}
where (a) follows from \eqref{eqn:der_temp_1} and (b) follows from \eqref{eqn:key_resull}.

Also note that after we prove \eqref{eqn:coded_and_replication}, then using \eqref{eqn:error_replication}, we have
\begin{equation}
\begin{split}
&\Ep_f\left[\twonorm{\Ebf_\text{uncoded}}^2\right]-\Ep_f\left[\twonorm{\Ebf_\text{rep}}^2\right]\\
\le & (n-k)\Ep_f[\trace(\C(l_1))],
\end{split}
\end{equation}
so we have
\begin{equation}
\begin{split}
&\lim_{n\to\infty}\frac{1}{(n-k)}\left[\Ep_f\left[\twonorm{\Ebf_\text{uncoded}}^2\right]-\Ep_f\left[\twonorm{\Ebf_\text{rep}}^2\right]\right]\\
\le & \Ep_f[\trace(\C(l_1))]\\
\overset{(a)}{\le} & \frac{1}{\rho}  \frac{\text{var}_f[\trace(\C(l_1))]}{\Ep_f[\trace(\C(l_1))]}\\
\le &\frac{1}{\rho}\lim_{n\to\infty}\frac{1}{(n-k)}\left[\Ep_f\left[\twonorm{\Ebf_\text{uncoded}}^2\right]-\Ep_f\left[\twonorm{\Ebf_\text{coded}}^2\right]\right],
\end{split}
\end{equation}
which means coded computation beats uncoded computation. Note that step (a) holds because of the variance heavy-tail property.

Therefore, we only need to prove \eqref{eqn:key_resull}. The proof of \eqref{eqn:key_resull} is divided into two steps, and intuition behind each step is provided along the proof. The main intuition is that the Fourier structure of the matrix $\Fbf$ makes the matrix $\J_4$ concentrates around its mean value, which makes the most tricky term $\Ep_f[\trace(\J_2\J_4^{-1}\J_2^\top)]$ analyzable.

\subsubsection{Exploiting the Fourier structure to obtain a Toeplitz covariance matrix}
First, we claim that when $\Fbf_{n\times n}$ is the Fourier transform matrix, the matrix $\Fbf \boldsymbol{\Lambda}\Fbf^\top$ in \eqref{eqn:block_form}
\begin{equation}
\Fbf \boldsymbol{\Lambda}\Fbf^\top=   \left[
\begin{matrix}
   {\J_1} & {\J_2} \\
   {\J_2^\top} & {\J_4} \\
\end{matrix}  \\
\right]_{n\times n},
\end{equation}
is a Toeplitz matrix composed of the Fourier coefficients of the sequence (vector) $s=[\trace(\C(l_1)),\ldots,\trace(\C(l_n))]$. In what follows, we use the simplified notation
\begin{equation}
s_j:=\trace(\C(l_{j+1})),j=0,1,\ldots,n-1.
\end{equation}

\begin{lemma}\label{lmm:Fourier}
If
\begin{equation}
\Fbf = \left(\frac{w^{pq}}{\sqrt{n}}\right)_{p,q=0,1,\ldots,n-1},
\end{equation}
where $w=\exp(-2\pi i/n)$, then
\begin{equation}
\Fbf \boldsymbol{\Lambda}\Fbf^\top=\text{Toeplitz}[\tilde{s}_p]_{p=0,1,\ldots,n-1},
\end{equation}
where
\begin{equation}\label{eqn:tildesp}
\tilde{s}_p=\frac{1}{n}\sum_{j=0}^{n-1} w^{-pj} s_j
\end{equation}
\end{lemma}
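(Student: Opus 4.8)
The plan is to verify the claim by a direct entrywise computation of the product $\Fbf\boldsymbol{\Lambda}\Fbf^\top$, exploiting that $\boldsymbol{\Lambda}=\diag[s_0,s_1,\ldots,s_{n-1}]$ is diagonal (recall $s_j=\trace(\C(l_{j+1}))$). First I would expand the $(p,q)$ entry as a sum over the diagonal of $\boldsymbol{\Lambda}$; because $\boldsymbol{\Lambda}$ is diagonal the would-be double sum collapses to a single sum,
\begin{equation}
(\Fbf\boldsymbol{\Lambda}\Fbf^\top)_{p,q}=\sum_{j=0}^{n-1}\Fbf_{p,j}\,s_j\,(\Fbf^\top)_{j,q}.
\end{equation}

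The crux is how to read $(\Fbf^\top)_{j,q}$ for the complex matrix $\Fbf$. Consistent with the orthonormality convention $\Fbf^\top\Fbf=\I_n$ adopted for complex generator matrices in Section~\ref{sec:ecc} (which forces a conjugation), the symbol $\cdot^\top$ denotes the Hermitian transpose, so $(\Fbf^\top)_{j,q}=\overline{\Fbf_{q,j}}=w^{-qj}/\sqrt{n}$, where I used $\overline{w}=w^{-1}$. Substituting this together with $\Fbf_{p,j}=w^{pj}/\sqrt{n}$ yields
\begin{equation}
(\Fbf\boldsymbol{\Lambda}\Fbf^\top)_{p,q}=\frac{1}{n}\sum_{j=0}^{n-1} w^{(p-q)j}\,s_j.
\end{equation}
The right-hand side depends on $p$ and $q$ only through the difference $p-q$, so the matrix is constant along each diagonal and is therefore Toeplitz. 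Comparing with $\tilde{s}_p=\frac{1}{n}\sum_{j=0}^{n-1} w^{-pj}s_j$ and using $w^{(p-q)j}=w^{-(q-p)j}$, I would identify $(\Fbf\boldsymbol{\Lambda}\Fbf^\top)_{p,q}=\tilde{s}_{q-p}$, i.e. the Toeplitz matrix generated by the inverse discrete Fourier coefficients $\{\tilde{s}_p\}$ of the sequence $s=[s_0,\ldots,s_{n-1}]$, which is exactly the asserted form.

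The computation is short, and the single step requiring genuine care — the one I expect to be the main obstacle — is precisely this treatment of $\Fbf^\top$. If one read $\cdot^\top$ as the bare (unconjugated) transpose, the exponent would be $w^{(p+q)j}$ and the product would be Hankel rather than Toeplitz; it is the conjugation $\overline{w^{qj}}=w^{-qj}$ that converts the dependence from $p+q$ into $p-q$. I would therefore flag this convention explicitly at the outset, after which the Toeplitz conclusion and the identification with $\tilde{s}_{q-p}$ are immediate. I would also remark that, since $\tilde{s}_p$ is $n$-periodic in $p$, the matrix is in fact circulant — a sharper fact that is convenient when the spectrum of the sub-block $\J_4$ is later bounded in the proof of Theorem~\ref{thm:beat_replication}.
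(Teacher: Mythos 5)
Your proposal is correct and follows essentially the same route as the paper: a direct entrywise expansion of $\Fbf\boldsymbol{\Lambda}\Fbf^\top$ using the diagonality of $\boldsymbol{\Lambda}$, with $\Fbf^\top$ read as the conjugate transpose so that the $(p,q)$ entry becomes $\frac{1}{n}\sum_{j}w^{(p-q)j}s_j=\tilde{s}_{q-p}$, exactly matching the paper's computation of $[\Fbf\boldsymbol{\Lambda}\Fbf^\top]_{l,m}=\frac{1}{n}\sum_j w^{(l-m)j}s_j$. Your explicit flagging of the conjugation convention and the observation that the matrix is in fact circulant are useful clarifications the paper leaves implicit, but they do not change the argument.
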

\begin{proof}
The entry on the $l$-th row and the $m$-th column of $\Fbf \boldsymbol{\Lambda}\Fbf^\top$ is
\begin{equation}
\begin{split}
[\Fbf \boldsymbol{\Lambda}\Fbf^\top]_{l,m}=&\sum_{j=0}^{n-1} \frac{w^{lj}}{\sqrt{n}} \frac{w^{-mj}}{\sqrt{n}}s_j\\
=& \frac{1}{n}\sum_{j=0}^{n-1} w^{(l-m)j} s_j.
\end{split}
\end{equation}
Thus, Lemma~\ref{lmm:Fourier} holds.
\end{proof}
Therefore, the variance of all entries of $\Fbf\boldsymbol{\Lambda}\Fbf^\top$ is the same because
\begin{equation}\label{eqn:var_all}
\begin{split}
\text{var}_f[\tilde{s}_p]=&\text{var}_f\left[\frac{1}{n}\sum_{j=0}^{n-1} w^{-pj} s_j\right]\\
=&\frac{1}{n}\text{var}_f\left[s_0\right]=:\frac{1}{n}v.
\end{split}
\end{equation}
Further, the means of all diagonal entries of $\Fbf\boldsymbol{\Lambda}\Fbf^\top$ are
\begin{equation}\label{eqn:mean_diag}
\Ep_f[\tilde{s}_0]=\Ep_f\left[s_0\right]=:\mu,
\end{equation}
while the means of all off-diagonal entries are
\begin{equation}\label{eqn:mean_off_diag}
\Ep_f[\tilde{s}_p]=\frac{1}{n}\sum_{j=0}^{n-1} w^{-pj} \Ep_f[s_j]=0,\forall p\neq 0.
\end{equation}
\subsubsection{Using the concentration of \texorpdfstring{$\J_4$}{Lg} to obtain the error when \texorpdfstring{$n\to\infty$}{Lg}}
From an intuitive perspective, when $n\to\infty$, the submatrix $\J_4$ concentrates at $\mu \I_{n-k}$ (see the above computation on the mean and variance of all entries). In this case
\begin{equation}
\begin{split}
\Ep_f[\trace (\J_2\J_4^{-1}\J_2^\top)]\approx &\frac{1}{\mu} \Ep_f[\trace (\J_2\J_2^\top)]\\
=&\frac{1}{\mu}k(n-k)\text{var}[\tilde{s}_p]\\
=&\frac{n-k}{\mu}{v}\cdot\frac{k}{n}.
\end{split}
\end{equation}
Therefore, we have
\begin{equation}\label{eqn:final_ineq}
\begin{split}
&\lim_{n\to\infty}\frac{1}{n-k}\Ep_f[\trace(\J_2\J_4^{-1}\J_2^\top)]
=\frac{v}{\mu}= \frac{\text{var}_f[s_0]}{\Ep_f[s_0]}.
\end{split}
\end{equation}

Now, we formalize the above intuitive statement. In fact, we will show a even stronger bound than the bound on the expected error.
\begin{lemma}\label{lmm:concentration}
When $n-k=o(\sqrt{n})$, with high probability (in $1-\Obf(\frac{(n-k)^2}{n})$),
\begin{equation}
\begin{split}
&\frac{1}{n-k}\trace(\J_2\J_4^{-1}\J_2^\top)
\ge \frac{1}{\mu+\epsilon}\left(\frac{k}{n}v-\epsilon\right),
\end{split}
\end{equation}
for any $\epsilon>0$.
\end{lemma}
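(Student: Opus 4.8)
The plan is to treat the diagonal entries of $\boldsymbol{\Lambda}$, namely $s_j:=\trace(\C(l_{j+1}))$, as i.i.d.\ nonnegative random variables (Assumption~\ref{ass:time_iid}) with mean $\mu$ and variance $v$, and to exploit that, under the Fourier transform, $\Fbf\boldsymbol{\Lambda}\Fbf^\top$ is the Hermitian Toeplitz matrix of Lemma~\ref{lmm:Fourier} whose diagonal entry $\tilde{s}_0=\frac1n\sum_j s_j$ is exactly the sample mean of the $s_j$. Two facts then suffice: (i) $\sigma_\text{max}(\J_4)\le\mu+\epsilon$ with the stated probability, and (ii) $\frac{1}{n-k}\trace(\J_2\J_2^\top)\ge\frac kn v-\epsilon$ with the stated probability. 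Indeed, since $\boldsymbol{\Lambda}\succeq\0$ makes the Hermitian matrix $\Fbf\boldsymbol{\Lambda}\Fbf^\top$ positive semidefinite, its principal submatrix $\J_4$ is PSD, so on the event in (i) we have $\J_4^{-1}\succeq(\mu+\epsilon)^{-1}\I$ and hence $\trace(\J_2\J_4^{-1}\J_2^\top)\ge\frac{1}{\mu+\epsilon}\trace(\J_2\J_2^\top)$; combining with (ii) yields the lemma.

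For step (i) I would write $\J_4=\tilde{s}_0\I+R$, where $R$ collects the off-diagonal entries $\tilde{s}_p$, $p\neq0$. From \eqref{eqn:var_all} and \eqref{eqn:mean_diag}, $\tilde{s}_0$ has mean $\mu$ and variance $v/n$, so Chebyshev gives $|\tilde{s}_0-\mu|\le\epsilon/2$ except on a set of probability $\Obf(1/n)$. For the remainder I would bound the spectral norm of $R$ by its Frobenius norm, $\sigma_\text{max}(R)\le\twonorm{R}$, and use \eqref{eqn:var_all}--\eqref{eqn:mean_off_diag} to get $\Ep_f[\twonorm{R}^2]=(n-k)(n-k-1)\tfrac vn=\Obf(\tfrac{(n-k)^2}{n})$; Markov then gives $\sigma_\text{max}(R)\le\epsilon/2$ off a set of probability $\Obf(\tfrac{(n-k)^2}{n})$. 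On the intersection, $\sigma_\text{max}(\J_4)\le\tilde{s}_0+\sigma_\text{max}(R)\le\mu+\epsilon$. Using this second-moment bound rather than a row-wise Gershgorin estimate is what produces exactly the advertised failure rate, and it is where the hypothesis $n-k=o(\sqrt n)$ enters, forcing the bad-event probability to vanish.

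For step (ii) the clean route is to express $\trace(\J_2\J_2^\top)=\twonorm{\J_2}^2$ as a quadratic form in the i.i.d.\ vector $\s=[s_0,\ldots,s_{n-1}]^\top$. Writing $\Fbf_A,\Fbf_B$ for its first $k$ and last $n-k$ rows, one has $\J_2=\Fbf_A\boldsymbol{\Lambda}\Fbf_B^*$ and $\J_4=\Fbf_B\boldsymbol{\Lambda}\Fbf_B^*$, so with the rank-$(n-k)$ projection $P=\Fbf_B^*\Fbf_B$ and $Q=\I-P=\Fbf_A^*\Fbf_A$,
\begin{equation}
\twonorm{\J_2}^2=\trace(\boldsymbol{\Lambda}^2P)-\trace\big((\boldsymbol{\Lambda}P)^2\big)=\s^\top M\s,\quad M_{jj}=P_{jj}(1-P_{jj}),\ M_{jj'}=-|P_{jj'}|^2.
\end{equation}
Because $P_{jj}\equiv\tfrac{n-k}{n}$ and $\trace(P^2)=\trace(P)=n-k$, the matrix $M$ is circulant with $M\one_n=\0$ and $\trace(M)=\tfrac{k(n-k)}{n}$, which reproduces the mean $\Ep_f[\twonorm{\J_2}^2]=\tfrac{k(n-k)}{n}v$ asserted before the lemma. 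Since $M\one_n=\0$ annihilates the constant component of $\s$, the quadratic form depends only on its centered part, and the standard variance identity for a quadratic form of i.i.d.\ variables (with finite fourth moment) gives $\text{var}_f(\twonorm{\J_2}^2)=\Obf(\twonorm{M}^2)$; here the diagonal contribution $\sum_j M_{jj}^2=\tfrac{k^2(n-k)^2}{n^3}=\Obf(\tfrac{(n-k)^2}{n})$ dominates the off-diagonal contribution $\sum_{j\neq j'}|P_{jj'}|^4\le\tfrac{(n-k)^3k}{n^3}$ once $n-k=o(\sqrt n)$. Hence $\text{var}_f(\tfrac{1}{n-k}\twonorm{\J_2}^2)=\Obf(1/n)$, and Chebyshev delivers $\frac{1}{n-k}\twonorm{\J_2}^2\ge\frac kn v-\epsilon$ off a set of probability $\Obf(1/n)$.

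Finally, a union bound over the three bad events (each of probability $\Obf((n-k)^2/n)$ or smaller) establishes the lemma on an event of probability $1-\Obf((n-k)^2/n)$. The hard part will be step (ii): recognizing $\twonorm{\J_2}^2$ as the quadratic form $\s^\top M\s$ with circulant $M$ (so that the mean matches and the kernel relation $M\one_n=\0$ removes the $\mu^2$ contribution), and then controlling $\text{var}_f(\s^\top M\s)$, which requires a finite fourth moment of $\trace(\C(l))$ together with the projection structure to certify that every off-diagonal term is of lower order. Throughout, the scaling $n-k=o(\sqrt n)$ is exactly what drives the error terms, and hence the failure probability, to zero.
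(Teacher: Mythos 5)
Your proposal is correct and shares the paper's overall skeleton (bound $\sigma_\text{max}(\J_4)$ above by $\mu+\epsilon$, bound $\frac{1}{n-k}\trace(\J_2\J_2^\top)$ below by $\frac{k}{n}v-\epsilon$, then combine via $\J_4^{-1}\succeq\frac{1}{\mu+\epsilon}\I$), but both technical pillars are executed differently. For the spectrum of $\J_4$, the paper applies the Gershgorin circle theorem to the Toeplitz matrix $\J_4=\text{Toeplitz}[\tilde{s}_p]$, bounding all eigenvalues by $\tilde{s}_0+2\sum_{p=1}^{n-k-1}|\tilde{s}_p|$ and then using a second-moment/Markov bound to get failure probability $\Obf\bigl(\tfrac{(n-k)^2}{n}\bigr)$; your decomposition $\J_4=\tilde{s}_0\I+R$ with $\sigma_\text{max}(R)\le\twonorm{R}$ and $\Ep_f[\twonorm{R}^2]=(n-k)(n-k-1)\tfrac{v}{n}$ gives the same order and is essentially an $\ell_2$ version of the same perturbation bound (it also sidesteps a small sign issue in the paper's step equating $\Pr(X>\epsilon)$ with $\Pr(X^2>\epsilon^2)$). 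The genuinely different contribution is your step (ii): the paper upper-bounds $\trace(\J_2\J_2^\top)$ by $(n-k)$ times the total off-diagonal Fourier energy via Parseval, identifies that quantity with the unbiased sample variance $s^2$, and invokes the classical formula for $\text{var}[s^2]$, obtaining failure probability $\Obf\bigl(\tfrac{n-k}{n}\bigr)$; your representation $\trace(\J_2\J_2^\top)=\s^\top M\s$ with the circulant $M$ satisfying $M\one_n=\0$ and $\trace(M)=\tfrac{k(n-k)}{n}$ is exact rather than a Parseval upper bound, reproduces the mean $\tfrac{k(n-k)}{n}v$ transparently, and yields the sharper variance $\Obf(1/n)$ for $\tfrac{1}{n-k}\trace(\J_2\J_2^\top)$, so your concentration for this term is actually tighter than the paper's. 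Both arguments quietly require a finite fourth moment of $\trace(\C(l))$ (the paper's $\mu_4$ plays the same role as your quadratic-form identity), so you are not assuming anything beyond what the paper already uses.
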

After we prove Lemma \ref{lmm:concentration}, we obtain a bound on expectation using the fact that
\begin{equation}
\begin{split}
    &\frac{1}{n-k}\Ep_f[\trace(\J_2\J_4^{-1}\J_2^\top)]\\
    &\ge (1-\Obf(\frac{(n-k)^2}{n}))\frac{1}{\mu+\epsilon}\left(\frac{k}{n}v-\epsilon\right).
\end{split}
\end{equation}
Thus, when $n\to\infty$ and $n-k=o(\sqrt{n})$, \begin{equation}\label{eqn:con_der1}
\begin{split}
&\lim_{n\to\infty}\frac{1}{n-k}\Ep_f[\trace(\J_2\J_4^{-1}\J_2^\top)]
\ge\frac{v-\epsilon}{\mu+\epsilon}= \frac{\text{var}_f[s_0]-\epsilon}{\Ep_f[s_0]+\epsilon},
\end{split}
\end{equation}
for all $\epsilon>0$, which completes the proof of Theorem~\ref{thm:beat_replication}.

The proof of Lemma~\ref{lmm:concentration} relies on the concentration of $\trace(\J_2\J_2^\top)$ and the concentration of $\J_4$. In particular, when we prove the concentration of $\J_4$, we use the Gershgorin circle theorem \cite{golub2012matrix}.
First, we show the following Lemma.

\begin{lemma}\label{lmm:Bsquare}
When $n-k=o(n)$, with high probability (in $1-\Obf(\frac{n-k}{n})$)
\begin{equation}\label{eqn:B2square}
\frac{1}{n-k}\trace(\J_2\J_2^\top)\ge\frac{k}{n}v-\epsilon.
\end{equation}
\end{lemma}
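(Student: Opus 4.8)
The plan is to reduce the claim to a statement about the discrete Fourier coefficients $\tilde s_p$ of the sequence $s_j := \trace(\C(l_{j+1}))$, which are i.i.d.\ once we condition on Assumption~\ref{ass:time_iid}. By Lemma~\ref{lmm:Fourier} the whole matrix $\Fbf\boldsymbol{\Lambda}\Fbf^\top$ is Toeplitz with entry $\tilde s_{m-l}$ in position $(l,m)$, so $\J_2$ is its $k\times(n-k)$ upper-right block and
\begin{equation}
\trace(\J_2\J_2^\top)=\sum_{l=0}^{k-1}\sum_{m=k}^{n-1}|\tilde s_{m-l}|^2=\sum_{p=1}^{n-1}c_p\,|\tilde s_p|^2,
\end{equation}
where $c_p$ counts the pairs $(l,m)$ lying on the $p$-th diagonal. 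First I would compute these multiplicities explicitly: $c_p=p$ for $1\le p\le n-k$, $c_p=n-k$ for $n-k\le p\le k$, and $c_p=n-p$ for $k<p\le n-1$ (this is where $n-k<k$ is used). In particular $0\le c_p\le n-k$ and $\sum_p c_p=k(n-k)$.

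The key device is a Parseval identity for the DFT. Since $\tilde s_0=\frac1n\sum_j s_j=:\bar s$ and $\sum_{p=0}^{n-1}|\tilde s_p|^2=\frac1n\sum_j s_j^2$, subtracting the $p=0$ term gives
\begin{equation}
\sum_{p=1}^{n-1}|\tilde s_p|^2=\frac1n\sum_{j=0}^{n-1}s_j^2-\bar s^2=\frac1n\sum_{j=0}^{n-1}(s_j-\bar s)^2=:\hat v,
\end{equation}
the (biased) sample variance of the $s_j$. Writing $d_p:=(n-k)-c_p\ge0$, I would then split the trace as the exact identity
\begin{equation}
\frac{1}{n-k}\trace(\J_2\J_2^\top)=\hat v-\frac{1}{n-k}\sum_{p=1}^{n-1}d_p\,|\tilde s_p|^2,
\end{equation}
so the problem decouples into (i) showing $\hat v$ is close to $v$ and (ii) showing the nonnegative correction term is small.

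For (i), note that for a convergent iteration $\B^l\to\0$, so $s_j=\trace(\C(l_{j+1}))\in[0,\trace(\C_E)]$ is bounded; hence $\hat v$ is an average of bounded i.i.d.\ quantities with $\Ep_f[\hat v]=\frac{n-1}{n}v$, and Chebyshev (or Hoeffding) gives $\hat v\ge v-\epsilon$ with failure probability $\Obf(1/n)$. For (ii), the correction term is nonnegative with mean $\frac{v}{n}\sum_p d_p=\frac{v}{n}(n-k)(n-k-1)$, using the explicit $d_p$, so $\Ep_f[\frac{1}{n-k}\sum_p d_p|\tilde s_p|^2]=\frac{(n-k-1)v}{n}=\Obf(\tfrac{n-k}{n})$; Markov's inequality then bounds the event $\frac{1}{n-k}\sum_p d_p|\tilde s_p|^2>\epsilon$ by $\Obf(\tfrac{n-k}{n})$. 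Taking a union bound over the two failure events, whose total probability is $\Obf(1/n)+\Obf(\tfrac{n-k}{n})=\Obf(\tfrac{n-k}{n})$, and using $v\ge\frac{k}{n}v$ yields $\frac{1}{n-k}\trace(\J_2\J_2^\top)\ge v-\epsilon\ge\frac{k}{n}v-\epsilon$ on the good event.

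The routine-but-delicate part is the exact diagonal count $c_p$ and the arithmetic $\sum_p d_p=(n-k)(n-k-1)$; the conceptual crux is the Parseval identity, which is what makes the correction term nonnegative and lets me replace a hard Chebyshev bound on the correlated sum $\sum_p c_p|\tilde s_p|^2$ (which would need fourth moments and cross-covariances of the $|\tilde s_p|^2$) by a one-line Markov bound. The one place to be careful is that discarding the edge diagonals would \emph{undershoot} $\frac{k}{n}v$ by exactly the order-$\frac{n-k}{n}v$ amount that the correction term is meant to absorb, so keeping them folded into $\hat v$ is essential.
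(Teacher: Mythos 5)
Your proof is correct, and it takes a genuinely different route from the paper's. The paper proves \eqref{eqn:B2square} by applying Chebyshev's inequality directly to the statistic $\frac{1}{n-k}\trace(\J_2\J_2^\top)$: it computes the mean $\frac{k}{n}v$, upper-bounds the statistic by $(n-k)\sum_{p\ge 1}|\tilde{s}_p|^2$ via Parseval (discarding the diagonal multiplicities entirely), identifies that upper bound with $\frac{n-1}{n}$ times the unbiased sample variance $s^2$ of the $s_j$, and then invokes the classical formula $\var[s^2]=\frac{1}{n}(\mu_4-\frac{n-3}{n-1}\mu_2^2)$ to bound $\var[\frac{1}{n-k}\trace(\J_2\J_2^\top)]$ by $\Obf(\frac1n)+\frac{(n-1)^2-k^2}{n^2}v^2=\Obf(\frac{n-k}{n})$. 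You instead keep the diagonal multiplicities $c_p$ exact and write the statistic as the identity $\hat v-\frac{1}{n-k}\sum_p d_p|\tilde s_p|^2$ with $d_p=(n-k)-c_p\ge 0$, which decouples the problem: Chebyshev on the sample variance $\hat v$ (failure probability $\Obf(1/n)$) plus a one-line Markov bound on the nonnegative defect (whose mean is exactly $\frac{(n-k-1)v}{n}$, giving failure probability $\Obf(\frac{n-k}{n})$). What your route buys is that the nonnegativity of the defect replaces any second-moment analysis of the correlated sum $\sum_p c_p|\tilde s_p|^2$ by a first-moment bound; what the paper's route buys is that it never needs the explicit diagonal count, at the cost of the fourth-moment bookkeeping for $\var[s^2]$. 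Note that you still implicitly need finite fourth moments of $s_0$ for the Chebyshev step on $\hat v$ (your boundedness claim $s_j\in[0,\trace(\C_E)]$ supplies this but strictly requires $\|\B\|_2\le 1$ rather than just spectral radius below one; the paper's assumption of finite $\mu_4$ is the weaker and cleaner hypothesis). Also, the antecedent of ``which are i.i.d.'' should be the sequence $s_j$, not the Fourier coefficients $\tilde s_p$ --- the latter are uncorrelated in the relevant sense but certainly not independent; your argument only uses $\Ep[|\tilde s_p|^2]=v/n$ for $p\neq 0$, so nothing breaks.
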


\begin{proof}
Since $(\J_2)_{k\times(n-k)}:=[\J_{i,j}]$ ($\J_{i,j}$ represents the entry on the $i$-th row and the $j$-th column) is the upper-right submatrix of $\Fbf \boldsymbol{\Lambda}\Fbf^\top=\text{Toeplitz}[\tilde{s}_p]_{p=0,1,\ldots,n-1}$,
\begin{equation}\label{eqn:B2sqaure2}
\begin{split}
\trace(\J_2\J_2^\top)=&\sum_{i=1}^k\sum_{j=1}^{n-k}|\J_{i,j}|^2\\
=&\sum_{l=1}^k\sum_{m=k+1}^n |\tilde{s}_{m-l}|^2.
\end{split}
\end{equation}
Since all entries in $\J_2$ have zero mean (because $l\neq m$ ever in \eqref{eqn:B2sqaure2} and from \eqref{eqn:mean_off_diag} all off-diagonal entries have zero mean) and have the same variance $\frac{v}{n}$ (see \eqref{eqn:var_all}),
\begin{equation}\label{eqn:B2sqaure3}
\begin{split}
\Ep_f\left[\frac{1}{n-k}\trace(\J_2\J_2^\top)\right]=&\frac{1}{n-k}\cdot k(n-k)\Ep_f[|\tilde{s}_1|^2]\\
\overset{(a)}{=}&\frac{1}{n-k}\cdot k(n-k)\var_f[\tilde{s}_1]=\frac{k}{n}v,
\end{split}
\end{equation}
where (a) holds because $\Ep_f[\tilde{s}_1]=0$. To prove \eqref{eqn:B2square}, we compute the variance of $\trace(\J_2\J_2^\top)$ and use Chebyshev's inequality to bound the tail probability. Define
\begin{equation}
\mu_B:=\Ep_f[\trace(\J_2\J_2^\top)]\overset{(a)}{=}\frac{k(n-k)}{n}v,
\end{equation}
where (a) follows from \eqref{eqn:B2sqaure3}. From \eqref{eqn:B2sqaure2}, we have
\begin{equation}\label{eqn:Parseval}
\begin{split}
\trace(\J_2\J_2^\top)\le& (n-k)\sum_{p=1}^{n-1}|\tilde{s}_p|^2\\
\overset{(a)}{=}&(n-k)\left(\frac{1}{n}\sum_{j=0}^{n-1} s_j^2-|\tilde{s}_0|^2\right),
\end{split}
\end{equation}
where the last equality (a) holds due to Parseval's equality for the Fourier transform, which states that $\frac{1}{n}\sum_{j=0}^{n-1} s_j^2=\sum_{p=0}^{n-1}|\tilde{s}_p|^2$.
Then,
\begin{equation}\label{eqn:B2sqaure4}
\begin{split}
&\var_f\left[\frac{1}{n-k}\trace(\J_2\J_2^\top)\right]\\
=&\Ep_f\left[\left(\frac{1}{n-k}\trace(\J_2\J_2^\top)\right)^2\right]-\Ep_f^2\left[\frac{1}{n-k}\trace(\J_2\J_2^\top)\right]\\
\overset{(a)}{\le} &\Ep_f\left[\left(\frac{1}{n}\sum_{j=0}^{n-1}s_j^2-|\tilde{s}_0|^2\right)^2\right]-\frac{k^2}{n^2}v^2\\
\overset{(b)}{=}&  \Ep_f\left[\left(\frac{1}{n}\sum_{j=0}^{n-1}s_j^2-(\frac{1}{n}\sum_{j=0}^{n-1}s_j)^2\right)^2\right]-\frac{k^2}{n^2}v^2,
\end{split}
\end{equation}
where (a) follows from \eqref{eqn:B2sqaure3} and \eqref{eqn:Parseval} and (b) follows from \eqref{eqn:tildesp}. Note that
\begin{equation}\label{eqn:B2sqaure5}
\frac{1}{n}\sum_{j=0}^{n-1}s_j^2-(\frac{1}{n}\sum_{j=0}^{n-1}s_j)^2=\frac{n-1}{n}s^2,
\end{equation}
where
\begin{equation}
s^2:=\frac{1}{n-1}\sum_{j=0}^{n-1}(s_j-\bar{s})^2,
\end{equation}
is the famous statistic called ``unbiased sample variance'', and its variance is (see Page 229, Theorem 2 in \cite{mood1950introduction})
\begin{equation}
\text{var}[s^2]=\frac{1}{n}\left(\mu_4-\frac{n-3}{n-1}\mu_2^2\right),
\end{equation}
where
\begin{equation}
\mu_4=\Ep[(s_0-\mu)^4],
\end{equation}
and
\begin{equation}
\mu_2=\Ep[(s_0-\mu)^2]=\text{var}[s_0]=v.
\end{equation}
Also note that the sample variance is unbiased, which means
\begin{equation}
\Ep_f[s^2]=v.
\end{equation}
Therefore, we have
\begin{equation}\label{eqn:B2sqaure6}
\Ep_f[(s^2)^2]=\text{var}[s^2]+(\Ep_f[s^2])^2=\frac{1}{n}\left(\mu_4-\frac{n-3}{n-1}v^2\right)+v^2,
\end{equation}
so we have
\begin{equation}\label{eqn:B2sqaure_var}
\begin{split}
&\var_f\left[\frac{1}{n-k}\trace(\J_2\J_2^\top)\right]\\
\overset{(a)}{\le}&  \Ep_f\left[\left(\frac{1}{n}\sum_{j=0}^{n-1}s_j^2-(\frac{1}{n}\sum_{j=0}^{n-1}s_j)^2\right)^2\right]-\frac{k^2}{n^2}v^2\\
\overset{(b)}{=}&\Ep_f\left[(\frac{n-1}{n}s^2)^2\right]-\frac{k^2}{n^2}v^2\\
=&\frac{(n-1)^2}{n^2}\Ep_f[(s^2)^2]-\frac{k^2}{n^2}v^2\\
\overset{(c)}{=}&\frac{(n-1)^2}{n^2}\frac{1}{n}\left(\mu_4-\frac{n-3}{n-1}v^2\right)+\frac{(n-1)^2-k^2}{n^2}v^2\\
=&\Obf\left(\frac{1}{n}\right)+\frac{(n-1)^2-k^2}{n^2}v^2,
\end{split}
\end{equation}
where (a) follows from \eqref{eqn:B2sqaure4}, (b) follows from \eqref{eqn:B2sqaure5} and (c) follow from \eqref{eqn:B2sqaure6}.

Note that we have computed the expectation of $\frac{1}{n-k}\trace(\J_2\J_2^\top)$, which is $\frac{k}{n}v$ (see \eqref{eqn:B2sqaure3}). Using the Chebyshev's inequality
\begin{equation}
\begin{split}
&\Pr\left(\left|\frac{1}{n-k}\trace(\J_2\J_2^\top)-\frac{k}{n}v\right|\ge \epsilon\right)\\
&\le \frac{1}{\epsilon^2} \text{var}\left[\frac{1}{n-k}\trace(\J_2\J_2^\top)\right]\\
&\overset{(a)}{\le}\frac{1}{\epsilon^2} \Obf\left(\frac{1}{n}\right)+\frac{1}{\epsilon^2}\frac{(n-1)^2-k^2}{n^2}v^2\\
&=\frac{1}{\epsilon^2}\Obf\left(\frac{1}{n}\right)+\frac{1}{\epsilon^2}\frac{(n-k-1)(n+k-1)}{n^2}v^2\\
&\overset{(b)}{<} \frac{1}{\epsilon^2}\Obf\left(\frac{1}{n}\right)+ \frac{2}{\epsilon^2}\frac{n-k-1}{n}v^2\\
&=\frac{1}{\epsilon^2}\Obf\left(\frac{n-k}{n}\right).
\end{split}
\end{equation}
where (a) is from \eqref{eqn:B2sqaure_var} and (b) is because $n+k-1<2n$. Therefore, the proof of \eqref{eqn:B2square} is over.
\end{proof}

Next, we show that with high probability the largest eigenvalue of $(\J_4)_{(n-k)\times(n-k)}$ is smaller than $(1+\epsilon)\mu$. Note that the matrix $\J_4$ is a principle submatrix of the Toeplitz matrix $\Fbf \boldsymbol{\Lambda}\Fbf^\top=\text{Toeplitz}[\tilde{s}_p]_{p=0,1,\ldots,n-1}$, so $\J_4=\text{Toeplitz}[\tilde{s}_p]_{p=0,1,\ldots,n-k-1}$ is also Toeplitz. Using the Gershgorin circle theorem, all eigenvalues of $\J_4:=[\tilde\J_{ij}]$ must lie in the union of $(n-k)$ circles, in which the $i$-th circle is centered at the diagonal entry $\tilde\J_{ii}=\tilde{s}_0$ and has radius $\sum_{j\neq i}|\tilde\J_{ij}|=\sum_{j\neq i}|\tilde{s}_{j-i}|$. These $(n-k)$ circles are all within the circle centered at $\tilde{s}_0$ with radius $2\sum_{p=1}^{n-k-1} |\tilde{s}_p|$. Therefore, the maximum eigenvalue of $\J_4$ satisfies
\begin{equation}
\sigma_{\text{max}}<\tilde{s}_0+2\sum_{p=1}^{n-k-1} |\tilde{s}_p|.
\end{equation}
Thus,
\begin{equation}\label{eqn:evalue_concentration}
\begin{split}
&\Pr(\sigma_{\text{max}}>\mu+\epsilon)<\Pr\left(\tilde{s}_0+2\sum_{p=1}^{n-k-1} |\tilde{s}_p|>\mu+\epsilon\right)\\
=&\Pr\left(\left(\tilde{s}_0-\mu+2\sum_{p=1}^{n-k-1} |\tilde{s}_p|\right)^2>\epsilon^2\right)\\
\overset{(a)}{\le}&\frac{1}{\epsilon^2}\Ep\left[\left(\tilde{s}_0-\mu+2\sum_{p=1}^{n-k-1} |\tilde{s}_p|\right)^2\right]\\
\overset{(b)}{\le} & \frac{1}{\epsilon^2} (2n-2k-1)^2\frac{v}{n}=\frac{1}{\epsilon^2}\Obf\left(\frac{(n-k)^2}{n}\right),
\end{split}
\end{equation}
where (a) is from the Markov inequality and (b) is due to the fact that $\var[\tilde{s}_p]=\frac{v}{n}$ for all $p$ and $\Ep[\tilde{s}_0]=\mu$ and $\Ep[\tilde{s}_p]=0$ for all $p\neq 0$.

From Lemma~\ref{lmm:Bsquare} and \eqref{eqn:evalue_concentration}, when $n\to\infty$ and $(n-k)^2=o(n)$, with high probability (which is $1-\frac{1}{\epsilon^2}\Obf\left(\frac{(n-k)^2}{n}\right)$),
\begin{equation}
\frac{1}{n-k}\trace(\J_2\J_2^\top)\ge\frac{k}{n}v-\epsilon,
\end{equation}
and at the same time
\begin{equation}
\J_4^{-1}\succeq \frac{1}{\mu+\epsilon}\I_{n-k}.
\end{equation}
From concentration of $\trace(\J_2\J_2^\top)$ and the lower bound of $\J_4^{-1}$, we have, with high probability,
\begin{equation}
\frac{1}{n-k}\trace(\J_2\J_4^{-1}\J_2^\top)
\ge \frac{1}{\mu+\epsilon}\left(\frac{k}{n}v-\epsilon\right),
\end{equation}
for all $\epsilon$. This concludes the proof of Lemma~\ref{lmm:concentration} and hence completes the proof of Theorem~\ref{thm:beat_replication} (see the details from after Lemma~\ref{lmm:concentration} to equation \eqref{eqn:con_der1}). This lemma is a formal statement of equality \eqref{eqn:final_ineq}.

%
\subsection{Proof of Theorem~\ref{thm:TtoInf}}\label{app:error_exponenl}

\textbf{1) Uncoded linear inverse problem:}

Consider the eigenvalue decomposition
\begin{equation}
\B=\P \Theta \P^{-1},
\end{equation}
where
\begin{equation}\label{eqn:gamma_der1}
\Theta=\diag\{\gamma_1,\gamma_2,\ldots \gamma_N\},
\end{equation}
and without the loss of generality, assume $\gamma_1$ is the maximum eigenvalue. Then, from the definition $\C(l_i)=\B^{l_i}\C_E(\B^\top)^{l_i}$ in \eqref{eqn:CTI},
\begin{equation}\label{eqn:cli_der1}
\C(l_i)=\P \Theta^{l_i} \P^{-1}\C_E(\P^\top)^{-1} \Theta^{l_i} \P^\top.
\end{equation}
Since $\P^{-1}\C_E(\P^\top)^{-1}$ is a positive definite matrix, all of its eigenvalues are positive real numbers. Suppose the maximum eigenvalue and the minimum eigenvalue of $\P^{-1}\C_E(\P^\top)^{-1}$ are respectively $e_{\max}$ and $e_{\min}$. Then, \eqref{eqn:cli_der1} gives the upper and lower bounds
\begin{equation}\label{eqn:cli_der21}
\trace(\C(l_i))\le e_{\max} \trace(\P\Theta^{2l_i}\P^\top),
\end{equation}
and
\begin{equation}\label{eqn:cli_der22}
\trace(\C(l_i))\ge e_{\min} \trace(\P\Theta^{2l_i}\P^\top).
\end{equation}
Suppose the maximum and minimum eigenvalues of $\P^\top\P$ are respectively $c_{\max}$ and $c_{\min}$. Then, \eqref{eqn:cli_der21} and \eqref{eqn:cli_der22} can be further simplified to
\begin{equation}\label{eqn:cli_der31}
\begin{split}
\trace(\C(l_i))&\le e_{\max}\trace(\Theta^{l_i}\P^\top \P\Theta^{l_i})\\
&\le c_{\max}e_{\max}\trace(\Theta^{2l_i})\\
&=c_{\max}e_{\max}\sum_{j=1}^N \gamma_j^{2l_i},
\end{split}
\end{equation}
and
\begin{equation}\label{eqn:cli_der32}
\begin{split}
\trace(\C(l_i))&\ge e_{\min}\trace(\Theta^{l_i}\P^\top \P\Theta^{l_i})\\
&\ge c_{\min}e_{\min}\trace(\Theta^{2l_i})\\
&=c_{\min}e_{\min}\sum_{j=1}^N \gamma_j^{2l_i},
\end{split}
\end{equation}
where the last equality in the above two inequalities are from the definition of $\Theta$ in \eqref{eqn:gamma_der1}. Therefore,
\begin{equation}
\begin{split}
&\lim_{T_\text{dl}\to\infty}\frac{1}{T_\text{dl}}\log \Ep[\twonorm{\Ebf_\text{uncoded}}^2|\mathbf{l}]\\
\overset{(a)}{=}& \lim_{T_\text{dl}\to\infty}\frac{1}{T_\text{dl}}\log \left(\sum_{i=1}^k \trace\left(\C(l_i)\right)\right)\\
\overset{(b)}{=}&\lim_{T_\text{dl}\to\infty}\frac{1}{T_\text{dl}}\log \left( \sum_{j=1}^{N} \sum_{i=1}^{k} \gamma_j^{2l_i}\right)\\
=&\lim_{T_\text{dl}\to\infty}\frac{1}{T_\text{dl}} \log \left( \sum_{j=1}^{N} \sum_{i=1}^{k} \gamma_j^{2\lceil \frac{T_\text{dl}}{v_i}\rceil}\right)\\
\overset{(c)}{=}&\max_{i\in[k],j} \log (\gamma_j)^{\frac{2}{v_i}},
\end{split}
\end{equation}
where (a) is from \eqref{eqn:error_uncoded}, (b) is obtained by plugging in \eqref{eqn:cli_der31} and \eqref{eqn:cli_der32} and the fact that the constants $e_\text{min}$, $c_\text{min}$, $e_\text{max}$ and $c_\text{min}$ do not change the error exponent when $T_{dl}$ increases, and (c) is from the fact that the maximum term dominates the error exponent in a log-sum form. Since the maximum eigenvalue of the matrix $\B$ is $\gamma_1$, we have
\begin{equation}
\begin{split}
\lim_{T_\text{dl}\to\infty}\frac{1}{T_\text{dl}}\log \Ep[\twonorm{\Ebf_\text{uncoded}}^2|\mathbf{l}]
=&\max_{i\in[k]} \log \gamma_1^{\frac{2}{v_i}}\\
=& -\frac{2}{\max_{i\in[k]} v_i} \log\frac{1}{\gamma_1}.
\end{split}
\end{equation}
Therefore, the error exponent is determined by the worker with the slowest speed (maximum $v_i$).

\textbf{2) replication-based linear inverse:}

Now we look at the replication-based linear inverse scheme. At first, we do not know the order of the random sequence $v_1,v_2,\ldots v_n$. Therefore, when we assign the extra $n-k<k$ workers to replicate the computations of the last $n-k$ linear inverse problems, there is a non-zero probability that the slowest worker of the first $k$  workers does not have any other copy. More precisely, denote by $E$ the above event. Then, if we uniformly choose $n-k$ workers to replicate, the probability of $E$ is
\begin{equation}
\Pr(E)=\frac{\binom{k-1}{n-k}}{\binom{k}{n-k}}.
\end{equation}
This is also a constant that does not depend on the time $T_\text{dl}$. Therefore,
\begin{equation}
\Ep[\twonorm{\Ebf_\text{rep}}^2|\mathbf{l}]\ge c_{\min}e_{\min}\Pr(E) \sum_{j=1}^N \gamma_j^{2\lceil\frac{T_\text{dl}}{\max_{i\in[k]} v_i}\rceil},
\end{equation}
where the exponent ${2\lceil\frac{T_\text{dl}}{\max_{i\in[k]} v_i}\rceil}$ is because we are lower-bounding the error of replication-based scheme using only the error of the slowest worker in the first $k$ workers, and $\Pr(E)$ is the probability that this particular worker is not replicated using any of the $n-k$ extra workers.

Using the fact that $\max_j \gamma_j=\gamma_1$ and the fact that $c_{\min}e_{\min}\Pr(E)$ is a constant that does not change with $T_{dl}$, we have
\begin{equation}
\lim_{T_\text{dl}\to\infty}\frac{1}{T_\text{dl}}\log\Ep[\twonorm{\Ebf_\text{rep}}^2|\mathbf{l}]\ge \frac{2}{\max_{i\in [k]} v_i} \log\frac{1}{\gamma_1}.
\end{equation}
Note that $\Ep[\twonorm{\Ebf_\text{rep}}^2|\mathbf{l}]\le \Ep[\twonorm{\Ebf_\text{uncoded}}^2|\mathbf{l}]$, so we also have
\begin{equation}
\begin{split}
&\lim_{T_\text{dl}\to\infty}\frac{1}{T_\text{dl}}\log\Ep[\twonorm{\Ebf_\text{rep}}^2|\mathbf{l}]\\
&\le \lim_{T_\text{dl}\to\infty}\frac{1}{T_\text{dl}}\log\Ep[\twonorm{\Ebf_\text{uncoded}}^2|\mathbf{l}]\\
&=\frac{2}{\max_{i\in [k]} v_i} \log\frac{1}{\gamma_1}.
\end{split}
\end{equation}
Therefore,
\begin{equation}
\lim_{T_\text{dl}\to\infty}\frac{1}{T_\text{dl}}\log\Ep[\twonorm{\Ebf_\text{rep}}^2|\mathbf{l}]= \frac{2}{\max_{i\in [k]} v_i} \log\frac{1}{\gamma_1}.
\end{equation}

\textbf{3) Coded linear inverse algorithm:}

For the coded linear inverse algorithm,
\begin{equation}\label{eqn:000}
\Ep[\twonorm{\Ebf_\text{coded}}^2|\mathbf{l}]\le\sigma_\text{max}(\Gbf^\top\Gbf)\trace\left[(\Gbf \boldsymbol{\Lambda}^{-1} \Gbf^\top)^{-1}\right].
\end{equation}
From \eqref{eqn:cli_der31}, we have
\begin{equation}
\begin{split}
\trace(\C(l_i))\le& c_{\max}e_{\max}\sum_{j=1}^N \gamma_j^{2l_i}\\
\le&c_{\max}e_{\max} N \gamma_1^{2l_i}.
\end{split}
\end{equation}
Plugging into \eqref{eqn:000}, we have
\begin{equation}
\Ep[\twonorm{\Ebf_\text{coded}}^2|\mathbf{l}]\le\sigma_\text{max}(\Gbf^\top\Gbf)\trace\left[(\Gbf \boldsymbol{\Lambda}_2^{-1} \Gbf^\top)^{-1}\right],
\end{equation}
where
\begin{equation}
\begin{split}
\boldsymbol{\Lambda}_2:=&\diag \{c_{\max}e_{\max} N \gamma_1^{2l_1},\ldots c_{\max}e_{\max} N \gamma_1^{2l_n}\}\\
=&N c_{\max}e_{\max} \diag \{\gamma_1^{2l_1},\ldots \gamma_1^{2l_n}\}.
\end{split}
\end{equation}
Since $N$, $c_{\max}$ and $e_{\max}$ are all constant numbers,
\begin{equation}\label{eqn:006}
\begin{split}
&\lim_{T_\text{dl}\to\infty}\frac{1}{T_\text{dl}}\log \Ep[\twonorm{\Ebf_\text{coded}}^2|\mathbf{l}]\\
\le& \lim_{T_\text{dl}\to\infty}\frac{1}{T_\text{dl}} \log \trace\left[(\Gbf \boldsymbol{\Lambda}_3^{-1} \Gbf^\top)^{-1}\right],
\end{split}
\end{equation}
where
\begin{equation}\label{eqn:lambda3}
\begin{split}
\boldsymbol{\Lambda}_3
:=&\diag \{\gamma_1^{2l_1},\ldots \gamma_1^{2l_n}\}\\
=&\diag \{\gamma_1^{2\lceil\frac{T_\text{dl}}{v_1}\rceil},\ldots \gamma_1^{2\lceil\frac{T_\text{dl}}{v_n}\rceil}\}.
\end{split}
\end{equation}
Define $\S=\{i_1,\ldots i_k\}$, i.e., the index set of the fastest $k$ workers. Then,
\begin{equation}\label{eqn:001}
\min_{i\in \S} \left(\frac{1}{\gamma_1}\right)^{2\lceil \frac{T_\text{dl}}{v_i}\rceil}=\left(\frac{1}{\gamma_1}\right)^{2\lceil\frac{T_\text{dl}}{v_{i_k}}\rceil}.
\end{equation}
For $i\in [n]\setminus \S=\{i_{k+1},\ldots i_n\}$,
\begin{equation}\label{eqn:002}
\left(\frac{1}{\gamma_1}\right)^{2\lceil \frac{T_\text{dl}}{v_i}\rceil}\ge 0.
\end{equation}
Therefore, from the definition of the diagonal matrix $\boldsymbol{\Lambda}_3$ in \eqref{eqn:lambda3}, the entries of $\boldsymbol{\Lambda}_3^{-1}$ can be lower-bounded by \eqref{eqn:001} for $i\in\S$, and can be lower-bounded by \eqref{eqn:002} for $i\in [n]\setminus \S$. Thus,
\begin{equation}\label{eqn:003}
\boldsymbol{\Lambda}_3^{-1}\succeq \left(\frac{1}{\gamma_1}\right)^{2\lceil\frac{T_\text{dl}}{v_{i_k}}\rceil} \diag\{c_1,c_2,\ldots c_n\},
\end{equation}
where $c_i$ is the indicator
\begin{equation}\label{eqn:004}
c_i=\delta(i\in\S).
\end{equation}
Define $\Gbf_{\T}$ as the submatrix of $\Gbf$ composed of the columns in $\Gbf$ with indexes in $\T\subset [n]$. Use $\sigma_{\min}(\X)$ to denote the minimum eigenvalue of a matrix $\X$. Define
\begin{equation}\label{eqn:005}
s_{\min}=\min_{\T\subset [n],|\T|=k}\sigma_{\min}(\Gbf_{\T}\Gbf_{\T}^\top).
\end{equation}
Since $\Gbf$ is a matrix with orthonormal rows, any arbitrary $\Gbf_{\T}$ that satisfies $|\T|=k$ must have full rank. This means that $s_{\min}>0$. Note that $s_{\min}>0$ is a constant that depends only on the generator matrix $\Gbf$ and does not change with the overall time $T_\text{dl}$. Therefore,
\begin{equation}\label{eqn:007}
\begin{split}
\Gbf\boldsymbol{\Lambda}_3^{-1}\Gbf^\top&
\overset{(a)}{\succeq}\left(\frac{1}{\gamma_1}\right)^{2\lceil\frac{T_\text{dl}}{v_{i_k}}\rceil} \Gbf\diag\{c_1,c_2,\ldots c_n\}\Gbf^\top\\ &\overset{(b)}{=}\left(\frac{1}{\gamma_1}\right)^{2\lceil\frac{T_\text{dl}}{v_{i_k}}\rceil} \Gbf_{\S}\Gbf_{\S}^\top\\
&\overset{(c)}{\succeq}\left(\frac{1}{\gamma_1}\right)^{2\lceil\frac{T_\text{dl}}{v_{i_k}}\rceil} s_{\min}\I_k.
\end{split}
\end{equation}
where (a) is from \eqref{eqn:003}, (b) is from \eqref{eqn:004}, and (c) is from \eqref{eqn:005}. Thus, plugging \eqref{eqn:007} into \eqref{eqn:006} (note that there is an inverse inside the trace of \eqref{eqn:006})
\begin{equation}
\begin{split}
&\lim_{T_\text{dl}\to\infty}\frac{1}{T_\text{dl}}\log \Ep[\twonorm{\Ebf_\text{coded}}^2|\mathbf{l}]\\
\le &\lim_{T_\text{dl}\to\infty}\frac{1}{T_\text{dl}} \log \trace\left[(\Gbf \boldsymbol{\Lambda}_3^{-1} \Gbf^\top)^{-1}\right]\\
\le &\lim_{T_\text{dl}\to\infty}\frac{1}{T_\text{dl}} \log \trace \left[\left(\frac{1}{\gamma_1}\right)^{-2\lceil\frac{T_\text{dl}}{v_{i_k}}\rceil} \frac{1}{s_{\min}}\I_k\right]\\
= &\lim_{T_\text{dl}\to\infty}\frac{1}{T_\text{dl}} \log \left\{\left(\frac{1}{\gamma_1}\right)^{-2\lceil\frac{T_\text{dl}}{v_{i_k}}\rceil}\trace \left[ \frac{1}{s_{\min}}\I_k\right]\right\}\\
\overset{(a)}{=}&\lim_{T_\text{dl}\to\infty}\frac{1}{T_\text{dl}} \log \left(\frac{1}{\gamma_1}\right)^{-2\lceil\frac{T_\text{dl}}{v_{i_k}}\rceil}\\
=&-\frac{2}{v_{i_k}} \log\frac{1}{\gamma_1},
\end{split}
\end{equation}
where (a) is because $\trace \left[ \frac{1}{s_{\min}}\I_k\right]=\frac{k}{s_{\min}}$ is a constant and does not change the error exponent. Thus, we have completed the proof of Theorem~\ref{thm:TtoInf}.

\subsection{Computing the Matrix \texorpdfstring{$\boldsymbol{\Lambda}$}{Lg}}\label{app:computing_Lambda}
Recall that the statistic $\hat{\gamma}_{m,l}$ is defined as
\begin{equation}
\hat{\gamma}_{m,l}=\frac{1}{m}\sum_{j=1}^m \twonorm{\B^l\a_j}^2,l=1,2,\ldots T_u.
\end{equation}
The computational complexity of computing $\hat{\gamma}_{m,l},l=1,2,\ldots T_u$ is the same as the computation of $m$ linear inverse problems for $T_u$ iterations. The computation has low complexity and can be carried out distributedly in $m$ workers before the main algorithm starts. Additionally, the computation results can be used repeatedly when we implement the coded linear inverse algorithm multiple times. \textcolor{black}{In the data experiments, we use $m=10$, which has the same complexity as solving $m=10$ extra linear inverse problems.}

The following Lemma shows that $\hat{\gamma}_{m,l},l=1,2,\ldots T_u$ is an unbiased and asymptotically consistent estimate of $\trace(\C(l))$ for all $l$.

\begin{lemma}\label{lmm:trace_estimate}
The statistic $\hat{\gamma}_{m,l}$ is an unbiased and asymptotically consistent estimator of $\trace(\C(l))$. More specifically, the mean and variance of the estimator $\hat{\gamma}_{m,l}$ satisfies
\begin{equation}
\Ep[\hat{\gamma}_{m,l}|\mathbf{l}]=\trace(\C(l)),
\end{equation}
\begin{equation}
\var_t[\hat{\gamma}_{m,l}]\le\frac{1}{m}\twonorm{\B^l}_F^4\Ep\left[\twonorm{\a_j}^4\right].
\end{equation}
\end{lemma}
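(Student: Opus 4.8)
The plan is to establish the two claims separately, both resting on the fact that the test vectors $\a_j$ are i.i.d.\ with mean $\0_N$ and covariance $\C_E$, matching exactly the statistics of the initial error $\e^{(0)}$ under Assumption~\ref{ass:general}.

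\textbf{Unbiasedness.} First I would rewrite a single summand as a quadratic form, $\twonorm{\B^l\a_j}^2=\a_j^\top(\B^\top)^l\B^l\a_j$, and then apply the standard ``trace trick'' $\Ep[\a_j^\top\Mbf\a_j]=\trace(\Mbf\,\Ep[\a_j\a_j^\top])=\trace(\Mbf\C_E)$, valid for any fixed matrix $\Mbf$. Taking $\Mbf=(\B^\top)^l\B^l$ gives $\Ep[\twonorm{\B^l\a_j}^2]=\trace((\B^\top)^l\B^l\C_E)$. The cyclic property of the trace then lets me move $(\B^\top)^l$ to the rear, yielding $\trace(\B^l\C_E(\B^\top)^l)=\trace(\C(l))$ by the definition \eqref{eqn:CTI}. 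Averaging over the $m$ i.i.d.\ terms preserves the expectation, so $\Ep[\hat{\gamma}_{m,l}|\mathbf{l}]=\trace(\C(l))$, which is the first claim.

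\textbf{Variance bound and consistency.} Since $\hat{\gamma}_{m,l}$ is the empirical average of $m$ i.i.d.\ copies of $\twonorm{\B^l\a_1}^2$, I would use $\var_t[\hat{\gamma}_{m,l}]=\frac{1}{m}\var_t[\twonorm{\B^l\a_1}^2]$. I then bound the variance of a nonnegative random variable by its second moment, $\var_t[\twonorm{\B^l\a_1}^2]\le\Ep[\twonorm{\B^l\a_1}^4]$. Finally, submultiplicativity together with the operator-norm/Frobenius-norm inequality gives $\twonorm{\B^l\a_1}\le\twonorm{\B^l}_F\twonorm{\a_1}$, hence $\twonorm{\B^l\a_1}^4\le\twonorm{\B^l}_F^4\twonorm{\a_1}^4$; taking expectations produces the stated bound $\frac{1}{m}\twonorm{\B^l}_F^4\Ep[\twonorm{\a_j}^4]$. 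Because this bound vanishes as $m\to\infty$ while the estimator remains unbiased, Chebyshev's inequality immediately yields asymptotic consistency.

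\textbf{Main obstacle.} I expect no genuine obstacle here: the argument is a routine combination of the trace trick and elementary norm inequalities. The only points needing slight care are the cyclic-trace manipulation that identifies $\trace((\B^\top)^l\B^l\C_E)$ with $\trace(\C(l))$, and the implicit requirement that $\Ep[\twonorm{\a_j}^4]<\infty$ for the variance bound to be meaningful (satisfied whenever $\e^{(0)}$ has a finite fourth moment, e.g.\ for the bounded PageRank solutions considered here).
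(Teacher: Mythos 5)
Your proposal is correct and follows essentially the same route as the paper's proof: the trace trick with $\Ep[\a_j\a_j^\top]=\C_E$ for unbiasedness (the paper writes $\twonorm{\B^l\a_1}^2=\trace(\B^l\a_1\a_1^\top(\B^l)^\top)$ directly rather than cycling the trace, but this is the same computation), and for the variance the same chain $\var=\frac{1}{m}\var[\twonorm{\B^l\a_1}^2]\le\frac{1}{m}\Ep[\twonorm{\B^l\a_1}^4]\le\frac{1}{m}\twonorm{\B^l}_F^4\Ep[\twonorm{\a_j}^4]$, with the final norm inequality justified by Cauchy--Schwarz in the paper and by the equivalent operator-norm/Frobenius bound in your write-up. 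Your added Chebyshev remark for consistency and the finite-fourth-moment caveat are sensible but not points of divergence.
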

\begin{proof}
The expectation of $\hat{\gamma}_{m,l}$ satisfies
\begin{equation}
\begin{split}
\Ep[\hat{\gamma}_{m,l}]=&\frac{1}{m}\sum_{j=1}^m \Ep\left[\twonorm{\B^l\a_j}^2\right]\\
=&\Ep\left[\twonorm{\B^l\a_1}^2\right]\\
=&\Ep\left[\trace(\B^l\a_1\a_1^\top (\B^l)^\top)\right]\\
\overset{(a)}{=}&\trace(\B^l\Ep[\a_1\a_1^\top] (\B^l)^\top)\\
=&\trace(\B^l\C_E(\B^l)^\top)\\
=&\trace(\C(l)),
\end{split}
\end{equation}
where (a) is from the fact that $\a_1$ has covariance $\C_E$. To bound the variance of $\hat{\gamma}_{m,l}$, note that for all $j$,
\begin{equation}
\begin{split}
\twonorm{\B^l\a_j}^2\le &\twonorm{\B^l}_F^2\twonorm{\a_j}^2.
\end{split}
\end{equation}
Therefore,
\begin{equation}
\begin{split}
\var[\hat{\gamma}_{m,l}]=&\var[\frac{1}{m}\sum_{j=1}^m \twonorm{\B^l\a_j}^2]\\
\overset{(a)}{=}&\frac{1}{m}\var\left[\twonorm{\B^l\a_j}^2\right]\\
\overset{(b)}{\le}&\frac{1}{m}\Ep\left[\twonorm{\B^l\a_j}^4\right]\\
\overset{(c)}{\le}&\frac{1}{m}\twonorm{\B^l}_F^4\Ep\left[\twonorm{\a_j}^4\right],
\end{split}
\end{equation}
\end{proof}
where (a) holds because all $\twonorm{\a_j}$ are independent of each other, and (b) holds because $\var[X]\le \Ep[X^2]$, and (c) is from the Cauchy-Schwartz inequality.
\subsection{Proof of Theorem~\ref{thm:complexity}}\label{app:complexity}
The computational complexity at each worker is equal to the number of operations in one iteration multiplied by the number of iterations. The number of iterations is $l$. In each iteration, the number of operations is equal to the number of non-zeros in $\B$ because each iteration $\x^{(l+1)}=\Kbf \rbf+\mathbf{B}\x^{(l)}$ requires at least scanning through the non-zeros in $\B$ once to compute $\B\x^{(l)}$. For general matrices, the number of entries is in the order of $N^2$, where $N$ is the number of rows in $\B$. Therefore, the overall number of operations at each worker is in the order of $\Theta(N^2l)$.

The encoding and decoding steps in Algorithm~\ref{alg:pg} are all based on matrix-matrix multiplications. More specifically, for encoding, we multiply the generator matrix $\Gbf_{k\times n}$ with the input matrix and the initial estimates, which both have size $N\times k$. Thus, the complexity scales as $\Obf(nkN)$. For decoding, the computation of the decoding matrix $\Lbf=(\Gbf^\top\boldsymbol{\Lambda}^{-1}\Gbf)^{-1}\Gbf\boldsymbol{\Lambda}^{-1}$ is has complexity $\Theta(k^3)$ (matrix inverse) plus $\Theta(k^2n)$ (matrix multiplications). Multiplying the decoding matrix $\Lbf_{k\times n}$ with linear inverse results that have size $N\times n$ has complexity $\Theta(nkN)$. Therefore, for large $N$, the computational complexity is in the order of $\Theta(nkN)$.

The computation of the matrix $\boldsymbol{\Lambda}$, as we have explained in Section~\ref{sec:complexity}, has the same complexity as computing $m\approx 10$ extra linear inverse problems. Additionally, it is a one-time cost in the pre-processing step. Thus, we do not take into account the complexity of computing $\boldsymbol{\Lambda}$ for the analysis of encoding and decoding.

\subsection{Proof of Theorem~\ref{thm:communication_cost}}\label{app:communication_cost_proof}

We assume that the matrix $\B$ and $\Kbf$ have already been stored in each worker before the computation of the linear inverse problems. For the PageRank problem, this means that we store the column-normalized adjacency matrix $\A$ in each worker.

In Algorithm~\ref{alg:pg}, the $i$-th worker requires the central controller to communicate a vector $\rbf_i$ with length $N$ to compute the linear inverse problem. Thus
\begin{equation}
\text{COST}_\text{communication}= N\quad \text{INTEGERS}.
\end{equation}

The computation cost at each worker is equal to the number of operations in one iteration multiplied by the number of iterations in the specified iterative algorithm. In each iteration, the number of operations also roughly equals to the number of non-zeros in $\B$. Thus
\begin{equation}
\text{COST}_\text{computation}\approx 2\cdot |\E|\cdot l_i \text{  OPERATIONS},
\end{equation}
where $l_i$ is the number of iterations completed at the $i$-th worker, $|\E|$ is the number of non-zeros, and $2\cdot$ is because we count both addition and multiplication. From Fig. \ref{fig:various_speed}, the typical number of $l_i$ is about 50.

Thus, the ratio between computation and communication is
\begin{equation}\begin{split}
&\text{COST}_\text{computation}/\text{COST}_\text{communication}\\
&\approx l_i \bar{d}\text{  OPERATIONS/INTEGERS},
\end{split}
\end{equation}
where $\bar{d}$ is the average number of non-zeros in each row of the $\B$ matrix. Since $l_i$ is about 50, we expect that the computation cost is much larger than communication.

\subsection{Proof of Lemma~\ref{lmm_vec1}}\label{app:vec}
Property 1 and property 2 can be directly examined from the definition. Property 3 is Theorem~3 in \cite{zhang2013kronecker}. To prove property 4, we note that the eigenvalues of $\A\otimes\B$ equals to the pairwise products of the eigenvalues of $\A$ and the eigenvalues of $\B$ (from Theorem 6 in \cite{zhang2013kronecker}). Therefore, since the eigenvalues of $\A$ and the eigenvalues of $\B$ are all non-negative, the eigenvalues of $\A\otimes\B$ are also non-negative. Property 5 follows directly from property 4 because when $\B-\A\succeq \0$ and $\C\succeq \0$, $(\B-\A)\otimes \C\succeq \0$.

To prove property 6, we can repeatedly use property 3:
\begin{equation}
\begin{split}
&(\A_{m\times n}\otimes \I_p)\cdot(\I_n\otimes\B_{p\times q})\\
=&(\A_{m\times n}\cdot \I_n)\otimes (\I_p\cdot\B_{p\times q})\\
=&(\I_m\cdot\A_{m\times n})\otimes (\B_{p\times q}\cdot \I_q)\\
=&(\I_m\otimes \B_{p\times q})\cdot (\A_{m\times n}\otimes \I_q).
\end{split}
\end{equation}

To prove property 7, we first assume that
\begin{equation}
\Lbf_{k\times n}=\left[\begin{matrix}
L_{11}&L_{12}&\ldots&L_{1n}\\
L_{21}&L_{22}&\ldots&L_{2n}\\
\vdots&\vdots&\ddots&\vdots\\
L_{n1}&L_{n2}&\ldots&L_{kn}
\end{matrix}\right].
\end{equation}
Then,
\begin{equation}
\begin{split}
&\trace\left[(\Lbf\otimes \I_N)\cdot \A\cdot (\Lbf\otimes \I_N)^\top\right]\\
\overset{(a)}{=}&\sum_{l=1}^k\trace\left[\sum_{i=1}^n\sum_{j=1}^nL_{ki}\A_{ij}L_{kj}\right]\\
=&\sum_{l=1}^k\left[\sum_{i=1}^n\sum_{j=1}^nL_{ki}\trace[\A_{ij}]L_{kj}\right]\\
\overset{(b)}{=}&\trace\left[\Lbf\cdot\left[\begin{matrix}
\trace[\A_{11}]&\ldots &\trace[\A_{1n}]\\
\vdots & \ddots & \vdots\\
\trace[\A_{n1}]&\ldots &\trace[\A_{nn}]
\end{matrix}\right]\cdot\Lbf^\top\right],
\end{split}
\end{equation}
where (a) and (b) hold both because the trace can be computed by examining the trace on the diagonal (or the diagonal blocks).

\bibliographystyle{ieeetr}
\bibliography{rough}

\end{document}